  \providecommand\BibTeX{{%
    \normalfont B\kern-0.5em{\scshape i\kern-0.25em b}\kern-0.8em\TeX}}}
\definecolor{mygreen}{rgb}{0,0.6,0}
\definecolor{mygray}{rgb}{0.5,0.5,0.5}
\definecolor{mymauve}{rgb}{0.58,0,0.82}
\ttfamily\color{mygreen}\bfseries,
\newcommand{\E}{{\tt \mathbb{E}}}
\newcommand{\rred}[1]{\textcolor{red}{#1}}
\newcommand{\ignore}[1]{}
\newcommand{\batya}[1]{{\texttt{\color{blue} Batya: [{#1}]}}}
\newcommand{\dan}[1]{{\texttt{\color{red} Dan: [{#1}]}}}
\newcommand{\revisionrevfive}[1]{{\color{blue}{#1}}}
\newcommand\mvd{\twoheadrightarrow}
\newcommand{\fd}{\rightarrow}
\newcommand{\real}{{\mathbb{R}}}
\newcommand{\sql}[1]{\texttt{#1}}
\newcommand{\select}{\texttt{Select}}
\newcommand{\where}{\texttt{Where}}
\newcommand{\groupby}{\texttt{Group By}}
\newcommand{\having}{\texttt{Having}}
\newcommand{\from}{\texttt{From}}
\newcommand{\as}{\texttt{as}}
\newcommand*{\rom}[1]{\expandafter\@slowromancap\romannumeral #1@}
\newcommand{\RNum}[1]{\uppercase\expandafter{\romannumeral #1\relax}}
\newtheorem{defn}{Definition}[section]
\newtheorem{problem}[defn]{Problem}
\newtheorem{prop}[defn]{Proposition}
\newtheorem{prunRule}[defn]{Prunning Rule}
\newcommand{\proj}[1]{{\Pi}_{#1}}
\newcommand{\sel}[1]{{\sigma}}
\newcommand{\cut}[1]{}
\newcommand{\eat}[1]{}
\newcommand{\defeq}{\stackrel{\text{def}}{=}}
\newcommand{\setof}[2]{\{{#1}\mid{#2}\}}        % Set (as in \setof{x}{x>0}).
\def\set#1{\mathord{\{#1\}}}
\def\MVD{\sigma}
\def\eqdef{\mathrel{\stackrel{\textsf{\tiny def}}{=}}}
\def\J{\mathcal{J}}
\def\D{\mathcal{D}}
\def\e#1{\emph{#1}}
\newenvironment{citedtheorem}[1]
{\begin{theorem}{\it\e{(#1)}}\,\,}
	{\end{theorem}}
\newenvironment{repeatresult}[2]
{\vskip0.5em\par\textsc{#1} #2.\em}
{\vskip1em}
\newenvironment{repproposition}[1]{\begin{repeatresult}{Proposition}{#1}}{\end{repeatresult}}
\newenvironment{reptheorem}[1]{\begin{repeatresult}{Theorem}{#1}}{\end{repeatresult}}
\newenvironment{replemma}[1]{\begin{repeatresult}{Lemma}{#1}}{\end{repeatresult}}
\def\appendix{\par
	\section*{APPENDIX}
	\setcounter{section}{0}
	\setcounter{subsection}{0}
	\def\thesection{\Alph{section}} }
\def\join{\bowtie}
\def\eqdef{\mathrel{\stackrel{\textsf{\tiny def}}{=}}}
\def\J{\mathcal{J}}
\def\e#1{\emph{#1}}
\newcommand{\algname}[1]{{\sf #1}}
\def\myrulewidth{3.25in}
\def\therule{\rule{\myrulewidth}{0.2pt}}
\def\myrulewidthwide{4in}
\def\therulewide{\rule{\myrulewidthwide}{0.2pt}}
\newenvironment{algseries}[2]
{\centering\begin{figure}[#1]\begin{center}\def\thecaption{\caption{#2}}
			\begin{tabular}{p{\myrulewidth}}\therule\end{tabular}%\vskip0.1em
		}
		{\thecaption\end{center}\end{figure}}
\newenvironment{algserieswide}[2]
{\centering\begin{figure}[#1]\begin{center}\def\thecaption{\caption{#2}}
			\begin{tabular}{p{\myrulewidthwide}}\therulewide\end{tabular}\vskip0.2em}
		{\thecaption\end{center}\end{figure}}
\newenvironment{insidealg}[2]
{\normalsize
	%\small
	\begin{insidecode}{#1}{#2}{Algorithm}}
	{\end{insidecode}
}
\newenvironment{insidealgwide}[2]
{\normalsize\begin{insidecodewide}{#1}{#2}{Algorithm}}
	{\end{insidecodewide}}
\newenvironment{insidesub}[2]
{\begin{insidecode}{#1}{#2}{Subroutine}}
	{\end{insidecode}}
\newenvironment{insidecode}[3]
{
	%\small
	\begin{tabular}{p{\myrulewidth}}
		%\toprule
		\multicolumn{1}{c}{\rule{0mm}{3mm}{\bf #3} $\algname{#1}(\mbox{#2})$\vspace{-0.6em}}\\
		\therule\vskip-0.8em\therule
		\vspace{-1em}
		\begin{algorithmic}[1]}
		{\end{algorithmic}
		\vskip-0.4em\therule
\end{tabular}}
\newenvironment{insidecodewide}[3]
{
	%\small
	\begin{tabular}{p{\myrulewidthwide}}
		%\toprule
		\multicolumn{1}{c}{\rule{0mm}{3mm}{\bf #3} $\algname{#1}(\mbox{#2})$\vspace{-0.6em}}\\
		\therulewide\vskip-0.8em\therulewide
		\vspace{-1em}
		\begin{algorithmic}[1]}
		{\end{algorithmic}
		\vskip-0.3em\therulewide
\end{tabular}}
\newcommand{\Q}{{\mathcal{Q}}}
\newcommand{\T}{{\mathcal{T}}}
\newcommand{\Pm}{{\mathcal{P}}}
\newcommand{\comp}[1]{\overline{#1}}
\newcommand{\BD}{{\tt \mathrm{BD}}}
\def\minsep{\mathit{MinSep}}
\newcommand{\JD}{\textsc{JD}}
\newcommand{\AJD}{\textsc{AJD}}
\newcommand{\fullMVDs}{\textsc{FullMVD}}
\newcommand{\schema}{\mathbf{S}}
\newcommand{\relation}{R}
\newcommand{\jointreeMapFunction}{\chi}
\newcommand{\sm}{{\setminus}}
\def\P{\mathsf{P}}
\def\M{\mathcal{M}}
\def\minsep{\mathrm{\textsc{MinSep}}}
\newcommand{\entropyAlg}{\algname{getEntropy_R}}
\newcommand{\mergeFunc}{\algname{merge}}
\newcommand{\push}{\algname{push}}
\newcommand{\pop}{\algname{pop}}
\def\Nbr{\mathrm{Nbr}}
\newcommand{\bS}{\mathbf{S}}
\newcommand{\key}{\mathrm{key}}
\newcommand{\components}{\mathrm{dep}}
\newcommand{\edges}{\texttt{edges}}
\newcommand{\nodes}{\texttt{nodes}}
\newcommand{\parent}{\texttt{parent}}
\newcommand{\schemas}{\texttt{schemes}}
\newcommand{\numRelations}{\texttt{\#relations}}
\newcommand{\width}{\texttt{width}}
\newcommand{\intWidth}{\texttt{intWidth}}
\newcommand{\CntTbl}{\mathrm{CNT}}
\newcommand{\TidTbl}{\mathrm{TID}}
\newcommand{\valuation}{\texttt{val}}
\newcommand{\cnt}{\texttt{cnt}}
\newcommand{\tid}{\texttt{tid}}
\newcommand{\MVDAlg}{\algname{MVDMiner}}
\newcommand{\ASAlg}{\algname{ASMiner}}
\newcommand{\minSepAlg}{\algname{MineMinSeps}}
\newcommand{\system}{\algname{Maimon}}
\def\MVD{\mathrm{MVD}}
\def\bC{\textbf{C}}
\renewcommand\footnotetextcopyrightpermission[1]{}
\newcommand\Mark[1]{\textsuperscript#1}
\begin{document}

%%
%% The "title" command has an optional parameter,
%% allowing the author to define a "short title" to be used in page headers.
\title{Mining Approximate Acyclic Schemes from Relations}

\author{Batya Kenig$^1$ \ \  \ \ \  Pranay Mundra$^2$ \ \  \ \ \     Guna Prasad$^1$  \ \  \ \ \  Babak Salimi$^1$ \ \  \ \ \  Dan Suciu$^1$}
\affiliation{ \begin{tabular}{*{2}{>{\centering}p{.42\textwidth}}}
		\Mark{1} \large Computer Science and Engineering & \Mark{2} \large Department of Mathematics  \tabularnewline
		\large University of Washington  &  \large University of Washington
		\tabularnewline
		\large batyak, guna, bsalimi, suciu@cs.washington.edu&  \large pranay99@uw.edu
	\end{tabular}
}
\eat{
\author{Batya Kenig}
\affiliation{%
	\institution{University of Washington}
}
\email{batyak@cs.washington.edu}

\author{Pranay Mundra}
\affiliation{%
	\institution{University of Washington}
}
\email{pranay99@uw.edu}

\author{Guna Prasad}
\affiliation{%
	\institution{University of Washington}
}
\email{guna@cs.washington.edu}

\author{Babak Salimi}
\affiliation{%
	\institution{University of Washington}
}
\email{bsalimi@cs.washington.edu}

\author{Dan Suciu}
\affiliation{%
	\institution{University of Washington}
}
\email{suciu@cs.washington.edu}
}
%%
%% The abstract is a short summary of the work to be presented in the
%% article.
\begin{abstract}
  Acyclic schemes have numerous applications in databases and in machine
learning, such as improved design, more efficient storage, and
increased performance for queries and machine learning algorithms.
Multivalued dependencies (MVDs) are the building blocks of acyclic
schemes.  The discovery from data of both MVDs and acyclic schemes is
more challenging than other forms of data dependencies, such as
Functional Dependencies, because these dependencies do not hold on
subsets of data, and because they are very sensitive to noise in the
data; for example a single wrong or missing tuple may invalidate the
schema.  In this paper we present \system, a system for discovering
\e{approximate} acyclic schemes and MVDs from data.  We give a
principled definition of approximation, by using notions from
information theory, then describe the two components of
\system: mining for approximate MVDs, then reconstructing acyclic
schemes from approximate MVDs.  We conduct an experimental evaluation
of \system\ on 20 real-world datasets, and show that it can scale up
to 1M rows, and up to 30 columns.

% Data dependencies make up the building blocks of database schema
% design and are used to enforce correctness as well as reduce
% redundancy.  Multivalued dependencies (MVDs) are the most general data
% dependencies and a relation decomposes losslessly if and only if the
% required MVD holds. Most research to date focus on the task of mining
% functional dependencies (FDs) and unique column combinations
% (UCCs). However, FDs and UCCs are a sufficient but not a necessary
% condition for lossless decomposition, making MVDs more useful for
% schema discovery and database normalization.  Data dependency theory
% makes the assumption that the dependencies hold exactly in the
% data. Real world dependencies, in contrast, are usually only
% approximate due to data exceptions, ambiguity, or data errors.  In
% this work we propose an algorithm that generates a set of acyclic
% schemas that fit the data to the largest degree.  The algorithm
% proceeds by mining a set of MVDs that approximately hold in the data,
% and then enumerates all possible combinations of the approximate MVDs
% to acyclic schemas.  We conduct extensive empirical evaluation of the
% proposed algorithm showing that the resulting schemas incur a small
% loss, depending on the accuracy threshold defined, and scale to large
% datasets.

\end{abstract}

%%
%% The code below is generated by the tool at http://dl.acm.org/ccs.cfm.
%% Please copy and paste the code instead of the example below.
%%
\eat{
\begin{CCSXML}
<ccs2012>
 <concept>
  <concept_id>10010520.10010553.10010562</concept_id>
  <concept_desc>Computer systems organization~Embedded systems</concept_desc>
  <concept_significance>500</concept_significance>
 </concept>
 <concept>
  <concept_id>10010520.10010575.10010755</concept_id>
  <concept_desc>Computer systems organization~Redundancy</concept_desc>
  <concept_significance>300</concept_significance>
 </concept>
 <concept>
  <concept_id>10010520.10010553.10010554</concept_id>
  <concept_desc>Computer systems organization~Robotics</concept_desc>
  <concept_significance>100</concept_significance>
 </concept>
 <concept>
  <concept_id>10003033.10003083.10003095</concept_id>
  <concept_desc>Networks~Network reliability</concept_desc>
  <concept_significance>100</concept_significance>
 </concept>
</ccs2012>
\end{CCSXML}

\ccsdesc[500]{Computer systems organization~Embedded systems}
\ccsdesc[300]{Computer systems organization~Redundancy}
\ccsdesc{Computer systems organization~Robotics}
\ccsdesc[100]{Networks~Network reliability}

%%
%% Keywords. The author(s) should pick words that accurately describe
%% the work being presented. Separate the keywords with commas.
\keywords{datasets, neural networks, gaze detection, text tagging}
}
%% A "teaser" image appears between the author and affiliation
%% information and the body of the document, and typically spans the
%% page.

%%
%% This command processes the author and affiliation and title
%% information and builds the first part of the formatted document.

%\input{coverLetter}
\maketitle
\section{Introduction}

Acyclic schemes have numerous applications in databases and in machine
learning.  Originally introduced by
Beeri~\cite{DBLP:conf/stoc/BeeriFMMUY81}, they have lead to Yannakakis
celebrated linear time query evaluation
algorithms~\cite{Yannakakis:1981:AAD:1286831.1286840}, and are used
widely today in database
design~\cite{Fagin:1977:MDN:320557.320571,DBLP:journals/is/LeveneL03},
to speed up query evaluation with multiple
aggregates~\cite{DBLP:conf/pods/KhamisNR16}, and to speed up machine
learning applications such as ridge linear regression, classification
trees, and regression
trees~\cite{DBLP:conf/sigmod/SchleichOC16,DBLP:conf/sigmod/Khamis0NOS18,DBLP:conf/sigmod/SchleichOK0N19}.
When considering which types of schemes to fit the data, acyclic
schemes are the natural choice due to their many desirable
properties~\cite{Beeri:1983:DAD:2402.322389}.  In this paper we study
the following discovery problem: given a database consisting of a
single relation, generate a set of acyclic schemes that fit the data
to a large extent.  For a simple illustration, consider the database
shown on the left of Figure~\ref{fig:acyclic:schema}.  It can be
decomposed into an acyclic schema with four relations, shown on the
right.

\small{
\begin{figure*}[t]
\hspace{-1cm}
\mbox{
\begin{minipage}{0.9\linewidth}
  \begin{tabular}[c]{|c|c|c|c|c|c|l} 
    \multicolumn{6}{l}{$\relation$:} \\ \cline{1-6}
    $A$ & $B$ & $C$ & $D$ & $E$ & $F$ & \\ \cline{1-6}
    $a_1$ & $b_1$ & $c_1$ & $d_1$ & $e_1$ & $f_1$ & \tiny 1/4\\
    $a_2$ & $b_2$ & $c_1$ & $d_1$ & $e_2$ & $f_2$ & \tiny 1/4\\
    $a_2$ & $b_2$ & $c_2$ & $d_2$ & $e_3$ & $f_2$ & \tiny 1/4\\
    $a_1$ & $b_2$ & $c_1$ & $d_2$ & $e_3$ & $f_1$ & \tiny 1/4\\ 
    \rred{$a_1$} & \rred{$b_2$} & \rred{$c_1$} & \rred{$d_2$} & \rred{$e_2$} & \rred{$f_1$}&\\
    \cline{1-6}
  \end{tabular}
$=$
  \begin{tabular}[c]{|c|c|c|l} \cline{1-3}
    $A$ & $B$ & $D$ & \\ \cline{1-3}
    $a_1$ & $b_1$ & $d_1$ & \tiny 1/4\\
    $a_2$ & $b_2$ & $d_1$ & \tiny 1/4\\
    $a_2$ & $b_2$ & $d_2$ & \tiny 1/4\\
    $a_1$ & $b_2$ & $d_2$ & \tiny 1/4\\ \cline{1-3}
  \end{tabular}
$\Join$
  \begin{tabular}[c]{|c|c|c|l} \cline{1-3}
    $A$ & $C$ & $D$ & \\ \cline{1-3}
    $a_1$ & $c_1$ & $d_1$ & \tiny 1/4\\
    $a_2$ & $c_1$ & $d_1$ & \tiny 1/4\\
    $a_2$ & $c_2$ & $d_2$ & \tiny 1/4\\
    $a_1$ & $c_1$ & $d_2$ & \tiny 1/4\\ \cline{1-3}
  \end{tabular}
$\Join$
  \begin{tabular}[c]{|c|c|c|l} \cline{1-3}
    $B$ & $D$ & $E$ & \\ \cline{1-3}
    $b_1$ & $d_1$ & $e_1$ & \tiny 1/4\\
    $b_2$ & $d_1$ & $e_2$ & \tiny 1/4\\
    $b_2$ & $d_2$ & $e_3$ & \tiny 1/2\\ 
    \rred{$b_2$} & \rred{$d_2$} & \rred{$e_2$}\\
    \cline{1-3}
  \end{tabular}
$\Join$
  \begin{tabular}[c]{|c|c|l} \cline{1-2}
    $A$   & $F$ & \\ \cline{1-2}
    $a_1$ & $f_1$ & \tiny 1/2\\
    $a_2$ & $f_2$ & \tiny 1/2\\ \cline{1-2}
  \end{tabular}
\caption{A relation $R$  and it's decomposition into an acylic schema}
 \label{fig:acyclic:schema}
\end{minipage}
}
\hspace{-1cm}
\begin{minipage}{0.18\linewidth}\centering
	\includegraphics[width=0.75\textwidth]{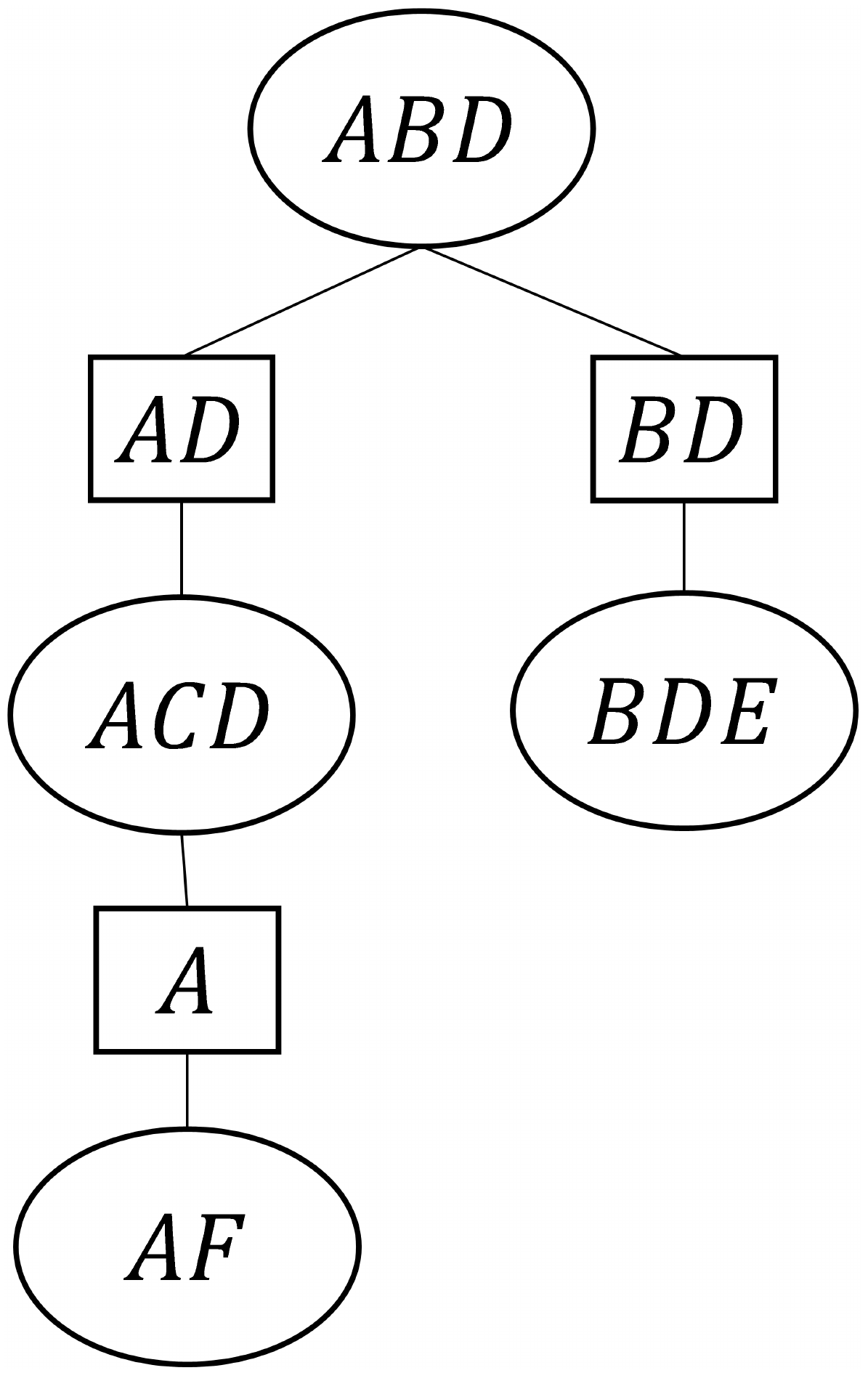}
	\caption{Join Tree}
	\label{fig:TD}
\end{minipage}
\end{figure*}
}%

\normalsize{
The building blocks of an acyclic schema are Multivalued Dependencies,
MVDs.  Every acyclic schema can be fully specified by the set of MVDs
that it implies, which we call its \e{support}.  Therefore, when
mining acyclic schemes, the first step is to mine the MVDs satisfied
by the data.  MVDs were first introduced by
Fagin~\cite{Fagin:1977:MDN:320557.320571}, which used them to
introduce the 4th normal form, a generalization of the Boyce-Codd
normal form (BCNF)~\cite{Codd1971FurtherNO}. They were studied
extensively in the database literature
~\cite{DBLP:conf/sigmod/BeeriFH77,Beeri:1980:MPF:320613.320614,DBLP:journals/jacm/Fagin82,DBLP:journals/tcs/Lakshmanan88},
have been proven to be equivalent to Saturated Conditional
Independence \eat{constraints} in graphical models~\cite{GeigerPearl1993}, and have recently been used as part of a data repairing solution  to enforce fairness of ML systems~\cite{DBLP:conf/sigmod/SalimiRHS19,DBLP:journals/corr/abs-1908-07924}.
The methods used to synthesize an acyclic schema from a
set of
MVDs are well known~\cite{Fagin:1977:MDN:320557.320571,DBLP:journals/tods/Bernstein76,Beeri:1979:CPR:320064.320066,Lien:1981:HSR:319540.319546}.
However, despite their importance, there is little research on the
{\em discovery} of MVDs from
data~\cite{doi:10.2200/S00878ED1V01Y201810DTM052}.

Work most closely related to the discovery of MVDs has been on
discovering Functional Dependencies (FDs) and Unique Column
Combinations
(UCCs)~\cite{DBLP:journals/tcs/KivinenM95,DBLP:journals/cj/HuhtalaKPT99,DBLP:conf/dawak/WyssGR01,DBLP:journals/tkde/LiuLLC12,DBLP:conf/sigmod/PapenbrockN16,DBLP:conf/cikm/BleifussBFRW0PN16,DBLP:journals/pvldb/0001N18}.
These are special cases of MVDs, but MVDs are more general.
Discovering all FDs and all UCCs is insufficient for discovering
acyclic schemes.  The only work that addressed the discovery problem
for MVDs is by Savnik and Flach~\cite{DBLP:journals/ida/SavnikF00} and
a master thesis by Draeger~\cite{draeger2016}, and none of them
address the more challenging task of discovering acyclic schemes.

There are two major challenges that make the discovery of MVDs and
acyclic schemes, much harder than that of FDs and UCCs.  First, they
don't hold on subsets of the data.  If a relation satisfies an FD, or
a UCC, then every subset also satisfies the FD, or UCC, and this is
exploited by many discovery algorithms, e.g
FastFD~\cite{DBLP:conf/dawak/WyssGR01} mines FDs in all subsets of
size 2, while HyFD~\cite{DBLP:conf/sigmod/PapenbrockN16} mines FDs in
a small subset extracted from the data.  This property fails for MVDs,
preventing us from considering subsets of the data.  Second, MVDs and
acyclic schemes are much more sensitive to data errors than FDs and
UCCs.  Even a single missing tuple may invalidate an MVD or schema.
Real-world data often has important dependencies that do not hold
exactly, but, if discovered, are very useful for a variety of
applications.  For that reason, in this paper we study the problem of
discovering {\em approximate} MVDs and consequently, {\em approximate}
acyclic schemes.

We present \system\footnote{\eat{Moses ben Maimon, commonly
  known as Maimonides was a medieval Sephardic Jewish philosopher,
  astronomer and physician, in short a \e{discoverer}.}  \system\ 
  stands for \underline{M}ultivalued \underline{A}pproximate
  \underline{I}nference \underline{M}ining and
  \underline{NO}rmalization.}, the first system for discovering
approximate MVDs and acyclic schemes in the data.  We introduce a
principled notion of approximation, based on information theory, and
develop the necessary theory for reasoning about approximate MVDs and
schemes.  We then describe algorithms for mining MVDs and
schemes, and evaluate their scalability on real-world datasets of up to 1M
rows, and 30 attributes.  By allowing approximations, \system\
finds more interesting schemes without incurring too high a loss (i.e., spurious
tuples).  We make several contributions.

Our first contribution is to introduce a principled definition of
approximation, and study its properties.  Kivinen and
Mannila~\cite{DBLP:journals/tcs/KivinenM95} give three definitions
of approximate functional dependencies, and Kruese and Naumann use one
of them in their approximate FDs and UCCs discovery
algorithm~\cite{DBLP:journals/pvldb/0001N18}.  We propose an
alternative metric of approximation, based on information theory.
Each MVD or acyclic schema is associated with an information theoretic
expression, and its value represents the
degree of approximation.  Our definition builds on early work by
Lee~\cite{DBLP:journals/tse/Lee87a}.

Second, we propose novel algorithms for mining approximate MVDs and
approximate acyclic schemes.  For mining MVDs, our theoretical results
prove that we do not need to discover \e{all} approximate MVDs, but
only the so-called \e{full} MVDs with \e{minimal} separators.  Our
algorithm builds on previous results by
Gunopulos et al.~\cite{DBLP:journals/tods/GunopulosKMSTS03} for discovering the most \e{specific} sentences in the data that meet a certain criterion (e.g., maximal sets of items whose frequency in the data is above a given threshold).
\eat{
starts by finding all minimal separators,
by leveraging a result by
Gunopulos et al.~\cite{DBLP:journals/tods/GunopulosKMSTS03} and a
hypergraph transversal enumeration algorithm by Fredman and
Khachiyan~\cite{FREDMAN1996618}.  For the next step we prove, rather
surprisingly, that, in the approximate case, one separator may have
more than one full MVD: this differs from exact MVDs where each
separator has exactly one ``best'' full MVD (a property used by
Draeger~\cite{draeger2016} in his MVD discovery algorithm), making the
discovery of approximate MVDs more challenging than that of exact
MVDs.  
}
Following the discovery of the MVDs that hold in the data, we turn to the task of enumerating the acyclic schemes that can be synthesized from the set of discovered MVDs.
Our algorithm is based on an approach for efficiently enumerating the maximal
independent sets of a graph~\cite{DBLP:journals/jcss/CohenKS08,
  DBLP:journals/ipl/JohnsonP88}, which has also been applied to the problem of
enumerating tree decompositions~\cite{DBLP:conf/pods/CarmeliKK17}.

Third, we evaluate \system\ on 20 real-world datasets that are part of the Metanome project that provides a repository of benchmarks for a variety of data profiling tasks that include the discovery of data dependencies. The datasets chosen for evaluation have been used in a large body of work on mining exact and approximate FDs~\cite{Papenbrock:2015:DPM:2824032.2824086,DBLP:conf/cikm/BleifussBFRW0PN16,draeger2016,DBLP:journals/pvldb/0001N18,DBLP:journals/tkde/LiuLLC12,DBLP:conf/sigmod/PapenbrockN16}.
\eat{, and were able to}
We show that \system~
scales up to 1M rows, and up to 30 columns.
We empirically show that the loss entailed by the generated acyclic schemes (i.e., number of spurious tuples), monotonically depends on and the information theoretic measure of approximation we develop herein. We also show that a larger degree of approximation enables the discovery of schemes that exhibit a larger degree of decomposition, that leads to significant savings in storage. These schemes generally have more relations, and the \e{width} of the schema (i.e., relation with the largest number of attributes), is smaller.
\eat{
We validate our hypothesis that, by
allowing approximations, we can discover better schemas.  Increasing the
degree of approximation increased the number of spurious tuples
only mildly, yet, at least for some datasets, allowed us to
discover better schemas, e.g. with more relations and/or smaller
separators between relations.
}
}

The most expensive operation of \system\ is the computation of the
entropy $H(X)$ of a set of attributes $X$.  Each such computation
requires a full scan over the data, and this is prohibitively
expensive due to the exponential number of subsets of attributes.  We
describe a novel, efficient approach to computing entropy, which
reduces the problem to a set of main-memory SQL queries. Our method is
inspired by the \e{PLI cache} (Position List Indices)
data structure used for mining both exact and approximate
FDs~\cite{DBLP:journals/pvldb/0001N18,DBLP:journals/cj/HuhtalaKPT99}.
\eat{
 The
advantage of PLIs is that they have a low memory footprint because
they store only tuple indices (rather than actual values), and omit
singleton values. We make two significant improvements to PLI's.
First, we flatten them and hash-encode the data values, allowing the
PLI needed for $H(X\cup Y)$ to be expressed using an SQL query over the
data PLI's for $H(X)$ and for $H(Y)$ respectively.  For the SQL query
we use the $H2$ main memory database~\cite{h2-database}.  Second, we
describe a simple technique by which we precompute a number of PLI's
that fit in main memory, yet are sufficient to compute all PLI's.
}
% 
% 
% We implemented the advantages of the PLI cache in an algorithm over an in-memory database instance, allowing us to build on algorithms and optimizations that have been engineered to enhance query performance.
% We partition the attributes of the relation to sets of attributes $X$ of cardinality $L$ (we used $L=10$).
% For each subset $A$ of $X$ we keep in main memory a table that maps the unique values of $A$ to the tuples in they occur. Computing the entropy of an arbitrary subset of variables amounts to joining up to $\frac{n}{L}+1$ tables, followed by a group-by count query.\eat{ 
% For each set of attributes $A \subseteq X$, we keep in main memory a table consisting of all
% pairs $(\texttt{tid},a)$, where $\texttt{tid}$ is a tuple ID in the
% data with values $\texttt{tid}[A]=a$; to compute the entropy of a
% larger set of attributes $H(X\cup Y)$, we simply join the two views
% associated with $H(X)$ and $H(Y)$, followed by a group-by count query.}
% Importantly, all singleton values can be dropped from the view, since
% their contribution to the total entropy is $0$.  As a
% consequence, the sizes of these tables decreases with the set of
% attributes $X$, as more combinations of values are unique in the
% data. 

To sum up, the contributions of this work are as follows:
\begin{enumerate}
\item We define a principled notion of approximate data dependencies
  based on information theory, and study its properties;
  Sec.~\ref{sec:problem:statement} and~\ref{sec:techniques}.
% 
%  show that this notion allows us to
%   bound the error of an acyclic schema based on the error of the
%   approximate MVDs it entails.
\item We describe a novel MVD enumeration algorithms and acylic schema
  enumeration algorithm; Sec.~\ref{sec:MiningMVDs}
  and~\ref{sec:EnumerateAcyclicSchemas}.
\item We conduct an extensive experimental evaluation on 20 real
  datasets; Sec.~\ref{sec:evaluation}.
% 
% We introduce an algorithm $\MVDAlg$ that discovers a set of approximate MVDs that hold in a relation. We prove that every approximate MVD can be derived from this set via a series of information theoretic identities.
% 	\item We devise an enumeration algorithm that receives a set of approximate MVDs and outputs the set of acyclic schemas that approximately hold in the relation. This algorithm follows by reduction to the problem of enumerating maximal independent sets in graphs.
% 	\item We introduce an algorithm for efficiently computing the entropy and cardinality of growing sets of attribute values using an in-memory database.
% 	\item We evaluated each one of our algorithms over state-of-the art benchmarks for mining data dependencies, and show that our algorithm discovers high-quality acyclic decompositions, and scales well to large datasets. 
% 
\end{enumerate}

\section{Running  Example}\label{sec:illustrativeExample}

We will use the following running example in this paper.  Consider the
relation $\relation$ over the signature $\Omega=\set{A,B,C,D,E,F}$ in
Figure~\ref{fig:acyclic:schema}.  Ignore the probabilities, we will
use them in Sec.~\ref{sec:notations}.  Also, ignore for now the last
row (in red).  The table with four rows can be decomposed into four
tables, shown in the figure. More precisely, the following join
dependency holds:
$\relation=\relation[ABD]\join \relation[ACD]\join \relation[BDE]\join
\relation[AF]$.  The schema of these four tables is \e{acyclic},
because it admits a join tree, shown in Fig.~\ref{fig:TD} (reviewed in
Sec.~\ref{sec:notations}).  Our goal is to discover this acyclic
schema from the data $\relation$.  For that, we note that the acyclic
schema can be entirely described by three Multivalued Dependencies:
$BD \mvd E| ACF$, $AD \mvd CF| BE$, and $A \mvd F|BCDE$.  Each
corresponds to one edge of the join tree: the left hand size of the
MVD (that we call the \e{key}) is the label of that edge, while the
two sets of attributes correspond to the subtrees connected by the
edge.  For example, the edge $ACD\stackrel{\mbox{\tiny AD}}{-}ABD$ in the
join tree defines the MVD $AD \mvd CF | BE$.  The key $AD$
``separates'' the attributes $CF$ in one subtree from $BE$ in the other
subtree, and we will also call such a set a \e{separator}.
% 
% 
% and its acyclic join decomposition are
% presented in Figure~\ref{fig:acyclic:schema}.  We first consider the
% relation and decomposition without the red lines (in $\relation$ and
% table $BDE$).  Notice that the acyclic join dependency holds because
% .
% The join tree of this decomposition (defined in
% Section~\ref{sec:notations}) is presented in Figure~\ref{ex:TD}.
% 
% Every join dependency is comprised of a set of MVDs.
% The decomposition of Figure~\ref{fig:acyclic:schema} is comprised of the following MVDs that hold in $\relation$: $BD \mvd E| ACF$, $AD \mvd CF| BE$, and $A \mvd F|BCDE$.
% 
Since MVDs are the building blocks of acyclic schemas, their discovery
is a prerequisite for discovering acyclic schemas, and our first task
is to discover MVDs from data, then use them to discover acyclic
schemas.

Consider the 5'th row in $\relation$, shown in red.  By adding it, we
need to add a 4'th row to $\relation[BDE]$, also shown in red.
However, now the join dependency no longer holds exactly, because
$\relation[ABD]\join \relation[ACD]\join \relation[BDE]\join
\relation[AF]$
contains a spurious tuple, namely $(a_2, b_2,c_2, d_2,e_2,f_2)$, which
is not in $\relation$ (it is not shown in the Figure); the first two
MVDs no longer hold, only $A \mvd F|BCDE$ still holds, and the acyclic
schema is no longer a correct decomposition of $R$.  Yet the schema
can still be useful for many applications, even it if leads to a
spurious tuple.  Insisting on exact acyclic schemas would severely
restrict their applications, and also make them very brittle since the
addition of one single tuple would invalidate the schema.  In this
paper we compute \e{approximate} acyclic schemas, and \e{approximate}
MVDs.  By allowing approximations, the schema shown in the figure is
still considered valid for the data, despite the spurious tuple.
%even though the join returns a spurious tuple.  
% Many applications
% ranging from data analytics to approximate query processing to machine
% learning can benefit from the advantages of having the data decomposed
% into an acyclic schema even if it is approximate.  
\eat{
In this paper, we develop a measure of approximation based on
information theory, then describe a discovery algorithm that finds
approximate MVDs and approximate acyclic schemas.
}
\eat{

\dan{This Section is too long.  It should only be a teaser of the problem statement.  There should be no definitions ("conflict-free") and should show only 3 MVDs.  Something like this.  \newline
(1) here is a relation in Fig 1.    1 line. \newline
(2) here is how we can decompose it into an acyclic schema.  3-4 lines. \newline
(3) here is the join tree of the acyclic schema; we will define it in Sec. 3 (draw it and add it to Fig. 1) 2-3 lines. \newline
(4) how do we do discover acyclic schemas in general?  Well, notice that the relation also satisfies these 3 MVD's (list them, and say that we read them off the tree), hence we proceed by finding MVDs first.  15-20 lines. \newline
(5) finally, there's a problem: one missing tuple causes the acyclic schema to be invalid.  (It's OK if Fig. 1 is an exact decomposition; you can just explain in English); hence the need to reason about approximate data dependencies. 10-15 lines. \newline
This example should not be longer than 1 column, no definitions, and the only formulas should be the 3 MVDs.
}

In this section we briefly describe the main ideas of the algorithm,
by illustrating on the data in Figure~\ref{fig:acyclic:schema},
showing the relation $\relation$ over the signature
$\Omega=\set{A,B,C,D,E,F}$. 
We first consider the relation and decomposition without the red lines (in $\relation$ and table $BDE$). 
The decomposition is comprised of the following MVDs that hold in $\relation$: $BD \mvd E| ACF$, $AD \mvd CF| BE$, and $A \mvd F|BCDE$.
We call the left-hand-side of an MVD its \e{key} (e.g., the key of $BD \mvd E| ACF$ is $BD$).
We first notice that for every key in the set of MVDs that make up the decomposition, there exist at least two relations in the decomposition that contain it. For example, the key $AD$ belongs to both $ABD$  and $ACD$, and the key $BD$ belongs to both $ABD$ and $BDE$. It is known that a set of MVDs can be combined into an acyclic schema if and only if it has this property called \e{conflict-free}, or \e{compatible}~\cite{Beeri:1983:DAD:2402.322389}.

The MVD $BD \mvd E| ACF$ places attributes $E$ and $C$ in distinct relations. Informally, we say that the set $BD$ \e{separates} $E$ and $C$, or that $BD$ is an $EC$ separator.
It is not hard to verify that no subset of $BD$ separates $E$ and $C$. In particular, the following MVDs do \e{not} hold in $\relation$: $B \mvd DE| ACF$, $B \mvd E| ACFD$, $D \mvd BE| ACF$, $D \mvd E| ABCF$, $BD \mvd AEF| C$. Therefore, we say that $BD$ is a \e{minimal} $EC$ separator.
In this paper we show that minimal separators can be used to derive the complete set of MVDs that hold in a relation.

\dan{we need to align this section to the
  rest of the paper, by illustrating the steps of the algorithm}
}
\eat{
\subsection{\batya{Maybe some of this will be useful for the intro}}
Suppose we are given that the MVD $AD \mvd C | F | BE$ holds in $\relation$. That is, $\relation = \relation[ADC]\join \relation[ADF] \join \relation[ADBE]$.
We note that if the MVD $AD \mvd C | F | BE$ holds in $\relation$, then so do the MVDs $AD \mvd C | BEF$, $AD \mvd CF | BE$, and $AD \mvd CBE | F$. In this case we say that $AD \mvd C | F | BE$ is a \e{full MVD}. Intuitively, full MVDs are more interesting because other MVDs with the same \e{key} (e.g., $AD$) can be derived from the full MVD.

We say that the MVD $AD \mvd C | F | BE$ \e{separates} attributes $B$ and $C$ because it defines a decomposition in which $B$ and $C$ appear in distinct relations.
It is not hard to show that if the key $AD$ separates $B$ and $C$ then any supserset of this key separates them as well.
That is, if $AD \mvd C | F | BE$ holds in $\relation$ then so do the MVDs $ADC \mvd F|BE$, $ADF \mvd C|BE$, $ADB \mvd F|C|E$, and $ADE \mvd F|C|B$. Intuitively, we are looking for MVDs with ``minimal'' keys because the rest of the MVDs can be derived from them. In this example we would say that $AD$ is a \e{minimal $BC$-separator} if $AD \mvd C | F | BE$ holds in $\relation$, but for any subset of $AD$, attributes $C$ and $B$ appear in the same relation. That is, the only MVDs in $\relation$ with keys $A$, and $D$ respectively are  $A\mvd DF|CBE$, and $D \mvd CBF|AE$ neither of which \e{separate} $B$ and $C$. 
To summarize, in this work we are interested in discovering \e{full MVDs} with \e{minimal keys}.
We show in Theorem~\ref{thm:FullMVDSCompleteness} that this set of MVDs is \e{complete}. That is, any other MVD that holds in $\relation$ can be derived from this set of MVDs.

Continuing with our example, suppose we are given that the following full MVDs with minimal keys hold in $\relation$:
\begin{enumerate}
	\item $AD \mvd C|BE|F$
	\item $A \mvd F|BCDE$
	\item $BD \mvd E|ACF$
	\item $BC \mvd AF | DE$
\end{enumerate}

We note that despite the fact that $A \subset AD$, both $AD$ and $A$ are minimal separators because $AD$ is a minimal $BC$ separator, and $A$ is a minimal $BF$ separator.

We now consider the task of combining the MVDs in $\relation$ to an acyclic schema. In this example, we see that we can combine MVDs $AD \mvd C|BE|F$, $A \mvd F|BCDE$ and $BD \mvd E|ACF$ to generate the schema $\schema=\set{ADC, AF, ADB, BDE}$. On the other hand, MVDs $AD \mvd C|BE|F$ and  $BC \mvd AF | DE$ cannot be combined. Intuitively, the reason is that the MVD $AD \mvd C|BE|F$ determines a decomposition in which $AD$ appear in a common relation, while the MVD $BC \mvd AF | DE$ determines they be in distinct relations. MVDs that can be combined in an acyclic schema are called \e{pairwise compatible} (see formal definition~\ref{def:CompatibleSJDs}).
Previous work by Beeri et al. characterizes the set of MVDs that can be combined to form an acyclic schema. Specifically, every acyclic schem is comprised of a set of \e{pairwise compatible} MVDs.

Since compatibility is a pairwise relation, we can model the MVDs and their compatibility relations as an undirected graph $G(\M,E)$ whose nodes  $\M$ are the MVDs, and there is an edge between every pair of MVDs that are \e{not} pairwise compatible. In our case, the graph would appear as in Figure.
Furthermore, by Beeri et al., every maximal independent set in $G$ corresponds to an acyclic schema. In our case, $G$ has two distinct maximal independent sets corresponding to the two acyclic schemas presented in Figure and .

Enumerating maximal independent sets in graphs is a solved problem. The first algorithm was presented by, and later generalized to enumerate maximal subgraphs with a hereditary property by.
We implement the latter algorithm. 

}

\section{Background}
\label{sec:notations}
Table~\ref{table:notations} summarizes the notations in this paper.
We denote by $[n] = \set{1,\ldots,n}$.  Let $\Omega$ be a set of
variables, also called attributes.  If $X,Y \subseteq \Omega$, then
$XY$ denotes $X \cup Y$.

\small{
\begin{table}	
  \centering
  \begin{tabular}{|l|l|} \hline
   $\Omega$ & set of variables (attributes) \\ \hline
   $n=|\Omega|$ & number of variables (attributes) \\ \hline
   $X,Y,A,B,\ldots$ & sets of variables $\subseteq \Omega$ \\ \hline
   $\schema$ & a schema $=\set{\Omega_1, \ldots, \Omega_m}$ \\ \hline
   $X \mvd Y|Z$ & a standard MVD \\ \hline
   $X \mvd Y_1|Y_2|\cdots |Y_m$ & an MVD~\cite{DBLP:conf/sigmod/BeeriFH77} \\ \hline
   $(\T,\chi)$ & a join tree \\ \hline
   $H(X)$ & entropy of a set of variables $X$ \\ \hline
   $H(Y|X), I(Y;Z|X)$ & entropic measures \\ \hline
   $\J(\T,\chi)$ & the entropic measure in Eq.(\ref{eq:JTScore}) \\ \hline
   $\J(\schema)$ & $\J$ of any join tree for $\schema$\\ \hline
   $\J(X \mvd Y_1|\cdots|Y_m)$ & $\J$ of the schema $\set{XY_1,\ldots,XY_m}$\\ \hline
   $\J(X \mvd Y|Z)$ & $=I(Y;Z|X)$ \\ \hline
   $R$ & a relation \\ \hline
   $N=|R|$ & number of tuples \\ \hline
   $R \models \AJD(\schema)$ & $R$ satisfies an acyclic \\
    &  join dependency \\ \hline
   $R \models_\varepsilon \AJD(\schema)$ & $R$ $\varepsilon$-satisfies an acyclic \\
    &  join dependency \\ \hline    
  \end{tabular}  
  \caption{Notations \label{table:notations}}  
\end{table}

}%

\normalsize{
\subsection{Data Dependencies}

\label{subsec:ci:ic}
Fix a relation instance $\relation$ of size $N=|\relation|$, and
schema $\Omega$.  For $Y \subseteq \Omega$ we let $\relation[Y]$
denote the projection of $\relation$ onto the attributes $Y$.

Let $X, Y, Z \subseteq \Omega$.  
% We say that the instance $\relation$
% satisfies the \e{functional dependency} (FD) $X \fd Y$, and write
% $\relation \models X \fd Y$, if forall $t_1, t_2 \in \relation$,
% $t_1[X]=t_2[X]$ implies $t_1[Y]=t_2[Y]$.  
A \e{schema} is a set $\schema=\set{\Omega_1,\dots, \Omega_k}$ such
that $\bigcup_{i=1}^k\Omega_i=\Omega$ and  $\Omega_i\not\subseteq
\Omega_j$ for $i\neq j$. We say that the relation instance $\relation$
satisfies the \e{join dependency} $\JD(\schema)$, and write
$\relation \models \JD(\schema)$, if
$\relation = \Join_{i=1}^k\relation[\Omega_i]$.  We say that
$\relation$ satisfies the \e{multivalued dependency} (MVD)
$\phi = X \mvd Y_1|Y_2|\dots|Y_m$ where $m\geq 2$, the $Y_i$s are
pairwise disjoint, and $XY_1\cdots Y_m = \Omega$, if
$\relation=\relation[XY_1]\Join\cdots \Join \relation[XY_m]$.  We call
$X$ the \e{key} of the MVD and $\set{Y_1,\dots,Y_m}$ it's
\e{dependents}, denoted $\key(\phi)=X$ and
$\components(\phi)=\set{Y_1,\dots,Y_m}$.  Most of the literature
considers only MVDs with $m=2$, which we call here {\em standard
  MVDs}.  Beeri et al.~\cite{DBLP:conf/sigmod/BeeriFH77} noted that a
generalized MVD can encode concisely multiple MVDs; for example
$X \mvd A|B|C$ holds iff $X \mvd AB|C$, $X \mvd A|BC$ and
$X \mvd AC|B$ hold.
% \subsubsection{Acyclic Join Dependencies and Join Trees}\label{sec:prelimsAcyclicJoinDeps}
We review a \e{join tree} from~\cite{Beeri:1983:DAD:2402.322389}:
\begin{definition}\label{def:joinTree}
  A \e{join tree} is a pair $\left(\T,\jointreeMapFunction\right)$
  where $\T$ is an undirected tree,
  \eat{\footnote{We will assume  w.l.o.g. that $\T$ is connected.  For example, the disconnected schema $\schema = \set{AB,CDE}$ can be represented by adding artificially the edge $AB-CDE$, thus connecting the two disconnected components.}}
  and $\jointreeMapFunction$ is a
  function that maps\eat{associates to} each $u \in \nodes(\T)$ to a
  set of variables $\jointreeMapFunction(u)$, called a \e{bag}, such
  that the following \e{running intersection} property holds: for
  every variable $X$, the set
  $\setof{u \in \nodes(\T)}{X \in \chi(u)}$ is a connected component
  of $\T$.  We denote by $\chi(\T) \defeq \bigcup_u \chi(u)$, the set
  of variables of the join tree.
\end{definition}
We often denote the join tree as $\T$, dropping $\jointreeMapFunction$
when it is clear from the context. The \e{schema} defined by $\T$ is
$\schema=\set{\Omega_1,\dots,\Omega_m}$, where
$\Omega_1, \ldots, \Omega_m$ are the bags of $\T$.  We call a schema
$\schema$ \e{acyclic} if there exists a join tree whose schema is
$\schema$.  Since we required $\Omega_i\not\subseteq \Omega_j$ for
$i\neq j$, one can prove that any acyclic schema with $n$ attributes
and $m$ relations satisfies $m \leq n$.  We say that a relation
$\relation$ satisfies the \e{acyclic join dependency} $\schema$, and
denote $\relation \models \AJD(\schema)$, if $\schema$ is acyclic and
$\relation \models \JD(\schema)$.  An MVD $X \mvd Y_1 | \cdots | Y_m$
represents a simple acyclic schema, namely
$\schema = \set{XY_1, XY_2, \ldots, XY_m}$.
% Any tree with $m$
% nodes $\nodes(\T)=\set{u_1,\ldots, u_m}$ whose bags are
% $\chi(u_i) = XY_i$ is a join tree for this schema.

Let $\schema=\set{\Omega_1,\dots,\Omega_m}$ be an acyclic schema with
join tree $(\T,\jointreeMapFunction)$.  We associate to every
$(u,v) \in \edges(\T)$ an MVD $\phi_{u,v}$ as follows.  Let $\T_u$ and
$\T_v$ be the two subtrees obtained by removing the edge $(u,v)$.
Then, we denote by
$\phi_{u,v} \defeq \jointreeMapFunction(u) \cap
\jointreeMapFunction(v) \mvd \jointreeMapFunction(\T_u) |
\jointreeMapFunction(\T_v)$.
We call the \e{support of $\T$} the set of $m-1$ MVDs associated to
its edges, in notation
$\MVD(\T) = \setof{\phi_{u,v}}{(u,v) \in \edges(\T)}$.  If $\T$
defines the acyclic schema $\schema$, then it satisfies
$\relation \models \AJD(\schema)$ iff it satisfies all MVDs in its
support: $\relation \models \phi_{u,v}$ for all
$\phi_{u,v} \in MVD(\T)$~\cite[Thm.  8.8]{Beeri:1983:DAD:2402.322389}.

\begin{example}
  \label{ex:j0} We will illustrate with the running example from
  Sec.~\ref{sec:illustrativeExample}.  The tree in Fig.~\ref{fig:TD}
  is a join tree. Its bags are the ovals labeled $AF$, $ACD$, $ABD$,
  and $BDE$, and it is custom to show the intersection of two bags on
  the connecting edge.
  $\MVD(\T) = \set{BD \mvd E| ACF, AD \mvd CF| BE, A \mvd F|BCDE}$.
\end{example}

% then  $\relation \models (\Omega_i{\cap} \Omega_j)\mvd \jointreeMapFunction(\T_i)|\jointreeMapFunction(\T_j)$~(Thm. 8.8 ~\cite{Beeri:1983:DAD:2402.322389}), and we say that this MVD is implied by the join tree $(\T,\jointreeMapFunction)$. We call the MVDs implied by a join tree $(\T,\jointreeMapFunction)$ the \e{MVD support} of $(\T,\jointreeMapFunction)$, and denote this set of MVDs by $\MVD(\T)$. Since every MVD in $\MVD(\T)$ is associated with an edge of $\T$, then  $|\MVD(\T)|\leq |\nodes(\T)|-1$. Every join tree over $n$ attributes can have at most $n$ nodes~\cite{Beeri:1983:DAD:2402.322389}, hence $|\MVD(\T)|\leq |\nodes(\T)|-1\leq n-1$.

}

\subsection{Information Theory}
\label{sec:it}
Lee~\cite{DBLP:journals/tse/Lee87,DBLP:journals/tse/Lee87a} gave an
equivalent formulation of data dependencies in terms of information
measures; we review this briefly here, after a short background on
information theory.
% showed how entropic functions can be used to quantitatively represent
% the amount of information contained in subsets of attributes, and thus
% characterize the data dependencies that hold in the relation.  We
% start with a brief background on information theory, and then present
% the characterization of data dependencies using information
% theory~\cite{DBLP:journals/tse/Lee87,DBLP:journals/tse/Lee87a}.

% \subsubsection{Background on information theory}
% \label{subsec:information:theory}

Let $X$ be a random variable with a finite domain $\D$ and probability
mass $p$ (thus, $\sum_{x \in \D} p(x)=1$). Its entropy is:
\begin{equation}\label{eq:entropy}
H(X)\eqdef\sum_{x\in \D}p(x)\log\frac{1}{p(x)}
\end{equation}
If $N = |\D|$ then $H(X) \leq \log N$, and equality holds iff $p$ is
uniform.  For a set of jointly distributed random variables
$\Omega=\set{X_1,\dots,X_n}$ we define the function
$H : 2^\Omega \rightarrow \real$ as the entropy of the joint random
variables in the set. For example,
$H(X_1X_2)=\sum_{x_1 \in \D_1, x_2\in \D_2}
p(x_1,x_2)\log\frac{1}{p(x_1,x_2)}.$ Let $A,B,C \subseteq
\Omega$. The \e{mutual information} $I(B;C|A)$ is defined as:
\begin{equation} \label{eq:h:mutual:information}
I(B;C|A) \eqdef~H(AB) + H(AC) - H(ABC) - H(A)
\end{equation}
It is known that the conditional independence
$p \models B \perp C \mid A$ (i.e., $B$ is independent of $C$ given
$A$) holds iff $I(B;C|A)=0$.
% , and similarly
% $p \models A \fd B$ iff $H(B|A)=0$, thus, entropy provides a
% characterization of conditional independence.

In this paper we use only the following two properties of the mutual
information:
\begin{align}
  I(B;C|A) \geq & 0 \label{eq:shannon} \\
  I(B;CD|A) = & I(B;C|A) + I(B;D|AC) \label{eq:ChainRuleMI}
\end{align}
The first inequality follows from monotonicity and submodularity (it
is in fact equivalent to them); the second equality is called the
\e{chain rule}.  All consequences of these two (in)equalities are
called \e{Shannon inequalities}; for example, monotonicity
$H(AB) \geq H(A)$ is a Shannon inequality because it follows from
\eqref{eq:shannon} by setting $B=C$.
% 
% 
% For every entropic function $H$ the mutual information is
% positive. That is, $I(B;C|A) \geq 0$ for any three sets $A,B$, and
% $C$.  Let $A,B,C$ and $D$ be disjoint sets of variables in
% $\Omega$. The \e{chain rule} is the identity:
% \begin{equation} \label{eq:ChainRuleMI}
% I_H(B;CD|A)=I_H(B;C|A)+I_H(B;D|AC)
% \end{equation}

Let $R$ be relation with attributes $\Omega=\set{X_1,\dots,X_n}$ and
$N$ tuples. The \e{empirical distribution} is the uniform distribution
over its tuples: $\forall t{\in} R$, $p(t) {=} 1/N$.  It's entropy
satisfies $H(\Omega)=\log N$.  For $\alpha \subseteq [n]$, we denote
by $X_\alpha$ the set of variables $X_i, i \in \alpha$, and denote by
$\relation(X_\alpha{=}x_\alpha)$ the subset of tuples $t \in R$ where
$t[X_\alpha]{=}x_\alpha$, for fixed values $x_\alpha$.  By uniformity,
the marginal probability is
$p(X_\alpha{=}x_\alpha){=}\frac{|\relation(X_\alpha{=}x_\alpha)|}{N}$,
and therefore:
\begin{equation}\label{eq:jointEntropy}
H(X_\alpha)\eqdef \log N-\frac{1}{N}\sum_{x_\alpha {\in} \D_\alpha}|\relation(X_\alpha{=}x_\alpha)|\log |\relation(X_\alpha{=}x_\alpha)|
\end{equation}
The sum above can be computed using a simple SQL query: %, a property that we will exploit later in the paper:
$\select\ X_\alpha, \sql{count(*)}\times\log(\sql{count(*)})\ \from\ R\ \groupby\ X_\alpha$.
%
% \begin{align*}
%   & \select\ X_\alpha, \sql{count(*)}\times\log(\sql{count(*)})\ \from\ R\ \groupby\ X_\alpha
% \end{align*}

% \begin{align*}
%   & \texttt{SELECT } X_\alpha, \texttt{count(*)}*\log(\texttt{count(*)})   \texttt{ FROM } R \texttt{ GROUP BY } X_\alpha
% \end{align*}

Lee~\cite{DBLP:journals/tse/Lee87,DBLP:journals/tse/Lee87a} formalized
the following connection between database constraints, and entropic
measures.  Let $(\T,\chi)$ be a join tree.  We define the following
expression:
%
%\small{
\begin{equation}\label{eq:JTScore}
  \J(\T,\chi){\eqdef}\sum_{\substack{v\in\\ \nodes(\T)}}H(\jointreeMapFunction(v))-\sum_{\substack{(v_1,v_2)\in\\ \edges(\T)}}H(\jointreeMapFunction(v_1) {\cap} \jointreeMapFunction(v_2))-H(\chi(\T))
\end{equation}
%}%
\normalsize{ We abbreviate it with $\J(\T)$, or $\J$, when $\T, \chi$
  are clear from the context; we will prove later
  (Th.~\ref{thm:approximateAcyclic}) that $\J \geq 0$ is a Shannon
  inequality.  Lee proved that $\J$ depends only on the schema
  $\schema$ defined by the join tree, and not on the tree itself.  To
  see this on a simple example, consider the MVD $X \mvd U|V|W$ and
  its associated acyclic schema $\set{XU, XV, XW}$.  If we consider
  the join tree $XU-XV-XW$, then
  $\J = H(XU)+H(XV)+H(XW)-2H(X)-H(XUVW)$.  Another join tree is
  $XU-XW-XV$, and $\J$ is the same.  Therefore, if $\schema$ is
  acyclic, then we write $\J(\schema)$ to denote $\J(\T)$ for any join
  tree of $\schema$.  We denote by
  $\J(X \mvd Y_1|\cdots |Y_m) \defeq
  H(XY_1)+\cdots+H(XY_m)-(m-1)H(X)-H(XY_1\cdots Y_m)$
  for any sets of variables $X, Y_1, \ldots, Y_m$ where
  $Y_1, \ldots, Y_m$ are pairwise disjoint, even when $XY_1\cdots Y_m$
  is not necessarily $\Omega$.  When $m=2$, then
  $J(X \mvd Y|Z) = I(Y;Z|X)$.  Lee proved the following: }

\def\AcyclicCharacterizationLee{Let $H$ be the entropy of the
  empirical distribution on $\relation$, and let $\schema$ be any
  acyclic schema.  Then $R \models \AJD(\schema)$ iff $\J(\schema)=0$.
}
\begin{citedtheorem}{\cite{DBLP:journals/tse/Lee87a}}\label{thm:AcyclicCharacterizationLee}
\AcyclicCharacterizationLee
\end{citedtheorem}

In the particular case of a standard MVD, Lee's result implies that $\relation \models X \mvd Y|Z$ if and only if $I(Y;Z|X)=0$.
\eat{
  \begin{align}
%    R \models & X \fd Y & \Leftrightarrow &&& H(Y|X)=0 \label{eq:lee:fd}	\\
    R \models & X \mvd Y|Z & \Leftrightarrow &&& I(Y;Z|X)=0 \label{eq:lee:mvd}
  \end{align}
}
  \begin{example} \label{ex:j1} Continuing Example~\ref{ex:j0}, the
    empirical distribution of the relation $R$ in
    Fig~\ref{fig:acyclic:schema} (without the red tuple) assigns
    probability $1/4$ to each tuple. Thus, $H(ABCDEF)=\log 4=2$.  The
    marginal probabilities need not be uniform, e.g. the marginals for
    $BDE$ are $1/4, 1/4, 1/2$, and thus
    $H(BDE) = 1/4\log 4 + 1/4\log 4 + 1/2\log 2 = 3/2$.  The value of
    $\J$ is:
    $\J(\T) = H(AF)+H(ACD)+H(ABD)+H(BDE)-H(A)-H(AD)-H(BD)-H(ABCDEF)$.
    For the empirical distribution in the figure, this quantity is 0.
  \end{example}

\eat{
\subsection{Enumeration}

An \e{enumeration problem} $\P$ is a collection of pairs $(x,\P(x))$
where $x$ is an \e{input} and $\P(x) = \set{y_1, y_2, \ldots}$ is a
finite set of \e{answers} for $x$.  An \e{enumeration algorithm} for
an enumeration problem $\P$ is an algorithm that, when given an input
$x$, produces (or \e{prints}) the answers $y_1, y_2, \ldots$ such that
every answer is printed precisely once. The yardstick used to measure
the complexity of an enumeration algorithm $A$ is the \e{delay}
between consecutive answers~\cite{DBLP:journals/ipl/JohnsonP88}.  We
say that $A$ runs in:
\begin{itemize}
	\item \e{polynomial delay} if the time between printing $y_K$
          and $y_{K+1}$ is polynomial in $|x|$;
      \item \e{incremental polynomial time} if the time between printing $y_K$
          and $y_{K+1}$ is polynomial in $|x|+K$.
\end{itemize}
% Observe that a solver that enumerates with polynomial delay also
% enumerates with incremental polynomial time.

In this paper we will use of the following result.

\begin{citedtheorem}{\cite{DBLP:journals/ipl/JohnsonP88,DBLP:journals/jcss/CohenKS08}}\label{thm:enumeration}
Let $G(V,E)$ be a graph. The maximal independent sets of $G$ can be enumerated in polynomial delay where the delay is $O(|V|^3)$.
\end{citedtheorem}
}

\section{Problem Statement}

\label{sec:problem:statement}

Our main goal is to discover an acyclic schema for a given relation
instance $R$.  Since exact schemas are very sensitive to data errors,
\system\ discovers approximate schemas.

\begin{definition}[Approximate Acyclic Schema]\label{def:ApproxAcyclicJoin}
  Fix a relation instance $\relation$, and $\varepsilon {\geq} 0$.  We
  say that an acyclic schema $\schema$ is an \e{$\varepsilon$-schema} for $\relation$, or simply
  \e{approximate schema}, if
  $\J(\schema) \leq \varepsilon$.  In notation,
  $\relation \models_{\varepsilon} \AJD(\schema)$.
\end{definition}

\system\ takes as input $\varepsilon \geq 0$ and discovers approximate
acyclic schemas for $\relation$.  By Lee's theorem, if we set
$\varepsilon=0$, then \system\ returns exact schemas.  In practice, a
relation $R$ may not have any exact schemas, or may have very limited
schemas; by allowing $\varepsilon \geq 0$ we may find approximate
schemas that are quite useful for many applications.  
\eat{
In general, we
are not interested in listing {\em all} approximate schemas, because
there are too many, and the user only needs one, or a few to choose
from.  Instead, we ask for an enumeration algorithm:
}
\begin{problem}[Schema Enumeration Problem] \label{problem:ajd} Given
  a relational instance $R$, enumerate the approximate  acyclic schemas of $R$.
\end{problem}
In practice, we are not interested in enumerating \e{all} approximate acyclic schemas of $\relation$. This would take a 
prohibitively long time, and some acyclic schemas are superior to others. For example, consider a relation over four attributes that satisfies the acyclic join dependency $\schema=\set{XA,XB,XC}$. The following acyclic join dependencies also hold in $\relation$: $\set{XAB,XC}$, $\set{XAC,XB}$, and $\set{XA,XBC}$. The latter schemas are less useful than $\schema=\set{XA,XB,XC}$ that leads to a larger degree of decomposition. Therefore, in this paper we address the problem of enumerating acyclic schemas that cannot be extended (i.e., with additional relational instances) while continuing to satisfy the accuracy threshold.

\eat{
A naive approach is to enumerate all candidate schemas $\schema$ and
check $\J(\schema) \leq \varepsilon$.  There are two problems with
that: the number of candidates is too large, and each computation of
$\J(\schema)$ requires multiple passes over the data $R$ (see the SQL
query in Sec.~\ref{sec:it}).  
}
We derive the
approximate schemas from the MVDs in their support.  Since an MVD is,
in particular, an acyclic schema, Def.~\ref{def:ApproxAcyclicJoin}
applies to them as well: a $\varepsilon$-MVD is one for which
$\J(X \mvd Y_1|\cdots|Y_m) \leq \varepsilon$.  Our second problem is:

\begin{problem}[MVD Enumeration Problem] \label{problem:mvd} Given a
  relational instance $R$, enumerate the approximate MVDs of $R$.
\end{problem}

\system\ works as follows.  The user provides a parameter
$\varepsilon \geq 0$.  In the first phase, \system\ enumerates
$\varepsilon$-MVDs, using the algorithm in Sec.~\ref{sec:MiningMVDs}.
When it finishes, or after a timeout, it starts the second phase,
where it enumerates approximate schemas with support from the set
returned by the first phase, using the algorithm in
Sec.~\ref{sec:EnumerateAcyclicSchemas}. 
\eat{This algorithm has polynomial delay enumeration.}
Since the support of a schema consists of $m-1$
MVDs, the algorithm reports schemas with
$\J(\schema) \leq (m-1)\varepsilon$, where $m$ is the number of
relations in $\schema$ but, since the enumeration algorithm is
exhaustive, all schemas with $\J \leq \varepsilon$ are reported
eventually.

\section{Three Main Techniques}

\label{sec:techniques}

We describe here three main techniques that allow us to design
efficient schema- and MVD-discovery algorithms.  The first reduces the
approximate schema discovery to approximate MVD discovery, the next
two reduce the number of MVD's that need to be discovered.

\subsection{From  MVDs to Acyclic Schemas}

Beeri at al.~\cite{Beeri:1983:DAD:2402.322389} showed
that, for exact constraints, an acyclic schema over $m$ relations is
equivalent to the set of $m-1$ MVDs in its support.  We give here a
non-trivial generalization to approximate schemas and MVDs.  We start
with two simple inequalities which we need throughout the paper:

\def\downwardClosureGeneral{ Let $Y_1, Z_1, \ldots, Y_m, Z_m$ be
  pairwise disjoint sets of variables, and let $X$ be any set of
  variables.  Then the following are Shannon inequalities:
{\small
\begin{align}  	
  \J(X \mvd Y_1| \cdots |Y_m) \leq & \J(X \mvd Y_1Z_1 | \cdots | Y_mZ_m) \label{eq:downward:1} \\
  \J(XZ_1\cdots Z_m \mvd Y_1| \cdots |Y_m) \leq & \J(X \mvd Y_1Z_1 | \cdots | Y_mZ_m) \label{eq:downward:2}
\end{align}
}%
}
\begin{prop}\label{prop:downwardClosureGeneral}
	\downwardClosureGeneral
\end{prop}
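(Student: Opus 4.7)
The plan is to expand both sides of each inequality using the definition
$\J(X \mvd Y_1|\cdots|Y_m) = \sum_i H(XY_i) - (m-1)H(X) - H(XY_1\cdots Y_m)$,
cancel common terms, and reduce what remains to a combination of chain rule, monotonicity (conditioning reduces entropy), and, for the second inequality, Han's inequality. Writing $W = Z_1\cdots Z_m$, both right-hand sides share the total entropy term $H(XWY_1\cdots Y_m)$, which makes the bookkeeping manageable.

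For inequality~\eqref{eq:downward:1}, subtracting the left side from the right gives
\[
\sum_{i=1}^m H(Z_i \mid X Y_i) \; - \; H(W \mid X Y_1 \cdots Y_m),
\]
after the $-(m-1)H(X)$ terms cancel and $H(XY_iZ_i)-H(XY_i)$ and $H(XWY_1\cdots Y_m)-H(XY_1\cdots Y_m)$ collapse to conditional entropies. The chain rule expands $H(W\mid XY_1\cdots Y_m)=\sum_i H(Z_i\mid XY_1\cdots Y_m Z_1\cdots Z_{i-1})$, and each summand is at most $H(Z_i\mid XY_i)$ by monotonicity, since the conditioning set on the left contains $XY_i$. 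This bounds the second term by the first, so the difference is nonnegative; every step used only \eqref{eq:shannon} and \eqref{eq:ChainRuleMI}, hence the inequality is Shannon.

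For inequality~\eqref{eq:downward:2}, the analogous subtraction yields
\[
(m-1)\,H(W\mid X) \; - \; \sum_{i=1}^m H(W_{-i} \mid X Y_i Z_i),
\]
where $W_{-i} = Z_1\cdots Z_{i-1}Z_{i+1}\cdots Z_m$; here $(m-1)[H(XW)-H(X)]=(m-1)H(W\mid X)$ and each $H(XY_iZ_i)-H(XWY_i)$ becomes $-H(W_{-i}\mid XY_iZ_i)$. I first weaken each summand using monotonicity, $H(W_{-i}\mid XY_iZ_i) \leq H(W_{-i}\mid X)$, and then apply the conditional form of Han's inequality,
\[
\sum_{i=1}^m H(W_{-i} \mid X) \;\leq\; (m-1)\,H(W \mid X),
\]
treating each $Z_i$ as a single block. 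Han's inequality is a standard consequence of submodularity (apply \eqref{eq:shannon} repeatedly), so again the entire derivation stays within the Shannon cone.

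The one non-routine step is recognising that Han's inequality is exactly what closes the gap in the second bound; without it the naive monotonicity argument only reduces the problem but does not finish it. Beyond that, everything is bookkeeping: grouping terms so the $-(m-1)H(X)$ and $H(XWY_1\cdots Y_m)$ contributions cancel cleanly, and being careful that $Y_1,\ldots,Y_m,Z_1,\ldots,Z_m$ are pairwise disjoint so that $XY_1Z_1\cdots Y_mZ_m = XWY_1\cdots Y_m$ and the chain rule may be applied in any order.
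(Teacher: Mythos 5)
Your treatment of inequality~\eqref{eq:downward:1} is correct: after cancellation the difference of the two sides is $\sum_i H(Z_i\mid XY_i) - H(W\mid XY_1\cdots Y_m)$ with $W=Z_1\cdots Z_m$, and expanding the second term by the chain rule and bounding each summand by monotonicity is a sound, slightly more direct packaging of the telescoping argument the paper uses (the paper peels off one $Z_i$ at a time via the identity $\J(X\mvd Y_1|\cdots|Y_m) + I(Z_1;Y_2\cdots Y_m\mid XY_1) = \J(X\mvd Y_1Z_1|Y_2|\cdots|Y_m)$).

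Your proof of~\eqref{eq:downward:2}, however, contains a genuine error. The reduction is fine: the difference of the two sides is indeed $(m-1)H(W\mid X) - \sum_{i=1}^m H(W_{-i}\mid XY_iZ_i)$, where $W_{-i}=W\setminus Z_i$. But the two steps you use to show this is nonnegative do not work. The displayed ``conditional Han's inequality''
\[
\sum_{i=1}^m H(W_{-i}\mid X)\;\leq\;(m-1)\,H(W\mid X)
\]
is the \emph{reverse} of Han's inequality: the true statement is $\sum_i H(W_{-i}\mid X)\geq (m-1)H(W\mid X)$ (for $m=2$ it is just subadditivity, $H(Z_1\mid X)+H(Z_2\mid X)\geq H(Z_1Z_2\mid X)$), and the direction you invoke is false in general --- take $X=\emptyset$, $m=2$, and $Z_1,Z_2$ two perfectly correlated bits, so your left-hand side is $2$ and your right-hand side is $1$. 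The root cause is that the preliminary weakening $H(W_{-i}\mid XY_iZ_i)\leq H(W_{-i}\mid X)$ discards $Z_i$ from the conditioning set, and that information is exactly what the bound needs (in the same example the true summands $H(Z_2\mid Z_1)$ and $H(Z_1\mid Z_2)$ are $0$, while your weakened ones are $1$). The fix is small: drop only $Y_i$, obtaining $H(W_{-i}\mid XY_iZ_i)\leq H(W_{-i}\mid XZ_i)=H(W\mid X)-H(Z_i\mid X)$; summing over $i$, the required inequality becomes $H(W\mid X)\leq\sum_i H(Z_i\mid X)$, which is conditional subadditivity and hence a Shannon inequality. (The paper sidesteps the issue by moving one $Z_j$ at a time from the key into the $j$-th dependent, each step costing a nonnegative sum $\sum_{i\neq j}I(Y_i;Z_j\mid\cdot)$.)
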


\begin{proof}
  The first inequality follows from this chain of inequalities:
  $\J(X \mvd Y_1 | \cdots | Y_m) \leq \J(X \mvd Y_1Z_1 | Y_2 | \cdots
  | Y_m) \leq \J(X \mvd Y_1Z_1 | Y_2Z_2 | \cdots | Y_m) \leq \cdots$;
  to prove it, we show only the first step (the others are similar),
  which follows by observing
  $\J(X \mvd Y_1 | \cdots | Y_m) + I(Z_1; Y_2\cdots Y_m |
  XY_1) = \J(X \mvd Y_1Z_1 | \cdots | Y_m)$ then using
  inequality~\eqref{eq:shannon}.  The second inequality follows from a
  similar chain, where the first step follows from
  $\J(XZ_1 \mvd Y_1 | \cdots |Y_m)+\sum_{i=2}^mI(Y_i;Z_1|X)=\J(X
  \mvd Y_1Z_1 | Y_2 | \cdots | Y_m)$
  \eat{
  $\J(XZ_1 \mvd Y_1 | \cdots |Y_m) + (m-1)h(XZ_1) - (m-1)h(X) = \J(X
  \mvd Y_1Z_1 | Y_2 | \cdots | Y_m)$}
  and the inequality follows from~\eqref{eq:shannon}.  
\end{proof}

Let $(\T,\chi)$ be a join tree, defining an acyclic schema $\schema$
over the variables $\chi(\T)=\Omega$.  Choose an arbitrary root,
orient the tree accordingly, and let $u_1, \ldots, u_m$ be a
depth-first enumeration of $\nodes(\T)$.  Thus, $u_1$ is the root, and
for every $i > 1$, $\parent(u_i)$ is some node $u_j$ with $j < i$.
For every $i$, we define $\Omega_i \defeq \chi(u_i)$,
$\Omega_{i:j} \defeq \bigcup_{\ell=i,j} \Omega_\ell$, and
$\Delta_i \defeq \chi(\parent(u_i)) \cap \chi(u_i)$ (by the running
intersection property this is equal to
$\Omega_{1:(i-1)}\cap\Omega_i$).  We prove:

\begin{theorem}\label{thm:approximateAcyclic}  The following hold:
\small{	
  \begin{align}
    \J(\T) = & \sum_{i=2}^m I(\Omega_{1:(i-1)};\Omega_i|\Delta_i)\label{eq:approximateAcyclic}\\
\hspace{-1cm}  \max_{i=2,m} I(\Omega_{1:(i-1)};\Omega_{i:m}|\Delta_i) \leq \J(\T)  \leq &\sum_{i=2}^m I(\Omega_{1:(i-1)};\Omega_{i:m}|\Delta_i)
\label{eq:approximateAcyclic:2}
  \end{align}
}%
\normalsize{
  The first is an identity, and the second is a Shannon inequality.
}
\end{theorem}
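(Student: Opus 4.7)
My plan is to establish the identity \eqref{eq:approximateAcyclic} first, by induction on the number of bags $m$, and then derive both inequalities in \eqref{eq:approximateAcyclic:2} as corollaries using monotonicity of mutual information. The identity is the technical heart of the argument: once it is established, the claim that $\J \geq 0$ is a Shannon inequality is immediate, because each summand on the right-hand side is a conditional mutual information and hence non-negative by \eqref{eq:shannon}.

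For the induction, the base case $m=1$ is trivial, since both sides vanish. For the inductive step I would exploit the fact that the last node $u_m$ of a DFS enumeration is always a leaf of $\T$. Let $\T'$ be the join tree obtained by deleting $u_m$; its DFS enumeration is $u_1,\dots,u_{m-1}$, and the running intersection property forces $\Delta_m = \Omega_{1:(m-1)} \cap \Omega_m$: any variable shared between $\Omega_m$ and some earlier bag $\Omega_j$ must lie in every bag along the unique tree path from $u_j$ to $u_m$, and in particular in $\chi(\parent(u_m))$. A direct computation from the definition \eqref{eq:JTScore} gives
\[
\J(\T) - \J(\T') \;=\; H(\Omega_m) + H(\Omega_{1:(m-1)}) - H(\Delta_m) - H(\Omega),
\]
and the elementary identity $I(A;B\mid C) = H(A)+H(B)-H(A\cup B)-H(C)$, valid whenever $C \subseteq A\cap B$, rewrites this difference as $I(\Omega_{1:(m-1)};\Omega_m \mid \Delta_m)$. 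Applying the inductive hypothesis to $\T'$ then yields \eqref{eq:approximateAcyclic}.

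The upper bound in \eqref{eq:approximateAcyclic:2} is then immediate: since $\Omega_i \subseteq \Omega_{i:m}$, the monotonicity property $I(X;Y\mid Z) \leq I(X;YW \mid Z)$ (an instance of \eqref{eq:ChainRuleMI} combined with \eqref{eq:shannon}) yields $I(\Omega_{1:(i-1)};\Omega_i\mid\Delta_i) \leq I(\Omega_{1:(i-1)};\Omega_{i:m}\mid\Delta_i)$ term by term, and summing over $i$ using the identity gives the bound. The lower bound $I(\Omega_{1:(i-1)};\Omega_{i:m}\mid\Delta_i) \leq \J(\T)$ for each individual $i$ is the main obstacle, since the DFS partition $(\Omega_{1:(i-1)},\Omega_{i:m})$ does not in general correspond to removing a single edge of $\T$. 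My plan here is to exploit the invariance of $\J(\T)$ under the choice of root (itself a consequence of the identity applied to different DFS enumerations, together with Theorem~\ref{thm:AcyclicCharacterizationLee} that $\J$ depends only on $\schema$) and re-root the tree so that, for this particular $i$, the two halves become connected subtrees joined by a single edge whose separator is exactly $\Delta_i$; the target term then appears as a \emph{single} summand of the identity for the new DFS order, and is bounded by the whole sum, which equals $\J(\T)$. The delicate point, which I expect to require the most care, is verifying that such a re-rooting and DFS reordering can be arranged so that the resulting separator in the new rooted tree is indeed $\Delta_i$.
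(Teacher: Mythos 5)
Your treatment of the identity \eqref{eq:approximateAcyclic} and of the right-hand inequality in \eqref{eq:approximateAcyclic:2} is correct and is essentially the paper's own argument: induction on $m$ using the fact that the last node of a DFS enumeration is a leaf, the observation $\Delta_m=\Omega_{1:(m-1)}\cap\Omega_m$ from the running intersection property, and term-by-term monotonicity of conditional mutual information for the upper bound.

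The gap is in the lower bound. Every summand of the identity \eqref{eq:approximateAcyclic} has the form $I(\text{prefix};\Omega_k\mid\Delta_k)$ with a \emph{single bag} in the second argument, whereas the quantity you must dominate, $I(\Omega_{1:(i-1)};\Omega_{i:m}\mid\Delta_i)$, has the \emph{union} of $m-i+1$ bags there. No re-rooting or reordering of the DFS can make the latter appear as ``a single summand of the identity for the new DFS order'' (except trivially when $i=m$), so the step ``the target term \ldots\ is bounded by the whole sum'' does not go through as written. Your instinct that the prefix/suffix split need not be an edge cut is well placed, but re-rooting only changes which edge cut you are looking at; it does not collapse a union of bags into one bag. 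The paper closes this with a different device: it first shows that \emph{merging} two adjacent bags $u,v$ of a join tree decreases $\J$ by exactly $I(\chi(u);\chi(v)\mid\chi(u)\cap\chi(v))\ge 0$, hence $\J(\T)\ge\J(\T')$ for any $\T'$ obtained by contractions; contracting every edge except $(\parent(u_i),u_i)$ leaves a two-bag tree whose $\J$ is precisely the conditional mutual information between the two sides of that edge cut given $\Delta_i$, which yields the left inequality. If you want to salvage your route, you would have to replace ``single summand'' by an argument that the tail $\sum_{\ell\ge k}$ of the identity for the re-rooted order dominates $I(\Omega'_{1:(k-1)};\Omega'_{k:m}\mid\Delta'_k)$ via the chain rule \eqref{eq:ChainRuleMI} and \eqref{eq:shannon} -- a nontrivial computation in its own right, at which point the merging lemma is the cleaner tool.
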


The identity \eqref{eq:approximateAcyclic} captures precisely the
intuition that the information measure associated with a join tree $\T$ is equivalent to $m-1$ mutual
information.  This identity implies that $\J(\T) \geq 0$, because
$I(\cdots) \geq 0$.  But the expressions $I(\cdots)$ in
\eqref{eq:approximateAcyclic} do not correspond to MVDs, because they
do not include all variables $\Omega$.  The Shannon inequality
\eqref{eq:approximateAcyclic:2} rectifies this, by showing that
$\J(\T)$ lies between the max and the sum of $m-1$ MVDs.  Notice that
the MVDs $\Delta_i \mvd \Omega_{1:(i-1)}|\Omega_{i:m}$, $i=2,m$ are
precisely the support, $\MVD(\T)$, thus
\eqref{eq:approximateAcyclic:2} generalizes Beeri's observation to
approximate schemas.  An immediate consequence of
\eqref{eq:approximateAcyclic:2} is the following relationship between
an acyclic schema $\schema$ and its support.  \eat{ \batya{Added a
    more formal corollary below. Actually, in the set MVD(T) holds
    approximately, it only guarantees that
    $\J(\T)\leq (m-1)\varepsilon$. So we should remove this sentence.}
  This shows that the acyclic schema holds approximately iff all $m-1$
  MVDs hold approximately.  }
\begin{corollary}\label{corr:MVDT}
  Let $\bS$ be an acyclic schema with join tree
  $(\T,\jointreeMapFunction)$. Then: (1) if
  $\relation \models_\varepsilon \AJD(\bS)$ then
  $\relation \models_\varepsilon \MVD(\T)$.  (2) If
  $\relation \models_\varepsilon \MVD(\T)$ then
  $\relation \models_{(m-1)\varepsilon} \AJD(\bS)$.  In particular,
  (1) and (2) are equivalent if $\varepsilon=0$.  Here
  $\relation \models_\varepsilon \MVD(\T)$ means
  $\relation \models_\varepsilon \phi$, forall $\phi \in \MVD(\T)$.
\end{corollary}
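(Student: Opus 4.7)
The plan is to read Corollary~\ref{corr:MVDT} as an immediate consequence of Theorem~\ref{thm:approximateAcyclic}, once one observes that the MVDs in $\MVD(\T)$ are exactly the standard two-component MVDs $\Delta_i \mvd \Omega_{1:(i-1)} \mid \Omega_{i:m}$ for $i = 2,\ldots,m$, and that for such a standard MVD $\phi_i$ we have $\J(\phi_i) = I(\Omega_{1:(i-1)}; \Omega_{i:m}\mid \Delta_i)$ (this was noted right after the definition of $\J(X \mvd Y|Z)$). So both sides of the Shannon inequality~\eqref{eq:approximateAcyclic:2} are written in terms of exactly the quantities $\J(\phi_i)$ for $\phi_i \in \MVD(\T)$.

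For part (1), assume $\relation \models_\varepsilon \AJD(\bS)$, i.e.\ $\J(\T) \leq \varepsilon$. The left-hand inequality in~\eqref{eq:approximateAcyclic:2} gives
\[
\max_{i=2,\ldots,m} \J(\phi_i) \;=\; \max_{i=2,\ldots,m} I(\Omega_{1:(i-1)}; \Omega_{i:m}\mid \Delta_i) \;\leq\; \J(\T) \;\leq\; \varepsilon,
\]
so $\J(\phi_i) \leq \varepsilon$ for every $\phi_i \in \MVD(\T)$, which is precisely $\relation \models_\varepsilon \MVD(\T)$.

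For part (2), assume $\relation \models_\varepsilon \MVD(\T)$, so $\J(\phi_i) \leq \varepsilon$ for every $i = 2,\ldots,m$. The right-hand inequality in~\eqref{eq:approximateAcyclic:2} then yields
\[
\J(\T) \;\leq\; \sum_{i=2}^m I(\Omega_{1:(i-1)}; \Omega_{i:m}\mid \Delta_i) \;=\; \sum_{i=2}^m \J(\phi_i) \;\leq\; (m-1)\varepsilon,
\]
which is $\relation \models_{(m-1)\varepsilon} \AJD(\bS)$. When $\varepsilon = 0$, both bounds collapse: $(1)$ gives each $\J(\phi_i) = 0$, and conversely the sum bound in $(2)$ gives $\J(\T) = 0$, so the two statements are equivalent (which also matches Lee's Theorem~\ref{thm:AcyclicCharacterizationLee}).

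There is no real obstacle here; the entire content sits in Theorem~\ref{thm:approximateAcyclic}. The only point worth being careful about is the identification of the edge-indexed MVDs $\phi_{u,v}$ in $\MVD(\T)$ with the depth-first indexed MVDs $\Delta_i \mvd \Omega_{1:(i-1)} \mid \Omega_{i:m}$: a DFS orientation of $\T$ puts each non-root node $u_i$ in bijection with the edge to its parent, and removing that edge splits $\T$ into the subtree rooted at $u_i$ (whose bag union is $\Omega_{i:m}$) and the rest (whose bag union is $\Omega_{1:(i-1)}$), with separator $\Delta_i$. Hence the $m-1$ MVDs listed in~\eqref{eq:approximateAcyclic:2} are exactly the $m-1$ elements of $\MVD(\T)$, and the corollary follows.
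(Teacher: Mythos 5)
Your proof is correct and is essentially the paper's own argument: the paper gives no separate proof of Corollary~\ref{corr:MVDT}, deriving it as ``an immediate consequence'' of inequality~\eqref{eq:approximateAcyclic:2}, with part (1) from the max lower bound and part (2) from the sum upper bound, exactly as you do. The only caveat (inherited from the paper's own remark, not introduced by you) is that for a non-path tree the DFS subtree rooted at $u_i$ need not be $\{u_i,\ldots,u_m\}$, so identifying each $\phi_{u,v}\in\MVD(\T)$ with a term $I(\Omega_{1:(i-1)};\Omega_{i:m}\mid\Delta_i)$ requires choosing, for each edge, a depth-first order that visits that edge's child-subtree last.
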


% \begin{proof}
% 	Both are immediate from~\eqref{eq:approximateAcyclic:2} by noting that
% 	the MVDs of~\eqref{eq:approximateAcyclic:2} are precisely the set $\MVD(\T)$.
% 	If $\relation \models_\varepsilon \AJD(\bS)$ then $\J(\T)\leq \varepsilon$,
% 	which, by~\eqref{eq:approximateAcyclic:2} means that $\J(\phi)\leq \varepsilon$ for every MVD $\phi\in \MVD(\T)$.
% 	If $\relation \models_\varepsilon \MVD(\T)$ then by~\eqref{eq:approximateAcyclic:2} $\J(\T)\leq (m-1)\varepsilon$. % or that $\relation \models_{(m-1)\varepsilon}\AJD(\bS)$.
% \end{proof}
% 

\begin{proof} (of Theorem~\ref{thm:approximateAcyclic}) Let $\T_i$
  denote the subtree consisting of the nodes $u_1, \ldots, u_i$.  We
  prove \eqref{eq:approximateAcyclic} by induction on $m$.  Assume the
  identity holds for $m-1$.  Compared to $\T_{m-1}$, the tree $\T_m$
  has one extra node $u_m$ and one extra edge $(\parent(u_m), u_m)$,
  hence by the definition of $\J$ in \eqref{eq:JTScore}:
\small{
	\setlength{\abovedisplayskip}{6pt}
	\setlength{\belowdisplayskip}{\abovedisplayskip}
	\setlength{\abovedisplayshortskip}{0pt}
	\setlength{\belowdisplayshortskip}{3pt}
  \begin{align*}
 \J(\T_m)= & \J(\T_{m-1}) + H(\chi(u_m)) - H(\chi(u_m)\cap \chi(\parent(u_m)) \\
   & + H(\chi(\T_{m-1})) - H(\chi(\T_m)) \\
 = & \J(\T_{m-1}) + H(\Omega_m) - H(\Delta_m) + H(\Omega_{1:(m-1)}) - H(\Omega_{1:m})\\
 = & \J(\T_{m-1}) + I(\Omega_{1:(m-1)}; \Omega_m | \Delta_m)
  \end{align*}
}%
\normalsize{
The claim follows from the induction hypothesis on
$\J(\T_{m-1})$.

We prove \eqref{eq:approximateAcyclic:2}.  The right inequality
 follows from the fact that
 $I(\Omega_{1:(i-1)};\Omega_i|\Delta_i) \leq
  I(\Omega_{1:(i-1)};\Omega_{i:m}|\Delta_i)$ (which holds by
  Eq.~\eqref{eq:downward:2}).  For the left inequality, we make the
  following observation.  If $\T$ is any join tree and $\T'$ is
  obtained by mergining two adjacent nodes $(u,v) \in \edges(\T)$,
  then $\J(\T) \geq \J(\T')$.  This is because
  $\J(\T) = \J(\T') +
  H(\chi(u))+H(\chi(v))-H(\chi(u)\cap\chi(v))-H(\chi(u)\cup\chi(v)) =
  \J(\T') + I(\chi(u);\chi(v)|\chi(u)\cap\chi(v))$.  To prove
  \eqref{eq:approximateAcyclic:2}, we fix one edge $(\parent(u_i), u_i)$
  and repeatedly merge all other edges, until we end with a tree $\T'$
  with two bags, $\Omega_{1:(i-1)}$ and $\Omega_{i:m}$ respectively.
  Then
  $\J(\T) \geq \J(T') = I(\Omega_{1:(i-1)};\Omega_{i:m}|\Delta_i)$.
  The claim follows from the fact that this holds for any $i=2,m$.
}
\end{proof}

\begin{example}
  We illustrate the first part of the theorem on the running example
  in Fig.~\ref{fig:TD} and Example~\ref{ex:j1}. Enumerating the nodes
  depth-first ($ABD, ACD, AF, BDE$), Eq.~\eqref{eq:approximateAcyclic}
  and \eqref{eq:approximateAcyclic:2} become:
\small{
  \begin{align*}
\eat{    \J(\T) = &  I(AF;ABCD|A) + I(ACD;ABD|AD) + I(ABCDF;BDE|BD) \\
  = & I(C;B|AD)+ I(F;BCD|A) + I(ACF;E|BD)
}
    \J(\T) = & I(C;B|AD)+ I(F;BCD|A) + I(ACF;E|BD) \\
 \max(\cdots) \leq \J(\T) \leq &I(CF; BE|AD)+ I(F;BCDE|A)+I(ACF; E|BD)
  \end{align*}
}
%
% \normalsize{
% which is identical to the expression $\J(\T)$ in
% Example~\ref{ex:j1}. 
% }

%   We illustrate the first part of the theorem on the running example in
%   Fig.~\ref{fig:TD}. Enumerating the nodes counter-clockwise (e.g., $AF$ is indexed $3$, and $BDE$ indexed $4$),
%   Eq.~\eqref{eq:approximateAcyclic} proves that $\J(\T)$ satisfies the
%   following identity:
% %
% \small{
%   \begin{align*}
%     \J(\T) = &  I(AF;ABCD|A) + I(ACD;ABD|AD) + I(ABCDF;BDE|BD) \\
%   = & I(F;BCD|A) + I(C;B|AD) + I(ACF;E|BD)
%   \eat{
%     \J(\T) = &  I(AF;ACD|A) + I(ACDF;ABD|AD) + I(ABCDF;BDE|BD) \\
%            = & I(F;CD|A) + I(CF;B|AD) + I(ACF;E|BD)
%            }
%   \end{align*}
% }%
% \normalsize{
% which is identical to the expression $\J(\T)$ in
% Example~\ref{ex:j1}. 
% }
% 
\eat{
\normalsize{
while \eqref{eq:approximateAcyclic:2} becomes:
}
\begin{align*}
  \max(e_1,e_2,e_3) \leq & \J(\T) \leq  e_1+e_2+e_3 \\
   e_1 = & I(AF;ABCDE|A) = I(F;BCDE|A) \\
   e_2 = & I(ACDF; ABDE|AD) = I(CF; BE|AD) \\
   e_3 = & I(ABCDF; BDE | BD) = I(ACF; E|BD)
\end{align*}
}
% Notice that there are precisely the MVDs that we argued in
% Sec.~\ref{sec:illustrativeExample} are logical consequences of the
% acyclic schema $\T$.
\end{example}

% Thus, our strategy is to find all $\varepsilon$-MVDs, then find a
% subset that defines an acyclic schema $\schema$.  Notice that
% $\J(\schema)$ may be as large as $(m-1)\cdot \varepsilon$, where $m$
% is the number of relations in $\schema$; this is the reason why we
% may return schemas with a slightly larger $\J$.

\subsection{Full MVDs}\label{sec:measures}

The number of candidate MVD's is very large: there are\footnote{There
  are $3^n$ ways to partition $\Omega$ into three sets $X, Y, Z$.  We
  rule out the $2^n$ partitions that have $Y=\emptyset$ and the $2^n$
  partitions that have $Z=\emptyset$, and add back the 1 partition
  that has $Y=Z=\emptyset$, for a total of $3^n-2^{n+1} + 1$.
  Finally, we divide by 2 since $X \mvd Y|Z$ and $X \mvd Z|Y$ are the
  same MVD.} $(3^n+1)/2 -2^n = O(3^n)$ standard MVD's $X \mvd Y|Z$,
which is too large to consider for practical datasets.  Here, and in
the next section, we describe two techniques that allow us to restrict
the search space.  Consider a fixed key $X$.  Beeri at
al.~\cite{DBLP:conf/sigmod/BeeriFH77} noted that, in the exact case,
if any MVD $X \mvd \ldots$ holds on the data, then there exists a
``best'' one.  For example if both $X \mvd AB|C$ and $X \mvd A|BC$
hold exactly, then so does $X \mvd A|B|C$, and it suffices to discover
only the latter.  Unfortunately, this fails for approximate MVDs, as
we explain here.

We say that $\phi = X \mvd A_1|\dots|A_m$ \e{refines}
$\psi = X \mvd B_1|\dots|B_k$, denoted by $\phi \succeq \psi$ if they
both have the same key (i.e., $\key(\phi)=\key(\psi)=X$) and for every
$A_i \in \components(\phi)$ there exists $B_j\in \components(\psi)$
such that $A_i \subseteq B_j$.  For example, $X\mvd A|B|C$ refines
$X \mvd AB|C$.

\def\naiveEnumJDs{ If $\phi \succeq \psi$ then $\J(\phi)\geq
  \J(\psi)$.\eat{ is a Shannon inequality.}  }
\begin{prop}\label{prop:naiveEnumJDs}
	\naiveEnumJDs
\end{prop}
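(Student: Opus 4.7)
The plan is to prove the inequality by reducing to a single atomic operation, namely splitting one block of a partition into two, and then iterating. Concretely, I would argue that whenever $\phi \succeq \psi$, one can obtain $\phi$ from $\psi$ by a finite sequence of ``split'' steps, each of which replaces one block $B_j$ of the current MVD by two disjoint nonempty pieces $B_j', B_j''$ with $B_j = B_j' \cup B_j''$. This is possible because, by the definition of refinement, each $B_j$ of $\psi$ is the disjoint union of a subset of the $A_i$'s of $\phi$, so one can iteratively cut off pieces until $\phi$ is reached. So it suffices to establish the inequality for one split step.

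The key calculation is the single-split identity. Fix $\psi = X \mvd B_1|\cdots|B_k$ and let $\psi' = X \mvd B_1|\cdots|B_{k-1}|B_k'|B_k''$ be obtained by splitting $B_k$ into disjoint $B_k',B_k''$. Using the definition
\[
\J(X \mvd Y_1|\cdots|Y_m) = \sum_{i=1}^m H(XY_i) - (m-1)H(X) - H(XY_1\cdots Y_m),
\]
direct substitution gives
\[
\J(\psi') - \J(\psi) = H(XB_k') + H(XB_k'') - H(X) - H(XB_k),
\]
since the $H(XB_i)$ terms for $i < k$ cancel and $H(XB_1\cdots B_k) = H(XB_1\cdots B_{k-1}B_k'B_k'')$ because $B_k = B_k' \cup B_k''$. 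The right-hand side is precisely $I(B_k'\,;B_k'' \mid X)$, which is $\geq 0$ by the Shannon inequality~\eqref{eq:shannon}. Hence each split step weakly increases $\J$.

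Chaining the split steps from $\psi$ to $\phi$ yields $\J(\phi) \geq \J(\psi)$. Moreover, because each step's difference was itself a Shannon inequality, the cumulative difference $\J(\phi) - \J(\psi)$ is a nonnegative combination of mutual informations, so the overall inequality is Shannon as well, consistent with the pattern of the paper's other structural results. I do not foresee any real obstacle: the only point requiring care is verifying that refinement can always be realized by a sequence of binary splits on nonempty pieces, which follows immediately from the fact that refinement means the $A_i$'s partition each $B_j$.
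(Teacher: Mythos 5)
Your proposal is correct and is essentially the paper's own proof: the paper likewise reduces to a single merge/split of two dependents and uses the identity $\J(X\mvd A|B|\cdots) = \J(X\mvd AB|\cdots) + I(A;B|X)$ together with nonnegativity of mutual information. The only difference is cosmetic --- you phrase the atomic step as a split going from $\psi$ to $\phi$ rather than a merge going from $\phi$ to $\psi$.
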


\begin{proof}
  It suffices to consider the case when two dependents in $\phi$ are
  replaced by their union in $\psi$, e.g. $\phi = X\mvd A|B|\cdots$
  and $\psi=X \mvd AB|\cdots$, since any refinement is a sequence of
  such steps.  In that case, by inspecting Eq.\eqref{eq:JTScore} we
  observe
  $\J(\phi)=\J(\psi) + H(XA)+H(XB) - H(XAB) - H(X) = \J(\psi) +
  I(A;B|X) \geq \J(\psi)$ proving the claim.
\end{proof}

We say that an MVD $\psi$ is \e{$\varepsilon$-full}, or simply
\e{full}, if $\relation \models_\varepsilon \psi$ and, for all strict
refinements $\phi \succ \psi$,
$\relation \not\models_\varepsilon \phi$.  We denote by
$\fullMVDs_\varepsilon(R,X)$ the set of all full $\varepsilon$-MVDs
with key $X$.  Thus, we only need to discover the sets
$\fullMVDs_\varepsilon(R,X)$, for all $X \subseteq \Omega$, because
all other MVDs can be derived using Shannon inequalities.

Beeri proved that, in the exact case, $\fullMVDs_0(R,X)$ has at most
one element. We next present Lemma~\ref{lem:singleMaxMVDLemma} that shows what happens in the approximate case, and allows us to 
derive Beeri's result as a special case. Given two MVDs $\phi = X \mvd A_1 | \dots |A_m$
and $\psi = X \mvd B_1|\dots|B_k$, define their \e{join} as
$\phi \vee \psi = X \mvd C_{11} | C_{12} | \cdots | C_{mk}$, where
$C_{ij} = A_i \cap B_j$.  Clearly, $\phi \vee \psi$ refines both
$\phi$ and $\psi$, i.e.
$\J(\phi \vee \psi) \geq \max(\J(\phi),\J(\psi))$.  We prove a weak
form of converse:

\def\singleMaxMVDLemma{The following are Shannon inequalities:
  $\J(\phi \vee \psi) \leq \J(\phi) + m \J(\psi)$ and
  $\J(\phi \vee \psi) \leq k \J(\phi) + \J(\psi)$.  }
\begin{lemma}\label{lem:singleMaxMVDLemma}
	\singleMaxMVDLemma
\end{lemma}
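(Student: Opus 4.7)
The strategy is to decompose $\J(\phi\vee\psi)$ via the chain rule for mutual information into two pieces, one of which is exactly $\J(\phi)$, and the other of which we bound by $m\J(\psi)$ using monotonicity. The second inequality then follows by symmetry (swapping the roles of $\phi$ and $\psi$).

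\textbf{Step 1: A telescoping identity for $\J$.} Using the definition \eqref{eq:JTScore} I first verify the identity
\[
\J(X \mvd Y_1 | \cdots | Y_m) \;=\; \sum_{i=2}^{m} I(Y_i;\, Y_1\cdots Y_{i-1} \mid X),
\]
by letting $\J_i \defeq \sum_{j\le i} H(XY_j) - (i-1)H(X) - H(XY_1\cdots Y_i)$, observing that $\J_i - \J_{i-1} = I(Y_i; Y_1\cdots Y_{i-1}\mid X)$, and summing. (Note this holds for any pairwise disjoint $Y_1,\ldots,Y_m$ --- we do not need $XY_1\cdots Y_m = \Omega$.)

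\textbf{Step 2: Ordering the cells lexicographically.} Enumerate the components $C_{ij}$ of $\phi\vee\psi$ in the order $C_{11},C_{12},\ldots,C_{1k},C_{21},\ldots,C_{mk}$. Writing $A_{<i} \defeq A_1\cup\cdots\cup A_{i-1}$ and $C_{i,<j}\defeq C_{i,1}\cup\cdots\cup C_{i,j-1}$, and using $\bigcup_{j}C_{ij}=A_i$ (which holds since $\bigcup_j B_j\supseteq A_i$), Step~1 gives
\[
\J(\phi\vee\psi) \;=\; \sum_{i=1}^{m}\sum_{j=1}^{k} I\!\left(C_{ij};\, A_{<i}\,C_{i,<j} \,\big|\, X\right).
\]

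\textbf{Step 3: Split via the chain rule.} Apply the chain rule $I(Y; UV\mid X) = I(Y; U\mid X) + I(Y; V\mid XU)$ inside each summand:
\[
I(C_{ij}; A_{<i}C_{i,<j}\mid X) = I(C_{ij};C_{i,<j}\mid X) + I(C_{ij}; A_{<i}\mid X\,C_{i,<j}).
\]
For the second term, for each fixed $i$ the chain rule collapses the sum over $j$:
\[
\sum_{j=1}^{k} I(C_{ij}; A_{<i}\mid X\,C_{i,<j}) \;=\; I(A_i; A_{<i}\mid X),
\]
and summing over $i$ yields exactly $\J(\phi)$ by Step~1.

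\textbf{Step 4: Bounding the first term by $m\,\J(\psi)$.} For fixed $i$, by Step~1,
\[
\sum_{j=1}^{k} I(C_{ij}; C_{i,<j}\mid X) = \J(X \mvd C_{i,1}\mid \cdots \mid C_{i,k}).
\]
Since $C_{ij}\subseteq B_j$ and $C_{i,<j}\subseteq B_{<j}$, monotonicity of conditional mutual information (itself a consequence of the chain rule and \eqref{eq:shannon}: $I(B_j;B_{<j}\mid X) = I(C_{ij};B_{<j}\mid X) + I(B_j\setminus C_{ij};B_{<j}\mid XC_{ij}) \ge I(C_{ij};C_{i,<j}\mid X)$) gives each term $\le I(B_j;B_{<j}\mid X)$. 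Thus this inner sum is $\le \J(\psi)$, and summing over $i=1,\ldots,m$ bounds the whole contribution by $m\,\J(\psi)$.

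\textbf{Step 5: Conclude and symmetrize.} Combining Steps 3 and 4,
$\J(\phi\vee\psi) \le \J(\phi) + m\,\J(\psi)$, which is a Shannon inequality because every step used only the chain rule and non-negativity \eqref{eq:shannon}. Repeating the argument with the roles of $\phi$ and $\psi$ swapped (ordering columns first instead of rows) yields $\J(\phi\vee\psi)\le k\,\J(\phi)+\J(\psi)$.

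\textbf{Main obstacle.} The only delicate point is Step~4, where one must be careful that the ``restriction'' $X \mvd C_{i,1}\mid\cdots\mid C_{i,k}$ need not come from a meaningful projection of $\psi$, so the bound $\J(X\mvd C_{i,1}\mid\cdots\mid C_{i,k})\le \J(\psi)$ is not a direct refinement statement (Prop.~\ref{prop:naiveEnumJDs} does not apply) but requires the termwise monotonicity argument above.
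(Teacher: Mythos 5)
Your proof is correct and is essentially the paper's argument recast in mutual-information language: your Step 4 bound $\J(X\mvd C_{i,1}|\cdots|C_{i,k})\le\J(\psi)$ is, after expanding the $\J$'s into entropies, exactly the paper's key inequality \eqref{eq:singleMax2} at $\ell=k$, which the paper likewise establishes row by row (via repeated submodularity --- the entropy form of your chain-rule-plus-nonnegativity monotonicity steps) and then sums over $i=1,\ldots,m$. The remaining bookkeeping, namely your exact identity $\J(\phi\vee\psi)=\J(\phi)+\sum_i\J(X\mvd C_{i,1}|\cdots|C_{i,k})$, corresponds to the paper's direct cancellation of the $H(X)$ and $H(\Omega)$ terms that reduces the claim to \eqref{eq:singleMax1}.
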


By this result, $\J(\phi)=\J(\psi)=0$ implies
$\J(\phi \vee \psi)=0$, which proves Beeri's theorem that
$\fullMVDs_\varepsilon(R,X)$ has at most one element, because if
$\phi_1,\phi_2,\cdots$ are all MVD's with key $X$ that hold exactly on
$\relation$, then $\phi_1 \vee \phi_2 \vee \cdots$ refines all of them
and holds too.  This property was also used by
Draeger~\cite{draeger2016} in his MVD discovery algorithm.  When
$\varepsilon > 0$ however, then this fails.  For a very simple
example, consider a relation with two tuples,
\parbox{2cm}{
\scriptsize
\begin{tabular}{{|cccc|}} \hline
  $X$ & $A$ & $B$ & $C$ \\ \hline
   0 & 0 & 0 & 0 \\
   0 & 1 & 1 & 1 \\ \hline
\end{tabular}
}\qquad
and fix $\varepsilon=1$.  Then
$\relation \models_\varepsilon X \mvd AB|C, X \mvd AC|B, X \mvd BC|A$,
but $\not\models_\varepsilon X \mvd A|B|C$; indeed,
$H(\emptyset)=H(X)=0$ and $H(W)=1$ for all other sets $W$, and the
reader can check
$\J(X\mvd AB|C) = \J(X \mvd AC|B) = \J(X \mvd BC|A)=1$ but
$\J(X \mvd A|B|C)=2$.
% 
% because, if it contained both $\phi$ and
% $\psi$, then $\J(\phi)=\J(\psi)=0$ and the lemma implies
% $\J(\phi \vee \psi)=0$, hence $\fullMVDs_\varepsilon(S)$ should
% contain $\phi \vee \psi$ instead.  This property, however, no longer
% holds when $\varepsilon > 0$.

In summary, our algorithm discovers
$\fullMVDs_\varepsilon(R,X)$, for every $X$.  Unlike the exact case,
$\fullMVDs_\varepsilon(R,X)$ may contain more than one element.
\eat{
\begin{proof} (of Lemma~\ref{lem:singleMaxMVDLemma}) 
We recall that $\phi=X\mvd A_1|\dots|A_m$, $\psi=X\mvd B_1|\dots|B_k$, and $\phi \vee \psi=X\mvd C_{11}|\dots|C_{mk}$ where $C_{ij}=A_i{\cap}B_j$.
	We prove the first inequality (the second is similar),
	and for that we need to show that:\\$\J(\phi)+m\left(\sum_{j=1}^k H(XB_j) - (k-1)H(X) - H(\Omega)\right)\geq
	\sum_{ij} H(XC_{ij}) - (mk-1) H(X) - H(\Omega)$. 
	Since $\J(\phi)=\sum_{i=1}^mH(XA_i)-(m-1)H(X)-H(\Omega)$, we need to show that:
	\begin{align}
	\sum_{i=1}^m  H(XA_i) + m\sum_{j=1}^k H(XB_j) \geq \sum_{ij} H(XC_{ij})+ m H(\Omega)
	\label{eq:singleMax1}
	\end{align}
	For that we prove by induction on $\ell$ that:
	\begin{align}
	H(XA_i) + \sum_{j=1}^\ell H(XB_j) \geq
	\sum_{j=1}^\ell H(XC_{ij}) + H(XA_iB_1\ldots B_\ell) \label{eq:singleMax2}
	\end{align}
	For $l=1$ the statement follows from~\eqref{eq:shannon}. Assuming the statement for $\ell-1$ holds, then the statement
	for $\ell$ follows from:
	\begin{align*}
	H(XB_\ell) + H(XA_iB_1\ldots B_{\ell-1}) \geq H(XC_{i\ell}) + H(XA_iB_1\ldots B_\ell)
	\end{align*}
	which is the submodularity inequality, since
	$XB_\ell \cap (XA_iB_1\ldots B_{\ell-1}) = XB_\ell \cap XA_i =
	XC_{i\ell}$.  Setting $\ell=k$ in \eqref{eq:singleMax2} and summing
	over $i=1,m$ we obtain
	$\sum_{i=1}^m H(XA_i) + m\sum_{j=1}^k H(XB_j) \geq \sum_{ij}
	H(XC_{ij}) + \sum_{i=1}^m H(XA_iB_1\cdots B_k) = \sum_{ij} H(XC_{ij})
	+ m H(\Omega)$, proving \eqref{eq:singleMax1}.
\end{proof}
}
\subsection{Minimal Separators}\label{sec:minSeps}

We now show that it is not necessary to discover the sets $\fullMVDs_\varepsilon(R,X)$  for all subset of attributes $X\subset \Omega$, but only those where $X$ is a \e{minimal
separator}.

\begin{definition}\label{def:minimalSJD}  Fix a relation $R$ and
  $\varepsilon \geq 0$.
  We say that a set $X$ \e{separates} two variables
  $A, B \not\in X$ if there exists an $\varepsilon$-MVD
  $X \mvd Y_1| \cdots |Y_m$ that separates $A,B$, i.e.  $A,B$ occur in different sets $Y_i, Y_j$.
  We say $X$ is a \e{minimal $A,B$-separator} if there is no
    $X_0 \subsetneq X$ that separates $A,B$.
\end{definition}
For a pair $A,B \in \Omega$, we denote by
$\minsep_{\varepsilon}(\relation,A,B)$ the set of minimal $A,B$
separators in $\relation$, and for a minimal $AB$ separator $X$
we denote by
$\fullMVDs_{\varepsilon}(R,X,A,B)$ the set of full MVDs that separate
$A,B$.  Notice that:
$$\fullMVDs_\varepsilon(R,X) = \bigcup_{A,B \in \Omega\sm X}
\fullMVDs_{\varepsilon}(R,X,A,B).$$

%%% Let $\varepsilon\geq 0$, and $\relation$ be a relation over the attribute set $\Omega$. Let $S \subset \Omega$ and $A,B \in \Omega \sm S$. We say that $S$ \e{separates} $A$ and $B$ if $\relation$ approximately satisfies an MVD $\phi$ (i.e, $\relation \models_{\varepsilon} \phi$) such that $\key(\phi)=S$ and $A$ and $B$ are in distinct sets of $\components(S)$, and call $S$ an $AB$-separator.
%%% We say that $S$ is a \e{minimal} $AB$-separator if $S$ is an $AB$-separator and no strict subset of $S$ separates $A$ and $B$.
%%% We say that $S$ is a minimal separator if it is a minimal separator for a pair of attributes in $\Omega$, and denote by $\minsep(\relation)$ the set of minimal separators of $\relation$.
%%% \begin{definition}\label{def:minimalSJD}
%%% 	Let $S \subset \Omega$ be a minimal $AB$-separator, and let $\phi = S \mvd X_1|\dots|X_m$ be an MVD. We say that $\phi$ is a minimal MVD if $A$ and $B$ are in distinct sets of $\components(\phi)$. 
%%% \end{definition}
%%% 

\begin{example}
  Let $R$ be a relation over $\Omega=\set{A,\dots,E}$. Suppose
  $R \models_\varepsilon CD \mvd A|BE$.  By \eqref{eq:downward:2} we
  also have $R \models_\varepsilon CDE \mvd A|B$, which means that
  $CDE$ cannot be a minimal separator for $A,B$.  To check that $CD$
  is a minimal $A,B$-separator, we need to check that neither $C$ nor
  $D$ separates $A,B$
\end{example}

The main result in this section is that we only need to compute the
full MVDs with minimal separators, denoted as:
\begin{align}
  \M_\varepsilon \defeq & \bigcup_{A,B \in \Omega}\bigcup_{\substack{X \in\\ \minsep_{\varepsilon}(\relation,A,B)}}\fullMVDs_{\varepsilon}(R,X,A,B) \label{eq:m}
\end{align}
because, as we show, every $\varepsilon$-MVD can be derived from the set $\M_\varepsilon$ by a Shannon inequality.
\def\FullMVDsCompleteness{
	Let $X \mvd Y|Z$ be an $\varepsilon$-MVD for $R$.  Then there exist
	$\phi_1, \ldots, \phi_m \in \M_\varepsilon$, where $m = |Y|\cdot |Z|$, such that
	the following is a Shannon inequality:
	$I(Y;Z|X) \leq \sum_i \J(\phi_i)$.
}
\begin{theorem}\label{thm:FullMVDSCompleteness}
  \FullMVDsCompleteness
\end{theorem}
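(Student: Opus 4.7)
The plan is to expand $I(Y;Z\mid X)$ into $m = |Y|\cdot |Z|$ pairwise mutual-information terms via the chain rule, and then bound each term by $\J(\phi)$ for a single $\phi \in \M_\varepsilon$. Fixing orderings $Y=\{A_1,\ldots,A_s\}$ and $Z=\{B_1,\ldots,B_t\}$, the chain rule yields the Shannon identity
\begin{equation*}
  I(Y;Z\mid X) = \sum_{i=1}^{s}\sum_{j=1}^{t} I(A_i;B_j\mid X_{ij}), \qquad X_{ij} \defeq X \cup \{A_1,\ldots,A_{i-1}\} \cup \{B_1,\ldots,B_{j-1}\},
\end{equation*}
so it suffices to produce, for each pair $(A_i,B_j)$, some $\phi_{ij}\in\M_\varepsilon$ with $\J(\phi_{ij})\geq I(A_i;B_j\mid X_{ij})$ as a Shannon inequality.

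The technical core is the following auxiliary lemma that I would state and prove first: if $\phi = X^* \mvd C_1\mid\cdots\mid C_k$ is any MVD with $A\in C_\alpha$ and $B\in C_\beta$ for some $\alpha\neq\beta$, and $T$ is any set disjoint from $X^*\cup\{A,B\}$, then $\J(\phi) \geq I(A;B\mid X^*\cup T)$ is a Shannon inequality. The proof partitions each component as $C_\gamma = Y_\gamma \cup Z_\gamma$ with $Z_\gamma \defeq C_\gamma\cap T$ and $Y_\gamma \defeq C_\gamma\setminus T$, so $A\in Y_\alpha$ and $B\in Y_\beta$. Inequality~\eqref{eq:downward:2} applied to this decomposition gives $\J(X^*T \mvd Y_1\mid\cdots\mid Y_k)\leq \J(\phi)$; Proposition~\ref{prop:naiveEnumJDs} (coarsening only decreases $\J$) then gives $\J(X^*T \mvd Y_1\mid\cdots\mid Y_k)\geq \J(X^*T \mvd Y_\alpha\mid \bigcup_{\gamma\neq\alpha}Y_\gamma) = I(Y_\alpha;\bigcup_{\gamma\neq\alpha}Y_\gamma\mid X^*T)$; and finally submodularity (shrinking either side of a mutual information) maps $Y_\alpha\mapsto\{A\}$ and $\bigcup_{\gamma\neq\alpha}Y_\gamma\mapsto\{B\}$, closing the chain of Shannon inequalities.

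The theorem then follows by applying the lemma pointwise. Because $X\mvd Y\mid Z$ is an $\varepsilon$-MVD that places $A_i,B_j$ in distinct components, $X$ is an $\varepsilon$-separator for $A_i,B_j$; hence there exists a minimal $\varepsilon$-separator $X^*_{ij}\subseteq X$ for this pair, and greedily refining any witnessing $\varepsilon$-MVD with key $X^*_{ij}$ (refinement preserves the fact that $A_i,B_j$ sit in distinct components) yields a full MVD $\phi_{ij}\in\fullMVDs_\varepsilon(R,X^*_{ij},A_i,B_j)\subseteq \M_\varepsilon$. Taking $T\defeq X_{ij}\setminus X^*_{ij}$, which is disjoint from $\{A_i,B_j\}$ because $A_i,B_j\notin X_{ij}$, the lemma delivers $\J(\phi_{ij}) \geq I(A_i;B_j\mid X^*_{ij}\cup T) = I(A_i;B_j\mid X_{ij})$. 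Summing over all $m = s\cdot t$ pairs and combining with the chain-rule identity gives $I(Y;Z\mid X) \leq \sum_{ij}\J(\phi_{ij})$, as required.

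The main obstacle is the mismatch between the conditioning set $X_{ij}$ produced by the chain rule and the minimal-separator key $X^*_{ij}\subseteq X\subseteq X_{ij}$ of the candidate MVD: a direct coarsening argument would only control $I(A_i;B_j\mid X^*_{ij})$, which can be strictly smaller than $I(A_i;B_j\mid X_{ij})$ since conditioning on additional variables may increase mutual information. The key manoeuvre is to invoke~\eqref{eq:downward:2} \emph{before} coarsening, absorbing the extra variables $T = X_{ij}\setminus X^*_{ij}$ from the components into the key; this is precisely the direction in which the Shannon inequality flows in our favour, aligning the conditioning sets at no cost to the bound.
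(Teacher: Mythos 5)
Your proof is correct and follows essentially the same route as the paper's: the same chain-rule expansion of $I(Y;Z\mid X)$ into $|Y|\cdot|Z|$ pairwise terms, the same selection of a full MVD over a minimal separator $X^*_{ij}\subseteq X$ for each pair, and the same three ingredients (Proposition~\ref{prop:naiveEnumJDs} and the two inequalities of Proposition~\ref{prop:downwardClosureGeneral}) to align the conditioning sets. The only difference is cosmetic: you apply \eqref{eq:downward:2} to absorb $T=X_{ij}\setminus X^*_{ij}$ into the key \emph{before} coarsening and shrinking, whereas the paper coarsens first and then applies \eqref{eq:downward:1} and \eqref{eq:downward:2}; packaging this as an auxiliary lemma is a clean reorganization of the identical argument.
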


In summary, our algorithm will iterate over pairs of attributes $A,B$,
will compute $\minsep_{\varepsilon}(\relation,A,B)$, then, for each
$X$ in this set will compute $\fullMVDs_{\varepsilon}(R,X,A,B)$, and
return their union, $\M_\varepsilon$; we describe it in the next
section.  We end this section with the proof of
Theorem~\ref{thm:FullMVDSCompleteness}.

\begin{proof}
  Let $Y=A_1\dots A_m$, and $Z=B_1 \dots B_k$.  By the chain rule~\eqref{eq:ChainRuleMI} it holds that:
  \begin{align*}
    I(Y;Z|X)=&\sum_{i=1}^m\sum_{j=1}^kI(A_i;B_j|XA_1\dots A_{i-1}B_{1}\dots B_{j-1})
  \end{align*}
  It suffices to prove that, for each $i,j$, there exists an MVD $\phi\in \M_\varepsilon$
  such that the following is a Shannon inequality:
  \begin{align*}
    I(A_i;B_j|XA_1\cdots A_{i-1}B_{1}\cdots B_{j-1}) \leq & \J(\phi)
  \end{align*}
  Since $X \mvd Y|Z$ is a $\varepsilon$-MVD for the relation $R$, then
  $X$ is an $A_i, B_j$ separator.  Let $S \subseteq X$ be any minimal
  $A_i, B_j$ separator, thus
  $S \in \minsep_{\varepsilon}(\relation,A_i,B_j)$, and let
  $\phi = S\mvd U_1|\cdots|U_p$ be a full MVD in $\fullMVDs_{\varepsilon}(R,S,A_i,B_j)
  \subseteq \M_\varepsilon$
  that separates $A_i, B_j$.  Assume w.l.o.g.
  $A_i \in U_1$, $B_j \in U_2$, and let $\psi \defeq S \mvd W | V$,
  where $W=U_1$, $V=U_2U_3\cdots U_p$.  Thus, $\phi \succeq \psi$, and
  therefore by Prop.~\ref{prop:naiveEnumJDs} the following Shannon
  inequality holds: $\J(\phi) \geq \J(\psi)$.  Write $\psi$ as
  $\psi = S \mvd W_0 W_1 | V_0V_1$, where
  $W_0= W \cap (XA_1\cdots A_i B_1 \cdots B_j)$, $W_1 = W - W_0$, and
  similarly $V_0= V \cap (XA_1\cdots A_i B_1 \cdots B_j)$,
  $V_1 = V - V_0$.  By Prop.~\ref{prop:downwardClosureGeneral}
  \eqref{eq:downward:1} we have the following Shannon inequality
  $\J(\psi) = \J(S \mvd W_0W_1 | V_0V_1) \geq \J(S \mvd W_0 | V_0)$.
  Finally, we notice that the set $SW_0V_0$ is the same as
  $XA_1\cdots A_iB_1 \cdots B_j$ and that $A_i \in W_0$,
  $B_j \in V_0$, therefore by Prop.~\ref{prop:downwardClosureGeneral},
  \eqref{eq:downward:2},
  $\J(S \mvd W_0 | V_0) \geq \J(XA_1\cdots A_{i-1}B_{1}\cdots B_{j-1}
  \mvd A_i | B_j)$, proving the claim.
\end{proof}

\section{Discovering  $\varepsilon$-MVDs}\label{sec:MiningMVDs}

\begin{algseries}{t}{Discover the set $\M_\varepsilon=\cup_{S\in \minsep_R}\fullMVDs_\varepsilon(S)$. \label{alg:MVDMiner}}
	\begin{insidealg}{\MVDAlg}{$\relation$, $\Omega$, $\varepsilon$}
		\STATE $\M_\varepsilon \gets \emptyset$
	%	\FORALL{$i,j\in [n] : i<j$}	
		\FORALL{pairs $A,B \in \Omega$}	
			\STATE $\minsep_{\varepsilon}(\relation,A,B) \gets \minSepAlg(\relation,\Omega,\varepsilon,(A,B))$	\label{algline:MVDAlgMinePair}
			\FORALL{$X \in \minsep_{\varepsilon}(\relation,A,B)$}
				\STATE $\M_\varepsilon \gets \M_\varepsilon \cup \algname{getFullMVDs}(X,\varepsilon,(A,B),\infty)$ \label{algline:MVDAlgFullMVDs}
			\ENDFOR	
		\ENDFOR
		\STATE return $\M_\varepsilon$
	\end{insidealg}	
\end{algseries}

In this section we present the first phase of \system: the algorithm
for the discovery of $\varepsilon$-MVDs in a relation $\relation$,
called \MVDAlg, and shown in Figure~\ref{alg:MVDMiner}.  As explained,
the algorithm returns the set $\M_\varepsilon$, defined in
Eq.\eqref{eq:m}; this set is used in the second phase of \system\ to
compute $\varepsilon$-schemes.  

\MVDAlg\ iterates over all pairs of attributes $A,B \in \Omega$.  It
first computes the set $\minsep_{\varepsilon}(\relation,A,B)$ of
minimal $A,B$-separators (line~\ref{algline:MVDAlgMinePair}): we
describe this step in Sec.~\ref{sec:minimal:separators}.  Then, for
each $X \in \minsep_{\varepsilon}(\relation,A,B)$, it computes
$\fullMVDs_{\varepsilon}(\relation,X,A,B)$
(line~\ref{algline:MVDAlgFullMVDs}): we describe this step in
Sec.~\ref{sec:getFullMVDs}.  Finally, the algorithm returns their
union, $\M_\varepsilon$.  Both steps require access to an oracle
$\entropyAlg(X)$ for computing the entropy $H(X)$, according to
Eq.~\eqref{eq:jointEntropy}, where $H$ is the entropy associated with
the empirical distribution over $\relation$. We describe the
implementation and optimization of $\entropyAlg(X)$ in
Section~\ref{sec:ComputeEntropies}.

\subsection{Discovering the  Minimal Separators}

\label{sec:minimal:separators}

We describe here how we compute all minimal $A,B$-separators,
$\minsep_{\varepsilon}(\relation,A,B)$
(line~\ref{algline:MVDAlgMinePair} of $\MVDAlg$).  One possible way to
do this could be to iterate over sets $X$ top down, because it enables pruning: if $X$ is {\em not} an $A,B$-separator, then neither
is any subset of $X$, by~\eqref{eq:downward:2} in Prop.~\ref{prop:downwardClosureGeneral}. This suggests a top-down algorithm, which
starts from the largest set $X = \Omega\sm\set{A,B}$, and checks if it
is an $A,B$-separator.  If not, then none exists. Otherwise it
exhaustively searches over subsets of $X$, from largest to smallest,
returning the minimal (with regard to inclusion) sets that separate $A,B$.  Such an
exhaustive search will explore \e{all} separators, while we only want
to find the \e{minimal} ones.  Our approach takes
advantage of the fact that we need to find only the \e{minimal}
separators, and builds on a result by Gunopulos et
al.~\cite{DBLP:journals/tods/GunopulosKMSTS03}.

Let $\bC=\set{C_1,\dots,C_m}$ be a set of distinct subsets of
$\Omega$.  A set $D \subset \Omega$ is a \e{transversal} of $\bC$ if
$D \cap C_i \neq \emptyset$ for every $C_i \in \bC$. For a set
$D \subseteq \Omega$, we denote by $\comp{D}$ the complement set
$\Omega\sm D$.

\begin{theorem}\label{thm:transversals}
  Let $\bC=\set{C_1,\dots,C_n}$ denote a set of minimal $A,B$
  separators in $\relation$.  Then there exists a minimal
  $A,B$-separator $X \not\in \bC$ iff there exists a minimal
  transversal $D$ of $\bC$ such that $\comp{D}$ is an $A,B$-separator.
%   Let $\bS=\set{S_1,\dots,S_n}$ denote a set of minimal $A,B$
%   separators in $\relation$ (w.r.t $\varepsilon$).  There exists a
%   minimal $AB$ separator $S$ in $\relation$ such that $S \notin \bS$
%   if and only if there exists a minimal transversal $D$ of $\bS$ such
%   that $\comp{D}$ is an $AB$ separator in $\relation$ (w.r.t
%   $\varepsilon$), and $S \subseteq \comp{D}$.
\end{theorem}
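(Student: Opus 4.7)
The plan is to establish both implications using as the central ingredient a monotonicity property for $A,B$-separators. Specifically, I would first observe that if $X$ is an $A,B$-separator and $X \subseteq X' \subseteq \Omega \setminus \set{A,B}$, then $X'$ is also an $A,B$-separator. This follows from inequality \eqref{eq:downward:2} of Proposition~\ref{prop:downwardClosureGeneral}: starting from a witnessing $\varepsilon$-MVD $X \mvd Y \,|\, Z$ with $A \in Y$ and $B \in Z$, one may move any subset of $(Y \cup Z) \setminus \set{A,B}$ into the key and obtain an MVD whose $\J$-value is no larger, hence still an $\varepsilon$-MVD that separates $A$ from $B$. I would record this as a small monotonicity lemma before starting the main argument.

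For the forward direction, assume $X$ is a minimal $A,B$-separator with $X \notin \bC$. Two simple observations show that $X$ is inclusion-incomparable with every $C_i \in \bC$: if $C_i \subsetneq X$ then $X$ is not minimal; if $X \subsetneq C_i$ then $C_i$ is not minimal; and $X = C_i$ is excluded since $X \notin \bC$. In particular $C_i \not\subseteq X$ for every $i$, so $\comp{X}$ meets every $C_i$, and hence $\comp{X}$ is a transversal of $\bC$. I would then extract a minimal transversal $D \subseteq \comp{X}$ of $\bC$ (obtained by greedily removing redundant elements while preserving the transversal property). Since $A, B \notin X$, we have $\set{A,B} \subseteq \comp{X}$, and the monotonicity step ensures that $\comp{D} \supseteq X$ is still an $A,B$-separator.

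For the backward direction, assume $D$ is a minimal transversal of $\bC$ with $\comp{D}$ an $A,B$-separator. I would choose any inclusion-minimal $A,B$-separator $X \subseteq \comp{D}$; such an $X$ exists because the family of $A,B$-separators contained in $\comp{D}$ is non-empty (it contains $\comp{D}$) and finite. I claim $X \notin \bC$: if $X = C_j$ for some $j$, then $C_j = X \subseteq \comp{D}$, so $D \cap C_j = \emptyset$, contradicting the transversal property of $D$. Hence $X$ is a minimal $A,B$-separator outside $\bC$, as required.

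The main delicate point is the bookkeeping for the elements $A$ and $B$. Because no $C_i$ contains $A$ or $B$, those attributes are never needed to hit a hyperedge, so a strictly inclusion-minimal transversal of $\bC$ in $\Omega$ would omit them and then $\comp{D}$ would fail to be a valid separator. The cleanest fix is to adopt the convention (used implicitly throughout this section) that transversals live in the universe $\Omega \setminus \set{A,B}$, so that $\comp{D}$ automatically avoids $\set{A,B}$; equivalently, one may augment $\bC$ with the singletons $\set{A}, \set{B}$ before minimizing. Either convention makes the monotonicity step unambiguous and keeps both directions of the equivalence tight; aside from this convention, the rest of the proof is routine bookkeeping on top of the key incomparability observation used in the forward direction.
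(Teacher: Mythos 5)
Your proof is correct and follows essentially the same route as the paper's: the forward direction uses the incomparability of a new minimal separator with every $C_i$ to show $\comp{X}$ is a transversal and then shrinks it to a minimal one, and the backward direction extracts a minimal separator $X \subseteq \comp{D}$ and rules out $X \in \bC$ via the transversal property. Your remark about the universe for complements and transversals needing to exclude $\set{A,B}$ is a legitimate bookkeeping point that the paper's proof glosses over, and your proposed convention resolves it without changing the argument.
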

\begin{proof}
  \textbf{only if.}  Since $D$ is a transversal of $\bC$ then:
  \begin{equation}
    \bigwedge_{i=1}^n\left(C_i\cap D\neq \emptyset\right) \Longleftrightarrow
    \bigwedge_{i=1}^n(\comp{D}\not\supseteq C_i) \label{eq:transversal}
  \end{equation}
  Since $\comp{D}$ is an $A,B$ separator, there exists some minimal
  separator $X \subseteq \comp{D}$.
  Assume, by contradiction, that $X \supseteq C_i$ for some $C_i \in \bC$. Then
  $\comp{D} \supseteq X \supseteq C_i$, contradicting
  \eqref{eq:transversal}.

  \textbf{if.}  Since $X$ is a minimal $A,B$ separator that is not in
  $\bC$, then $\bigwedge_{i=1}^n(X \not\supseteq C_i)$,
  meaning that $\comp{X}$ is a transveral of $\bC$.  Then any minimal
  transversal $D \subseteq \comp{X}$ satisfies the claim.
\end{proof}
Algorithm $\minSepAlg$ (Fig.~\ref{alg:MineAllMinSeps}) for discovering all minimal $A,B$ separators, $\minsep_{\varepsilon}(\relation,A,B)$ is based on Theorem~\ref{thm:transversals}, 
 and proceeds as follows:
\begin{enumerate}
\item Initialize $\bC$ with a single minimal  $A,B$-separator (Line~\ref{algline:addToS0}-\ref{algline:addToS1}). \label{item:init}
\item Iterate over all minimal transversals $D$ of $\bC$ (Line~\ref{algline:enumerateMinTransversals}):
\item If $\comp{D}$ separates $A,B$ (Line~\ref{algline:DCompSeparates}), then: \label{item:isSeparator}
  \begin{enumerate}
  \item Find any minimal $A,B$ separator $X \subseteq \comp{D}$ (Line~\ref{algline:reduceMinSep}).\label{item:reduce}
  \item $\bC \gets \bC \cup \set{X}$.
  \end{enumerate}
\end{enumerate}

\eat{
Unlike the top-down algorithm, $\minSepAlg$~ starts by greedily
discovering one minimal separator.  Then, once it has a collection
$\bC$ of minimal separators, the transversal property allows it to
narrow the search to sets that are guaranteed to contain a new,
not-yet-discovered minimal separator.  As the set $\bC$ grows, so does
the size of the transversals $D$, and the search space $\comp{D}$ for
the next minimal separator is further reduced.
}
The function $\algname{ReduceMinSep}$ called in lines~\ref{algline:addToS0a}
and~\ref{algline:reduceMinSep} takes a separator ($\Omega\sm\set{A,B}$ or $\comp{D}$
respectively) and finds \e{any} subset that is a minimal separator;
this is done greedily in $\algname{ReduceMinSep}$ (Fig.~\ref{alg:ReduceToMinAPSep}).  The
function $\algname{getFullMVDs}$ called in
line~\ref{algline:getFullMVDs:call1} of $\algname{MineMinSeps}$, and in
line~\ref{algline:getFullMVDs:call2} of $\algname{ReduceMinSep}$,
takes as input an attribute set $X$, a pair of attributes $A,B$, and a threshold $\varepsilon$, and computes full
$\varepsilon$-MVDs with key $X$ that separate $A,B$; a parameter $K>0$
is used to limit the number of full MVDs returned, and here we set $K=1$
because we only check if one exists; in
line~\ref{algline:MVDAlgFullMVDs} of the main algorithm
(Fig.~\ref{alg:MVDMiner}) we set $K=\infty$.

\begin{algseries}{h}{Given a set $X \subset \Omega$, and a pair $(A,B)\in \Omega\sm X$, find a subset $S \subseteq X$ s.t. $S$ is a minimal $A,B$-separator in $R$. \label{alg:ReduceToMinAPSep}}
	\small
	\begin{insidealg}{ReduceMinSep}{$\varepsilon$, $X$, (A,B)}
		\STATE Let $p=X_1,\dots,X_m$ be a predefined ordering of $X$.
		\STATE $S \gets X$
		\FORALL{$i=1$ to $m$}		
			\STATE $M_i \gets \algname{getFullMVDs}(S\sm\set{X_i},\varepsilon, (A,B), 1)$\label{algline:getFullMVDs:call2}
			\IF{$M_i \neq \emptyset$}
				\STATE $S \gets S\sm \set{X_i}$ \label{algline:removeAtt}
			\ENDIF
		\ENDFOR
		\STATE return $S$
	\end{insidealg}
	
\end{algseries}

\begin{algseries}{t}{Given a relation $\relation$ with schema $\Omega$, two attributes $A,B\in \Omega$, and a threshold $\varepsilon$ enumerate all minimal $A,B$-separators in $\relation$.\label{alg:MineAllMinSeps}}
	\small
	\begin{insidealg}{\minSepAlg}{$\relation$, $\Omega$, $\varepsilon$, $(A,B)$}		
		\STATE $\bC\gets \emptyset$
		\STATE $X\gets nil$
		\IF{$I(A;B|\Omega\sm\set{A,B}) \leq  \varepsilon$ $\set{\text{by } \entropyAlg}$} \label{algline:addToS0}
			\STATE $X \gets\algname{ReduceMinSep}(\varepsilon,\Omega\sm\set{A,B},(A,B))$ \label{algline:addToS0a}
			\STATE $\bC \gets \bC \cup \set{X}$ \label{algline:addToS1}	
		\ELSE
			\STATE Return $\emptyset$
		\ENDIF		
		\WHILE{$\left(D\gets \algname{nextMinTransversal(\bC)}\right) \neq nil$} \label{algline:enumerateMinTransversals}
			\STATE $\comp{D} \gets \Omega\sm D $				\label{algline:startTransversalLoop}
			\STATE $\phi \gets \algname{getFullMVDs}(\comp{D},\varepsilon,(A,B),1)$\label{algline:getFullMVDs:call1} %\COMMENT{When \algname{getFullMVDs} is called with $K=1$ returns either an empty set of single full MVD}
			\IF{$\phi \neq \emptyset$} \label{algline:DCompSeparates}
			%\STATE $\phi \gets \fullMVDs(\comp{D})[1]$ 
				\STATE $X \gets \algname{ReduceMinSep}(\varepsilon, \comp{D}, (A,B))$ \label{algline:reduceMinSep}
				\STATE $\bC \gets \bC \cup \set{X}$ \label{algline:addToS2}
			\ENDIF			
		\ENDWHILE	\label{algline:endWhile}
		\STATE return $\bC$	
	\end{insidealg}	
\end{algseries}
The only sets of attributes returned in $\algname{MineMinSeps}$ are minimal $AB$-separators returned by $\algname{ReduceMinSep}$ in lines~\ref{algline:addToS0a} and~\ref{algline:reduceMinSep}.
The proof of completeness (i.e., the algorithm returns all minimal $AB$-separators) follows techniques similar to those by Gunopulos
et al.~\cite{DBLP:journals/tods/GunopulosKMSTS03}, and is given in the full
version of the paper:

\def\completenessTheorem{
	Algorithm $\algname{MineMinSeps}$ in Figure~\ref{alg:MineAllMinSeps} enumerates all minimal $A,B$-separators in $R$.	
}
\begin{theorem}\label{thm:completeness}
	\completenessTheorem
\end{theorem}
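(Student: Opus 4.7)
The plan is a standard soundness--completeness argument. Soundness --- that every set inserted into $\bC$ is a minimal $A,B$-separator --- follows from $\algname{ReduceMinSep}$. Its loop maintains the invariant that $S$ is an $A,B$-separator (the guard on line~\ref{algline:removeAtt} uses $\algname{getFullMVDs}$ to certify that dropping $X_i$ preserves separation), and no attribute is kept that could be removed; hence at exit $S$ is an $A,B$-separator no single-element subset of which separates $A,B$, i.e.\ it is inclusion-minimal.

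For completeness, let $\bC^*$ denote the value of $\bC$ upon termination of $\algname{MineMinSeps}$, and suppose toward contradiction that some minimal $A,B$-separator is missing from $\bC^*$. Applying Theorem~\ref{thm:transversals} to $\bC^*$, there exists a minimal transversal $D^*$ of $\bC^*$ whose complement $\comp{D^*}$ is an $A,B$-separator. The plan is to argue that the loop in lines~\ref{algline:enumerateMinTransversals}--\ref{algline:endWhile} eventually emits $D^*$; when that happens, line~\ref{algline:getFullMVDs:call1} finds a full $\varepsilon$-MVD (since $\comp{D^*}$ separates $A,B$), $\algname{ReduceMinSep}$ extracts some minimal $A,B$-separator $X \subseteq \comp{D^*}$, and $X$ is inserted into $\bC$. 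Because $X \cap D^* = \emptyset$ while every $C \in \bC^*$ meets $D^*$ (as $D^*$ is a transversal of $\bC^*$), each $C \in \bC^*$ contains an element of $D^*$ absent from $X$, so $X \neq C$ for every $C \in \bC^*$; hence $X \notin \bC^*$, contradicting that $\bC^*$ is the final value of $\bC$.

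The main obstacle is the premise that $D^*$ is indeed emitted. Since $\bC$ is monotonically growing during the loop, $\algname{nextMinTransversal}$ is being invoked against a moving hypergraph, and one must show that upon termination every minimal transversal of the \emph{final} $\bC^*$ has been reported at least once. The plan is to import the incremental transversal-enumeration framework of Gunopulos et al.~\cite{DBLP:journals/tods/GunopulosKMSTS03}: after inserting a new set $X$ into $\bC$, every minimal transversal of $\bC \cup \{X\}$ either was already a minimal transversal of $\bC$ (and thus previously emitted) or minimally extends one by a single element of $X$. Chaining this step along the finite sequence of insertions made by the algorithm delivers the required invariant and closes the argument.
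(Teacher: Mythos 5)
Your proof is correct in substance but takes a genuinely different route from the paper's. The paper proves completeness by a backwards induction over a reverse-lexicographic ordering of the minimal separators (induced by the fixed attribute order $p$ used in $\algname{ReduceMinSep}$): for a missing separator $S$ it exhibits a minimal transversal $D$ with $S \subseteq \comp{D}$ and then argues that the greedy reduction applied to $\comp{D}$ returns \emph{exactly} $S$, because deleting any attribute of $S$ would leave a lexicographically larger minimal separator inside $\comp{D}$, which by the induction hypothesis already lies in $\bC$ and would contradict $D$ being a transversal. You instead argue by contradiction on the terminal collection $\bC^*$, and you only need that $\algname{ReduceMinSep}(\comp{D^*})$ returns \emph{some} minimal separator $X \subseteq \comp{D^*}$: then $X \cap D^* = \emptyset$ forces $X \notin \bC^*$, while the algorithm inserts $X$ into $\bC \subseteq \bC^*$. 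This dispenses entirely with the lexicographic bookkeeping and with any analysis of which separator the greedy reduction happens to output; what the paper's finer-grained induction buys is a per-separator statement (each specific $S$ is emitted via a specific transversal), but for the theorem as stated your terminal-state argument suffices and is cleaner.

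The one loose end --- which you flag honestly --- is the premise that every minimal transversal of the \emph{final} $\bC^*$ is emitted by $\algname{nextMinTransversal}$ even though $\bC$ grows during the loop. The paper's own proof has exactly the same dependence (its transversal $D$ is likewise assumed to be processed) and leaves the enumerator unspecified, so you are not worse off; but your proposed repair is not airtight as stated. After inserting $X$, a minimal transversal of $\bC \cup \set{X}$ that was already a minimal transversal of $\bC$ need not have been ``previously emitted,'' because $X$ is inserted as soon as the \emph{first} productive transversal is found, not after the enumeration of the minimal transversals of $\bC$ has been exhausted. The clean way to close this is to take as the contract of $\algname{nextMinTransversal}$ that it returns \emph{nil} only when every minimal transversal of the \emph{current} $\bC$ has already been returned; since the loop exits only on \emph{nil}, at which point $\bC = \bC^*$, every minimal transversal of $\bC^*$ --- in particular $D^*$ --- was returned at some earlier call, which is all your argument needs.
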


We now analyze the runtime between consecutive discoveries of minimal $A,B$-separators in $\minSepAlg$.
We let $\Omega$ be a finite set of cardinality $n$, and let $\bC \subseteq 2^\Omega$ be a finite set of sets. The problem of discovering all minimal transversals of $\bC$ is called the \e{hypergraph transversal problem}~\cite{KHACHIYAN20062350}.
The theoretically best known algorithm for solving the hypergraph transversal problem is due to Fredman and Khachiyan~\cite{FREDMAN1996618} and has a quasi incremental-polynomial delay of $poly(n)+m^{O(\log^2m)}$ where $m=|\bC|+n$. Note the dependence on the size of the discovered minimal separators $|\bC|$. We denote by $T_{minTrans}(n,\bC)$ the delay of the minimal transversal algorithm.
However, not every minimal transversal $D$ leads to the discovery of a minimal separator if $\comp{D}$ does not separate $A$ and $B$ (i.e., $\phi=\emptyset$ in line \ref{algline:DCompSeparates} of $\algname{MineMinSeps}$). In the full version of this paper we show that the number of minimal transversals processed in lines \ref{algline:startTransversalLoop}-\ref{algline:endWhile} before a new minimal separator is discovered (e.g., in line~\ref{algline:reduceMinSep}), or before the loop exists, is bounded by $n\cdot |\bC|$. This allows us to formalize the delay between the discovery of minimal $A,B$-separators. We denote by $T(\algname{getFullMVDs})$ the runtime of $\algname{getFullMVDs}$, which we analyze in the next section.

\begin{corollary}\label{corr:Delay}
	Algorithm $\algname{MineAllMinseps}$ enumerates the minimal $A,B$-separators in $R$ with a delay of $O(n\cdot |\bC|\cdot T_{minTrans}(n,\bC)\cdot T(\algname{getFullMVDs}))$, where $n=|\Omega|$.
\end{corollary}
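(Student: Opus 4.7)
The plan is to bound the work performed between two consecutive reports of a minimal $A,B$-separator from $\algname{MineMinSeps}$, and to assemble the claimed delay from the costs of the three primitives used inside the while loop of Fig.~\ref{alg:MineAllMinSeps}: the transversal oracle, the oracle call to $\algname{getFullMVDs}$, and the reduction step $\algname{ReduceMinSep}$.

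First, I would handle the initialization (lines \ref{algline:addToS0}-\ref{algline:addToS1}), which fires at most once per pair $(A,B)$: it performs one entropy-based test, one call to $\algname{ReduceMinSep}(\varepsilon, \Omega\sm\set{A,B}, (A,B))$, and inserts the resulting set into $\bC$. Inspection of Fig.~\ref{alg:ReduceToMinAPSep} shows that $\algname{ReduceMinSep}$ issues one call to $\algname{getFullMVDs}$ per attribute of $\Omega$, so this initialization contributes $O(n \cdot T(\algname{getFullMVDs}))$; this bounds the time to the first discovery.

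Next, I would analyze one iteration of the while loop (lines \ref{algline:enumerateMinTransversals}-\ref{algline:endWhile}). The call $\algname{nextMinTransversal}(\bC)$ costs $T_{minTrans}(n,\bC)$ by Fredman-Khachiyan, and the membership test in line \ref{algline:DCompSeparates} costs $T(\algname{getFullMVDs})$ (invoked with bound $K{=}1$). Hence, each loop iteration that does \emph{not} produce a new separator costs $O(T_{minTrans}(n,\bC) + T(\algname{getFullMVDs}))$. The key non-trivial fact I would cite here is the bound stated just above the corollary (proved in the full version): between two consecutive insertions into $\bC$, at most $n\cdot |\bC|$ minimal transversals are examined in lines \ref{algline:startTransversalLoop}-\ref{algline:endWhile}. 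Multiplying, the total work spent scanning transversals between two successive discoveries is $O(n \cdot |\bC| \cdot (T_{minTrans}(n,\bC) + T(\algname{getFullMVDs})))$.

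Finally, once a transversal whose complement separates $A,B$ is found (line \ref{algline:DCompSeparates}), the algorithm calls $\algname{ReduceMinSep}(\varepsilon, \comp{D}, (A,B))$, which as above costs $O(n \cdot T(\algname{getFullMVDs}))$, and then emits the resulting minimal $A,B$-separator. Adding this to the transversal-scanning bound, the total delay between two consecutive outputs is
\[
  O\!\bigl(n\cdot |\bC|\cdot (T_{minTrans}(n,\bC) + T(\algname{getFullMVDs})) + n\cdot T(\algname{getFullMVDs})\bigr).
\]
Since $n,|\bC|,T_{minTrans}(n,\bC),T(\algname{getFullMVDs})\geq 1$, the sum is dominated by the product, yielding the stated $O(n\cdot|\bC|\cdot T_{minTrans}(n,\bC)\cdot T(\algname{getFullMVDs}))$ bound. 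The main obstacle is really the combinatorial lemma that at most $n\cdot|\bC|$ transversals are seen between discoveries, which is deferred to the full version; once that is in hand, everything else is a mechanical accounting of the per-iteration costs and the cost of the one reduction call that follows a successful separation test.
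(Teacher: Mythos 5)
Your proposal is correct and follows essentially the same route as the paper: the paper's own justification is precisely the combination of the deferred combinatorial bound (at most $n\cdot|\bC|$ minimal transversals are examined between consecutive discoveries, proved in the appendix via the negative border $\BD^{-}(\bC)$) with the per-transversal cost of one $T_{minTrans}(n,\bC)$ generation step and one $\algname{getFullMVDs}$ test, the trailing $\algname{ReduceMinSep}$ call being absorbed into the product. Your accounting of the initialization and of the sum-versus-product absorption is a harmless elaboration of what the paper leaves implicit.
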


\subsection{Discovering the Full MVDs}\label{sec:getFullMVDs}

Returning to our main algorithm, $\MVDAlg$, we have shown how to compute $\minsep_{\varepsilon}(\relation,A,B)$,
the set of minimal $A,B$ separators in $\relation$.  Next, for each minimal
$A,B$ separator $X\in \minsep_{\varepsilon}(\relation,A,B)$, we 
compute all full MVDs with key $X$ that separate $A$ and $B$, i.e. the
set $\fullMVDs_{\varepsilon}(\relation,X,A,B)$; this is
line~\ref{algline:MVDAlgFullMVDs} of $\MVDAlg$.  Recall that \e{full}
means that the MVD cannot be further refined.

% In this section we present an algorithm for discovering the set
% $\fullMVDs_\varepsilon(S)$ for a given set of attributes $S$.

The algorithm $\algname{getFullMVDs}$ starts by checking the most
refined MVD with key $X$, namely $\varphi=X\mvd Y_1|\dots|Y_n$ where
$Y_1, \ldots, Y_n$ are all attributes not in $X$ (including $A,B$). If
$\J(\varphi) \leq \varepsilon$ then we are done. Otherwise, the
algorithm considers all possible ways to \e{merge} two dependents,
while keeping $A$ and $B$ in different dependents; i.e. it tries
$X\mvd Y_1Y_2|\dots|Y_n, X\mvd Y_1Y_3|Y_2|\dots|Y_n,$ etc. We denote the MVD that results from merging
dependents $Y_i$ and $Y_j$ in $\components(\varphi)$
by $\mergeFunc_{ij}(\varphi)$.  Since $\varphi$
refines $\mergeFunc_{ij}(\varphi)$ then, by
Proposition~\ref{prop:naiveEnumJDs}, it holds that
$\J(\mergeFunc_{ij}(\varphi))\leq \J(\varphi)$.  This procedure for searching for a full $\varepsilon$-MVD can be viewed as a graph traversal algorithm where every
node $\phi$ is an $\varepsilon$-MVD candidate with key $X$, dependents $Z_1,\dots,Z_k$, and its neighbors $\Nbr(\phi)$ are
the $\varepsilon$-MVD candidates:
\begin{equation}\label{eq:tracersalAlgNbrs}
 \Nbr(\phi)\eqdef \set{\mergeFunc_{ij}(\phi) : Z_i, Z_j \in \components(\phi), A,B \notin Z_iZ_j}
\end{equation}
Clearly, if $A,B$ were separated in $\phi$, then they remain separated
in every MVD in $\Nbr(\phi)$. We present the algorithm
as a depth-first traversal, which is how we implemented it. The
pseudocode is presented in Figure~\ref{alg:NaiveMineAllJDs}.

\begin{algseries}{t}{Returns a set of at most $K$ full MVDs with key $S$ that approximately
		hold in $R$ (w.r.t $\varepsilon$) in which $A$ and $B$ are in distinct components. \label{alg:NaiveMineAllJDs}}
	\small
	\begin{insidealg}{getFullMVDs}{$S$, 	$\varepsilon$, $(A,B)$, $K$}
		\STATE $\Pm \gets \emptyset$ \COMMENT{Output set}
		\STATE $\Q \gets \emptyset$ \COMMENT{$\Q$ is a stack}
		\STATE $\phi_0=S\mvd X_1|\dots|X_n$ where $X_i$ are singletons.
		\STATE $\Q.\push(\phi_0)$
		\WHILE {$\Q \neq \emptyset$ \and $|\Pm| < K$}
			\STATE $\varphi \gets \Q.\pop()$
			\STATE Compute $\J(\varphi)$  \COMMENT{using $\entropyAlg$}
			\IF{$\J(\varphi) \leq \varepsilon $}
				\STATE $\Pm \gets \Pm \cup \set{\varphi}$
%				\IF{$|\Pm|=K$}
%				\STATE break
			%\ENDIF
			\ELSE				
				\FORALL{$\phi \in \Nbr(\varphi)$}	
				\STATE $\Q.\push(\phi)$	\COMMENT{See~\eqref{eq:tracersalAlgNbrs}}
		\ENDFOR	
		\ENDIF
		\ENDWHILE		
		\STATE \textbf{return} $\Pm$
	\end{insidealg}
	
\end{algseries}

\subsubsection{An Optimization to $\algname{getFullMVDs}$}
\label{subsubsec:an:optimization}
In the worst case, \eat{if $S$ is \e{not} an $A,B$-separator then} Algorithm~$\algname{getFullMVDs}$ will traverse the  search space of possible ways to partition $n$ attributes into  $k\in \set{2,\dots,n-1}$ sets, and there can be $O(\frac{k^n}{k!})$ such such partitions \footnote{These are \e{Stirling numbers of the second kind}: https://en.wikipedia.org/wiki/Stirling\_numbers\_of\_the\_second\_kind}. While, in general, this is unavoidable, we implemented an optimization, described in the complete version of this paper, that leads to a significant reduction in the search space.

\subsection{Computing Entropies Efficiently}
\label{sec:ComputeEntropies}
We describe the procedure $\entropyAlg$ for
calculating the joint entropy of a set of attributes. The efficiency
of this procedure is crucial to the performance of $\MVDAlg$, which
needs to repeatedly compute mutual information values $I(Y;Z|X)$, and
each such computation requires four entropic values
$H(XY)$, $H(XZ)$, $H(XYZ)$, and $H(X)$.  
Repeatedly computing values of the form $H(X_\alpha)$, for $\alpha \subseteq [n]$
requires multiple scans over the data that resides in external memory.
\eat{GROUP-BY COUNT($*$) query over the data (see Sec.~\ref{sec:it}), which
means access to external memory.} \eat{Since the algorithm computes many such
values, optimizing the entropy calculation is critical to
its performance.}  \eat{ and it is critical to
  optimize this inner loop.  }

We build on ideas introduced in the PLI cache data structure~\cite{DBLP:journals/pvldb/0001N18,DBLP:journals/cj/HuhtalaKPT99}, and reduce the problem of computing $H(X_\alpha)$ to a main memory
join-group-by query.  To describe the algorithm, we repeat here the entropy
formula~\eqref{eq:jointEntropy} for convenience:
\small{
\begin{equation}\label{eq:jointEntropyEficient}
H(X_\alpha)\eqdef \log N-\frac{1}{N}\sum_{x_\alpha {\in} \D_\alpha }|\relation(X_\alpha{=}x_\alpha)\log |\relation(X_\alpha{=}x_\alpha)|
\end{equation}
}
The algorithm uses two ideas: (1) if $x_\alpha$ is a singleton
(i.,e., its frequency $|\relation(X_\alpha{=}x_\alpha)|{=}1$) then it can be ignored
because its contribution to the total entropy
in~\eqref{eq:jointEntropyEficient} is 0 (due to the logarithm), and
(2) given two relations mapping the distinct values of attribute sets $X_\alpha$, and $X_\beta$, respectively, to the tuple ids in the relation $R$ that contain them, 
\eat{
if we store for each computed $H(X_\alpha)$ a relation mapping the
values $x_\alpha$ to the tuple IDs in the relation $R$} then we can
derive this mapping for $X_\alpha \cup X_\beta$ by simply joining the
two mappings on the tuple IDs.  Ignoring singleton valuations makes
these mappings highly compressed, enabling us to store them in main
memory and perform the join using a main memory database system.  We
used the in-memory database $H2$~\cite{h2-database}.  We describe the
details next.
We let $Hash$ denote a hash function. In our implementation we use the
hash function provided by the database system.  
\eat{
Each $x_\alpha$ in
\eqref{eq:jointEntropyEficient} represents a partial tuple,
consisting of strings, integers, etc.  Applying $Hash$ reduces
it to a single long integer.  We can do this because the entropy
$H(X_\alpha)$ does not depend on the actual values $x_\alpha$, only on
their counts.
}

Alg. $\entropyAlg$ maintains two sets of relations indexed by
$\alpha {\subseteq} [n]$: $\CntTbl_\alpha(\valuation, \cnt)$ and
$\TidTbl_\alpha(\valuation,\tid)$ defined as:
\small{
\begin{align*}
&\CntTbl_\alpha{=}\setof{(Hash(x_\alpha), \cnt)}{\cnt = |\relation(X_\alpha=x_\alpha)|, \cnt > 1}\\
&\TidTbl_\alpha{=}\setof{(Hash(x_\alpha), t[\tid])}{t{\in} R, t[X_\alpha]{=}x_\alpha, Hash(x_\alpha) {\in} \proj{\valuation}(\CntTbl_\alpha)}
\end{align*}
}%
\normalsize{
We compute $H(X_\alpha)$ by scanning table $\CntTbl_\alpha$.
%because the sum in Eq.\eqref{eq:jointEntropyEficient} is given by:
}
\eat{
\begin{align*}
& \select\ \sql{sum}(\cnt * \log(\cnt))\text{ } \from \text{ } \CntTbl_\alpha
\end{align*}
}
The algorithm starts by computing two sets of relations: (1)
$\set{\CntTbl_{\set{i}}}$ and (2)~$\set{\TidTbl_{\set{i}}}$
for every $i\in [n]$. 
% In practice, $\valuation$ is the dictionary
% encoding of the corresponding value in $\relation$. 
Assume that we have computed the relations $\CntTbl_\alpha$,
$\CntTbl_\beta$ and $\TidTbl_\alpha$, $\TidTbl_\beta$ for some subsets
$\alpha,\beta \subset [n]$ such that $\alpha \cap \beta =\emptyset$.
We first compute $\CntTbl_{\alpha \cup \beta}$ as:
\begin{align*}
&\select~ Hash(A.\valuation,B.\valuation)~\as~\valuation, \sql{count}(*)~\as~\cnt\\
&\from~\TidTbl_\alpha~A,\TidTbl_\beta~B \\
&\where~ A.\tid=B.\tid\\
&\groupby~Hash(A.\valuation,B.\valuation)~\having~\sql{count}(*)>1
\end{align*}
Next, we compute $\TidTbl_{\alpha\cup \beta}$ as:
\begin{align*}
&\select~Hash(A.\valuation,B.\valuation)~\as~\valuation, A.\tid~\as~\tid \\
&\from~\TidTbl_\alpha~A,\TidTbl_\beta~B,\CntTbl_{\alpha\cup\beta}~Z\\
&\where~ A.\tid=B.\tid~\sql{and}~Hash(A.\valuation,B.\valuation)=Z.\valuation
\end{align*}
Pruning the singleton values makes this technique very effective,
because as we move up the lattice from smaller $\alpha$'s to larger
$\alpha$'s, many more tuples $x_\alpha$ are unique in the data, and
the tables $\CntTbl_\alpha$ and $\TidTbl_\alpha$ become smaller.

\begin{example}
  For a simple illustration, Fig.~\ref{fig:entropyAlgEx} shows the
  tables generated for a $3$-attribute relation $\relation$. Both types of relations only contain values corresponding to
  non-singleton valuations in $\relation$.
\end{example}

However, even with our compression, generating and storing all $2^n-1$
tables $\CntTbl_\alpha$, and $\TidTbl_\alpha$ is intractable.
Instead, we perform the following optimization.  Fix a parameter $L$ (in our implementation we chose $L=10$), and partition the set
$\Omega$ into $\left \lceil \frac{n}{L}\right \rceil$ disjoint subsets
$\Omega_1, \Omega_2, \ldots$ each of size at most $L$.  For each $i$, compute
the tables $\TidTbl_\alpha$ and $\CntTbl_\alpha$ for all subsets
$\alpha \subseteq \Omega_i$; thus the total number of tables
precomputed is $2\left\lceil \frac{n}{L}\right\rceil\cdot2^L$.  In order to compute $H(X_\alpha)$, we express
$\alpha = (\alpha \cap \Omega_1) \cup (\alpha \cap \Omega_2) \cup
\ldots$,
where each union is treated as explained above for
$\alpha \cup \beta$.  \eat{These tables are queried and joined on-demand for computing
the entropy of any set of attributes.}

\begin{figure}[ht]
	\small
\begin{minipage}[b]{1.0\linewidth}\centering
	\begin{minipage}{0.3\linewidth}\centering
	\begin{tabular}{cccc}
    \multicolumn{4}{l}{$\relation$} \\ \hline
	$\tid$ & A & B & C \\
	\hline
	$t_1$ & $a_1$ & $b_2$ & $c_3$\\
	$t_2$ & $a_2$ & $b_1$ & $c_1$\\
	$t_3$ & $a_2$ & $b_2$ & $c_2$\\
	$t_4$ & $a_3$ & $b_3$ & $c_3$\\
	$t_5$ & $a_3$ & $b_3$ & $c_4$
	\end{tabular} \\

	\begin{tabular}{cc}
		\multicolumn{2}{l}{$\CntTbl_{AB}$} \\ \hline
		$\valuation$ & \texttt{CNT} \\
		\hline
		$Hash(a_3,b_3)$ & 2
		%\caption{$\CntTbl_A$}
	\end{tabular} \\

	\begin{tabular}{cc}
		\multicolumn{2}{l}{$\TidTbl_{AB}$} \\ \hline
		$\valuation$ & \texttt{tid} \\
		\hline
		$Hash(a_3,b_3)$ & $t_4$ \\
		$Hash(a_3,b_3)$ & $t_5$
		%\caption{$\CntTbl_A$}
	\end{tabular}

	\end{minipage}
	~
	\begin{minipage}{0.7\linewidth}\centering
	
	\begin{tabular}{cc}
		\multicolumn{2}{l}{$\CntTbl_{A}$} \\ \hline
		$\valuation$ & \texttt{CNT} \\
		\hline
		$a_2$ & 2\\
		$a_3$ & 2
		%\caption{$\CntTbl_A$}
	\end{tabular}
	~
	\begin{tabular}{cc}
		\multicolumn{2}{l}{$\CntTbl_{B}$} \\ \hline
		$\valuation$ & \texttt{CNT} \\
		\hline
		$b_2$ & 2\\
		$b_3$ & 2
		%\caption{$\CntTbl_A$}
	\end{tabular}
	~
	\begin{tabular}{cc}
		\multicolumn{2}{l}{$\CntTbl_{C}$} \\ \hline
		$\valuation$ & \texttt{CNT} \\
		\hline
		$c_3$ & 2 \\
		&
		%	\caption{$\CntTbl_C$}
	\end{tabular}
	\\
	\begin{tabular}{cc}
		\multicolumn{2}{l}{$\TidTbl_{A}$} \\ \hline
		$\valuation$ & \texttt{tid} \\
		\hline
		$a_2$ & $t_2$\\
		$a_2$ & $t_3$ \\
		$a_3$ & $t_4$ \\
		$a_3$ & $t_5$
		%\caption{$\CntTbl_A$}
	\end{tabular}
	~
	\begin{tabular}{cc}
		\multicolumn{2}{l}{$\TidTbl_{B}$} \\ \hline
		$\valuation$ & \texttt{tid} \\
		\hline
		$b_2$ & $t_1$\\
		$b_2$ & $t_3$ \\
		$b_3$ & $t_4$ \\
		$b_3$ & $t_5$
		%\caption{$\CntTbl_A$}
	\end{tabular}
		~
	\begin{tabular}{cc}
		\multicolumn{2}{l}{$\TidTbl_{C}$} \\ \hline
		$\valuation$ & \texttt{tid} \\
		\hline
		$c_3$ & $t_4$\\
		$c_3$ & $t_4$ \\
		& \\
		&			
		%\caption{$\CntTbl_A$}
	\end{tabular}
\end{minipage}
\end{minipage}
\caption{$\entropyAlg$ example.}
\label{fig:entropyAlgEx}
\end{figure}

\begin{algseries}{h}{Generate Acyclic Schemas from $\M_\varepsilon$. \label{alg:asMiner}}
	\small
	\begin{insidealg}{\ASAlg}{$\M_\varepsilon$}
                \STATE $\texttt{schemes} = \emptyset$
		\STATE Construct the graph $G=\setof{(\phi,\psi)}{\phi, \psi \in \M_\varepsilon, \phi \sharp\psi}$
%		\STATE Enumerate maximal independent sets $\Q$ 
		\FORALL{$\Q \in \texttt{MaxIndependentSet}(G)$}
			\STATE  $\texttt{schemes} \gets \texttt{schemes} \cup \set{\algname{BuildAcyclicSchema}(\Q)}$
		\ENDFOR
		\STATE return $\texttt{schemas}$
	\end{insidealg}
\end{algseries}

\begin{algseries}{h}{Gets a set $\Q$ of pairwise compatible MVDs, and
    returns an acyclic schema. \label{alg:getAcyclicSchema}}
	\small
	\begin{insidealg}{BuildAcyclicSchema}{$\Q$}
		\STATE $\schema \gets \set{\Omega}$
		\STATE Sort $\Q$ by ascending order of key cardinality \COMMENT{e.g., $X\mvd A|B$ before $XY\mvd C|D$}
		\FORALL{$\phi \in \Q$}
			\STATE Let $\phi=X\mvd C_1|\dots|C_m$
			\STATE Let $\Omega_i \in \schema$ s.t. $X \subseteq \Omega_i$		
			\STATE $\textbf{D}_\phi \gets \set{C_jX\cap\Omega_i \mid j\in [i,m]}\sm\set{X}$
			\IF{$|\textbf{D}_\phi| \geq 2$} \label{algline:testRedundancy}
				\STATE Replace $\Omega_i \in \bS$ with $\textbf{D}_\phi$ \COMMENT{$\phi$ is \e{non-redundant}}
				\eat{
				\STATE Replace $\Omega_i \in \bS$ with $\set{(C_j\cap \Omega_i)\cup X\mid j\in[1,m]}$
				}
			\ENDIF
		\ENDFOR
		\STATE return $\schema$
	\end{insidealg}
\end{algseries}

\section{Enumerating Acyclic Schemas}\label{sec:EnumerateAcyclicSchemas}

In this section we present the second phase of \system: given the set
$\M_\varepsilon$ of full $\varepsilon$-MVDs (Eq.~\eqref{eq:m}),
generate acyclic $\varepsilon$-schemes.  The algorithm $\ASAlg$ is
shown in Fig.~\ref{alg:asMiner}.  It searches for subsets of MVDs
$\Q \subseteq \M_\varepsilon$, and reconstructs a schema from that
set.  The key to the algorithm's efficiency is our new definition of
compatibility:

% We begin by defining a
% condition named \e{pairwise compatibility} on sets of
% $\varepsilon$-MVDs that make up the support of a join tree
% $(\T,\jointreeMapFunction)$. Pairwise compatibility allows us to
% reduce the problem of enumerating acyclic $\varepsilon$-schemas to
% that of enumerating maximal independent sets in graphs.  Then, we show
% how to synthesize an acyclic schema from a set of pariwise compatible
% $\varepsilon$-MVDs. This gives us the following algorithm for
% enumerating acyclic schemas.
% \begin{itemize}
%         \item[] Enumerate sets $\Q$ of maximal pairwise compatible  MVDs \\
%                 (using~\cite{DBLP:journals/ipl/JohnsonP88,DBLP:journals/jcss/CohenKS08})
% 	\item[] For each such set $\Q$:
% 	\subitem Generate an acyclic schema with support $\Q$
% 	\subitem (i.e., $\algname{BuildAcyclicSchema}(\Q)$).
% \end{itemize}

\begin{definition}\label{def:CompatibleSJDs}
	Let $\phi_1=X \mvd A_1|\dots|A_m$ and $\phi_2=Y \mvd B_1|\dots|B_k$
	be two $\varepsilon$-MVDs.
	We say that $\phi_1$ and $\phi_2$ are \e{compatible} if there exist an $i \in \set{1,\dots,m}$, and $j \in \set{1,\dots,k}$ such that:
	\begin{enumerate}
		\item $Y \subseteq XA_i$, and $X \subseteq YB_j$. In this case we say that the two MVDs are \e{split-free}~\cite{DBLP:journals/tcs/Gucht88,DBLP:journals/tods/FaginMU82,DBLP:journals/tcs/Lakshmanan88,Beeri:1983:DAD:2402.322389}.
		\item There exist two distinct indexes $j_1,j_2\in \set{1,\dots k}$ such that $XA_i \cap B_{j_1} \neq \emptyset$, and $XA_i \cap B_{j_2} \neq \emptyset$. Likewise, there exist two distinct indexes $i_1,i_2\in \set{1,\dots m}$ such that $YB_j \cap A_{i_1} \neq \emptyset$, and $YB_j \cap A_{i_2} \neq \emptyset$.
	\end{enumerate}
        We write $\phi_1 \sharp \phi_2$ to denote the fact that
        $\phi_1, \phi_2$ are \e{in}compatible.
\end{definition}

We say that a set $\Q$ of $\varepsilon$-MVDs is \e{pairwise
  compatible} if every pair of $\varepsilon$-MVDs in $\Q$ is
compatible.  Recall that every join tree $\T$ with $m$ nodes defines a
set of $m-1$ MVDs called its \e{support} and denoted by $\MVD(\T)$.
%is not limited to the pairwise interaction between the MVDs. 

\def\CompatibleCompletenessThm{
	Let $\bS$ be an acyclic schema with join tree $(\T,\jointreeMapFunction)$. 
	Then the set $\MVD(\T)$ is pairwise compatible. 
}
\begin{theorem}\label{thm:equivalenceThm}
\CompatibleCompletenessThm
\end{theorem}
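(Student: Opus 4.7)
Pick any two distinct edges $e_1 = (u_1,v_1)$ and $e_2 = (u_2,v_2)$ of the join tree $\T$. Because $\T$ is a tree, removing both edges splits $\T$ into three connected subtrees, which I list along the unique path joining $e_1$ and $e_2$: call them $\T_L, \T_M, \T_R$, labeled so that $u_1 \in \T_L$, $v_1, u_2 \in \T_M$, and $v_2 \in \T_R$ (the middle piece is always non-empty; when the two edges share a vertex, $v_1 = u_2$ is the lone node of $\T_M$). Write the two associated MVDs as $\phi_{e_1} = X \mvd A_1 \mid A_2$ and $\phi_{e_2} = Y \mvd B_1 \mid B_2$ with $X = \chi(u_1)\cap\chi(v_1)$, $Y = \chi(u_2)\cap\chi(v_2)$, and bags $XA_1 = \chi(\T_L)$, $XA_2 = \chi(\T_M\cup\T_R)$, $YB_1 = \chi(\T_L\cup\T_M)$, $YB_2 = \chi(\T_R)$. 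Disjointness of the dependents (and hence well-formedness of the MVDs) follows from the running intersection property applied across each cut edge.

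I will verify Definition~\ref{def:CompatibleSJDs} with the witnesses $i = 2$ and $j = 1$. For condition~(1), since both $u_2$ and $v_2$ lie in $\T_M\cup\T_R$, their bags are subsets of $\chi(\T_M\cup\T_R) = XA_2$, so $Y = \chi(u_2)\cap\chi(v_2) \subseteq XA_2$; the symmetric argument gives $X \subseteq YB_1$. For condition~(2), I need $XA_2$ to meet two distinct $B_\ell$'s. Observe that $XA_2 \cap B_2 = \chi(\T_M\cup\T_R)\cap(\chi(\T_R)\setminus Y) \supseteq \chi(v_2)\setminus Y$, and the schema's non-redundancy requirement $\chi(u_2) \not\supseteq \chi(v_2)$ forces $Y = \chi(u_2)\cap\chi(v_2) \subsetneq \chi(v_2)$, so this set is non-empty. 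Likewise $XA_2 \cap B_1 \supseteq \chi(u_2)\setminus Y$, which is non-empty by $\chi(v_2)\not\supseteq\chi(u_2)$ and the containment $\chi(u_2)\subseteq\chi(\T_L\cup\T_M)$ coming from $u_2\in\T_M$. Interchanging the roles of the two MVDs shows that $YB_1$ meets both $A_1$ and $A_2$ by the same reasoning applied at the edge $e_1$.

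\textbf{Main obstacle.} Condition~(1) is essentially a restatement of the running intersection property — the separator of one edge is entirely contained in one of the ``sides'' induced by another edge — so the real work is condition~(2), which demands non-emptiness of the intersections \emph{after removing the keys $X$ and $Y$}. This is exactly the point at which one must invoke the schema-level non-redundancy assumption $\Omega_i \not\subseteq \Omega_j$; without it a bag could coincide with its neighbor's intersection, collapsing $\chi(v_2)\setminus Y$ (or $\chi(u_2)\setminus Y$) to the empty set and breaking condition~(2). A secondary bookkeeping subtlety is to handle uniformly both the case where $e_1$ and $e_2$ are disjoint (so $\T_M$ has at least two nodes) and the case where they share an endpoint (so $\T_M$ is a single node); the ``ordered path'' setup above handles both, because the arguments only use the inclusions $\chi(u_2),\chi(v_1) \subseteq \chi(\T_M)$, which hold whenever $u_2$ and $v_1$ are in $\T_M$.
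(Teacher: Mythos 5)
Your proof is correct and follows essentially the same route as the paper's: remove the two edges to obtain three subtrees ordered along the connecting path, observe that each key is contained in a bag lying in the appropriate side (condition~(1)), and check that the merged dependent straddles both dependents of the other MVD (condition~(2)). The only difference is one of detail: the paper simply asserts the splitting in condition~(2), whereas you explicitly justify the non-emptiness of $\chi(v_2)\setminus Y$ and $\chi(u_2)\setminus Y$ via the schema's incomparability requirement $\Omega_i\not\subseteq\Omega_j$, which is a worthwhile clarification but not a different argument.
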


Thus, it suffices to iterate over sets of pairwise compatible
$\varepsilon$-MVDs.  
% Theorem~\ref{thm:equivalenceThm} tells us that every
% $\varepsilon$-schema is supported by a set of pairwise compatible
% $\varepsilon$-MVDs.  Therefore, we enumerate all pairwise compatible
% sets of $\varepsilon$-MVDs, and synthesize an $\varepsilon$-acyclic
% schema from every set generated.
Specifically, our algorithm enumerates the \e{maximal} sets of
pairwise compatible $\varepsilon$-MVDs, and for this task we use a
graph algorithm from the literature.
%  since these lead to a higher degree of
% decomposition. Furthermore, any other $\varepsilon$-schema can be
% derived from one that is maximal by joining one or more relations.
%
% Let $\phi_1,\phi_2 \in \M_\varepsilon$, we denote by $\phi_1
% \sharp\phi_2$ the fact that $\phi_1,\phi_2$ are \e{not} pairwise compatible.
% We define
Define the graph $G(\M_\varepsilon,E)$ as follows:
\begin{equation}
E=\set{(\phi_1,\phi_2): \phi_1,\phi_2\in \M_\varepsilon \text{ and } \phi_1 \sharp \phi_2}
\end{equation}
By this definition every maximal independent set in $G$ corresponds to
a maximal set of pairwise compatible $\varepsilon$-MVDs.  We apply the
following result.
\begin{citedtheorem}{\cite{DBLP:journals/ipl/JohnsonP88,DBLP:journals/jcss/CohenKS08}}\label{thm:enumeration}
	Let $G(V,E)$ be a graph. The maximal independent sets of $G$ can be enumerated such that the delay between consecutive outputs is in $O(|V|^3)$.
\end{citedtheorem}

  \eat{Therefore, we enuemrate acyclic schemas with a delay in $O(\M^3+n^3)$.}
%In our implementation we avoid the recursion.

\eat{
\begin{algseries}{t}{Enumerates all maximal independent sets of a graph $G$. \label{alg:enumerateMaxIndSets}}
	\begin{insidealg}{generateMaxIndSets}{$G, G_{curr}$}		
		\FORALL{$u \in \nodes(G)\sm \nodes(G_{curr})$}
			\FORALL{Maximal independent sets $H$ of $G[\nodes(G_{curr})\cup\set{u}]$}
				\STATE $H^m \gets \algname{ExtendToMaxIndSet}(H,G)$
				\IF{$H^m \notin B$}
					\STATE $B \gets B\cup H^m$
					\STATE $\algname{generateMaxIndSets}(G, H^m)$
				\ENDIF
			\ENDFOR
		\ENDFOR	
	\end{insidealg}	
\end{algseries}
}

\eat{
\subsection{Generating a Single Acyclic Schema} \label{sec:singleSchema}

By Theorem~\ref{thm:approximateAcyclic} the generated acyclic schema is guaranteed to approximately hold in $\relation$ w.r.t. $(m-1)\varepsilon$.
}

In summary, algorithm $\ASAlg$ in Fig.~\ref{alg:asMiner} enumerates all
maximal independent sets $\Q$, then for each of them constructs one
acyclic schema $\schema$, by calling $\algname{BuildAcyclicSchema}$ shown in
Fig.~\ref{alg:getAcyclicSchema}, and described next.
% Next, we present an algorithm for synthesizing an acyclic schema 
% from a set of pairwise compatible $\varepsilon$-MVDs.
% Algorithm \algname{BuildAcyclicSchema} 
% The algorithm receives as input a set $\Q$ of pairwise compatible
% MVDs, and the universal relation $\relation[\Omega]$. 

Algorithm $\algname{BuildAcyclicSchema}$ starts with a
schema that contains a single relation with all attributes
(i.e., $\schema=\set{\Omega}$). It then builds the acyclic schema for
$\relation$ by repeatedly using an $\varepsilon$-MVD from $\Q$ to
decompose one of the relations in $\schema$.  The MVDs are processed
in ascending order of the cardinality of their keys. Therefore, when
an MVD $S \mvd C_1|\dots|C_m$ is processed, then we know that $S$ is
contained in exactly one of the relations in $\schema$ (e.g.,
otherwise, $S$ must be contained in a key of a previously processed
$\varepsilon$-MVD).  The algorithm then applies this $\varepsilon$-MVD
to the single relation that contains it, and continues until all
$\varepsilon$-MVDs in $\Q$ have been processed.  An MVD is said to be
\e{redundant}~\cite{DBLP:journals/ipl/GoodmanT84} if it does not split
the single relation that contains it (i.e., condition of
line~\ref{algline:testRedundancy} does not hold). Redundant MVDs are
simply ignored in $\algname{BuildAcyclicSchema}$.
\eat{
Figure~\ref{fig:buildAcyclicSchemaEx} presents an example adapted from
Fagin's university database~\cite{Fagin:1977:MDN:320557.320571} (e.g.,
$\text{CLASS} \mapsto A$, $\text{SECTION} \mapsto B$, etc.).
}
\begin{theorem}\label{thm:SynthesizeSchema}
Algorithm $\algname{BuildAcyclicSchema}$ generates an acyclic schema $\bS$ with join tree $(\T,\jointreeMapFunction)$ such that $\MVD(\T)\subseteq \Q$. If $\Q$ is a non-redundant set of $\varepsilon$-MVDs then $\MVD(\T)= \Q$. The algorithm runs in time $O(n^3)$.
\end{theorem}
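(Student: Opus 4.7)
\medskip
\noindent\textbf{Proof plan.} The plan is to maintain, as a loop invariant, that after processing each MVD the current $\schema$ is acyclic, admits a join tree $(\T,\chi)$ whose support consists of the (non-redundant) MVDs processed so far, and that the key $X$ of the \emph{next} MVD $\phi = X \mvd C_1|\cdots|C_m$ to be processed is contained in exactly one bag of $\schema$. Given the invariant, the three claims of the theorem follow quickly, so the work lies in establishing and carrying it through each iteration.

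\medskip
\noindent\textbf{Step 1: uniqueness of the bag containing $X$.} First I would exploit the fact that $\Q$ is pairwise compatible (Def.~\ref{def:CompatibleSJDs}) and that MVDs are processed in ascending order of key cardinality. By induction I would show that after processing $\phi_1,\ldots,\phi_{t-1}$, the bags of $\schema$ correspond to the ``atoms'' produced by the join tree built so far, and that every set that is the key of some not-yet-processed $\phi_t$ is contained in at least one bag. For uniqueness, suppose for contradiction that $X$ is contained in two bags $\Omega_i, \Omega_j$; by running intersection $X$ sits on every bag of the unique path between them, so $X$ is contained in the intersection of two adjacent bags on that path, i.e.\ in some earlier separator $Y$ with $|Y| \leq |X|$. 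If $|Y|<|X|$ this contradicts the ordering, and if $|Y|=|X|$ then $Y=X$ and the compatibility condition for the MVD previously applied together with $\phi_t$ would force $\phi_t$ to already have been absorbed; either way we reach a contradiction.

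\medskip
\noindent\textbf{Step 2: acyclicity is preserved and $\MVD(\T)\subseteq \Q$.} Given uniqueness, replacing $\Omega_i$ by $\mathbf{D}_{\phi}=\{C_jX\cap\Omega_i : j\in [m]\}\setminus\{X\}$ corresponds in the join tree to replacing the single node labelled $\Omega_i$ by a star centred at a fresh node labelled $X$, with leaves labelled by the elements of $\mathbf{D}_\phi$; any neighbour $u$ of the old $\Omega_i$ in $\T$ reattaches to whichever new leaf contains $\Omega_i \cap \chi(u)$ (this leaf is unique because $X \subseteq \Omega_i \cap \chi(u)$ would contradict Step~1, and pairwise compatibility forces $\Omega_i \cap \chi(u)$ to lie in a single block $C_jX\cap\Omega_i$). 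The running-intersection property is immediate for the newly added internal edges (each carries $X$ as its separator, which is contained in every new bag by construction) and is preserved on the old edges because the reattachment was unique. Each new edge of $\T$ has separator $X$ and induces precisely an MVD refined by $\phi$ restricted to the relevant subtrees; reading it off gives exactly $\phi$ itself (modulo the split of the other subtree into a single side), so the new support elements lie in $\Q$.

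\medskip
\noindent\textbf{Step 3: equality when $\Q$ is non-redundant, and runtime.} If no $\phi\in\Q$ is redundant, the guard in line~\ref{algline:testRedundancy} is always satisfied, so every MVD of $\Q$ contributes at least one edge to $\T$; combined with Step~2 this yields $\MVD(\T)=\Q$. For the runtime: sorting is $O(|\Q|\log|\Q|)$ with $|\Q|\le 2^n$ in the worst case but at most $O(n^2)$ distinct keys appear via minimal separators in practice; each iteration locates the unique bag containing $X$ in $O(n^2)$ and performs the replacement in $O(n^2)$, giving $O(n^3)$ overall once we observe that the total number of bags ever created is $O(n)$ (an acyclic schema on $n$ attributes has at most $n$ bags).

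\medskip
\noindent\textbf{Main obstacle.} The delicate part is Step~1, the uniqueness of the bag containing $X$ at processing time. It is precisely here that the ordering by key cardinality and the split-free / two-block conditions of Def.~\ref{def:CompatibleSJDs} must be combined; the inequality/equality dichotomy $|Y|<|X|$ vs.\ $|Y|=|X|$ needs a careful case analysis against the two clauses of compatibility, and this is where a clean inductive statement of the invariant pays off.
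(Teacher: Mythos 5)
The paper gives no proof of Theorem~\ref{thm:SynthesizeSchema} --- neither in the body nor in the appendix (the appendix section for Section~\ref{sec:EnumerateAcyclicSchemas} proves only Theorem~\ref{thm:equivalenceThm}); the only justification offered in the text is the parenthetical remark that the key of the current MVD lies in exactly one relation ``otherwise, $S$ must be contained in a key of a previously processed $\varepsilon$-MVD''. So your proposal has to stand on its own, and it does not quite: the gap is exactly where you locate it, in Step~1. If $X\subseteq\Omega_i\cap\Omega_j$ for two bags, running intersection puts $X$ inside some separator $Y$ that is the key of an earlier-processed MVD, and the cardinality ordering gives $|Y|\le|X|$; but $X\subseteq Y$ already forces $|X|\le|Y|$ regardless of the ordering, so your ``$|Y|<|X|$ contradicts the ordering'' branch is vacuous and the only live case is $Y=X$, i.e.\ two MVDs of $\Q$ sharing the key $X$. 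Your dismissal of that case (``$\phi_t$ would already have been absorbed'') is asserted, not proved, and it is the case that actually occurs: the paper stresses that $\fullMVDs_\varepsilon(R,X)$ can contain more than one element when $\varepsilon>0$, and two same-key MVDs can be pairwise compatible under Definition~\ref{def:CompatibleSJDs} (for instance $X\mvd ab\,|\,cd$ and $X\mvd ac\,|\,bd$ satisfy both clauses). After the first of them is applied, \emph{every} new bag contains $X$, so the ``unique bag'' invariant is false and the algorithm's choice of $\Omega_i$ is ambiguous. A complete proof must either handle this case directly (show that any choice of bag still yields a join tree whose edge-MVDs are implied by members of $\Q$) or restrict $\Q$ to MVDs with pairwise distinct keys.

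A secondary imprecision sits in Step~2: $\MVD(\T)$ as defined in Section~\ref{sec:notations} consists of \emph{standard} two-dependent MVDs read off the edges of $\T$, whereas elements of $\Q$ may have $m>2$ dependents. The edge incident to the leaf $C_jX\cap\Omega_i$ separates $C_j$ from everything else, so what you read off is the coarsening $X\mvd C_j\,|\,(\text{rest})$, not $\phi$ itself; the containment $\MVD(\T)\subseteq\Q$ only holds under a liberal reading (each edge-MVD is obtained from some member of $\Q$ by merging dependents, hence is implied by it via Proposition~\ref{prop:naiveEnumJDs}), and your invariant should be phrased accordingly. The runtime accounting is fine once one notes that a non-redundant $\Q$ has at most $n-1$ elements, but your aside that $|\Q|$ could be $2^n$ in the worst case is inconsistent with the claimed $O(n^3)$ bound and needs to be resolved rather than mentioned in passing.
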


\eat{
\begin{figure}[ht]
	\subfigure[Original relation signature $\Omega$]{
		\includegraphics[width=0.3\linewidth]{acyclicSchema1}
 	}
	\hfill
	\subfigure[Apply $A\mvd J|\Omega\sm\set{J}$]{
		\includegraphics[width=0.5\linewidth]{acyclicSchema2}
	}
	\\
	\subfigure[Apply $C\mvd DF|ABEGHIJKL$]{
			\includegraphics[width=0.45\linewidth]{acyclicSchema3}
	}
	\hfill
	\subfigure[$AB\mvd CDEF|HI|G|J|KL$]{
		\includegraphics[width=0.45\linewidth]{acyclicSchema4}
	}
\caption{Construction of an acyclic schema by Algorithm~$\algname{BuildAcyclicSchema}$ (Figure~\ref{alg:getAcyclicSchema})\label{fig:buildAcyclicSchemaEx}}
\end{figure}
}
%\subsection{Discussion}

The novel insight of our algorithm is the characterization of
(in)compatibility in Definition~\ref{def:CompatibleSJDs}, which
depends only on the pairwise relationship between the MVDs, and
therefore enables the reduction to enumerating maximal independent
sets in graphs. Previous
characterizations~\cite{DBLP:journals/tcs/Gucht88,DBLP:journals/tods/FaginMU82,DBLP:journals/tcs/Lakshmanan88,Beeri:1983:DAD:2402.322389}
are for entire sets of MVDs, and are not pairwise (more precisely, 
they have a different second condition called \e{intersection} which
relies on the existence of a third MVD in the set).  Goodman and Tay~\cite{DBLP:journals/ipl/GoodmanT84} present an algorithm for synthesizing an acyclic schema from a set $\Q$ of MVDs that satisfy the \e{subset property}. As in Theorem~\ref{thm:SynthesizeSchema}, they show that if the set $\Q$ is non-redundant then the synthesized acyclic schema has a join tree whose support is precisely the set $\Q$. However, we are not aware of any characterization of non-redundant MVDs. While the subset property is pairwise, it is applicable only to binary MVDs, while our MVDs may have any number of dependents. 
Algorithms for constructing a (single) acyclic schema from data dependencies have been previously developed by Bernstein~\cite{DBLP:journals/tods/Bernstein76} where the input is a set of functional dependencies, and by Beeri et al. and Lien whose algorithms work by combining \e{conflict-free} MVDs~\cite{Beeri:1983:DAD:2402.322389,Lien:1981:HSR:319540.319546}.

\section{Evaluation}
\label{sec:evaluation}
%

%Table~\ref{tab:evaluationDatasets} presents the evaluation datasets and runtimes for mining full MVDs with threshold $0.0$.

\begin{table}
	\small{
	\begin{tabular}{ccc|cc}		
	\multicolumn{3}{c}{\multirow{2}{*}{Dataset}}	& \multicolumn{2}{c}{Full MVDs} \\ 
	\multicolumn{3}{c}{\multirow{2}{*}{}} & \multicolumn{2}{c}{threshold=0.0}\\
	\hline \\
	Dataset & Cols. & Rows & \eat{$\substack{\text{Size}\\ \text{[KB]}}$ &} $\substack{\text{Runtime} \\ \text{[sec]}}$ & Full MVDs\\
	\hline
	 \textsf{Ditag Feature} & 13 &3960124 & \eat{348331 &} TL & NA\\
	\textsf{Four Square (Spots)} & 15 &973516 &\eat{132791 &} 17017 & 105\\
	\textsf{Image} & 12 & 777676 & \eat{99749 &} 3747 & 151\\
	\textsf{FD\_Reduced\_30} & 30 & 250000 & \eat{69581 &} 8024 & 21\\
	\textsf{FD\_Reduced\_15} & 15 & 250000 & \eat{35157 &} 1006 & 21\\
	\textsf{Census} & 42  & 199524 & \eat{109043 &} TL & NA \\

	\textsf{SG\_Bioentry} & 7 & 184292 & \eat{20719 &} 101 & 3\\
	\textsf{Atom Sites} & 26 & 160000 & \eat{17423 &} TL & 242\\
	\textsf{Classification} & 12 & 70859 & \eat{8 &} 1327 & 27\\
	\textsf{Adult} & 15 & 32561 & \eat{3528 &} 1083 & 58\\
	\textsf{Entity Source} & 33 &26139 &\eat{6402 &} 14155 & 153\\
	\textsf{Reflns} & 27 & 24769 & \eat{3796 &} TL & 543\\
	 \textsf{Letter} & 17 & 20000 & \eat{696 &} 605 & 44\\
     \textsf{School Results} & 27 & 14384 & \eat{ 1782 &} 7202 & 2394\\
    \textsf{Voter State} & 45 &10000&\eat{ 2895 &} TL & 262\\   
    \textsf{Abalone} & 9 & 4177 & \eat{188&} 602 & 36 \\    
  	\textsf{Breast-Cancer} & 11 & 699 & \eat{20 &} 5 & 30 \\
   	\textsf{Hepatitis} & 20 & 155 & \eat{8 &} 479 & 2953\\
	\textsf{Echocardiogram} & 13 & 132 &\eat{ 6 &} 6 & 104 \\	
	\textsf{Bridges} & 13 &108 & \eat{ 6&} 3.8 & 60\\
%	\textsc{horse} & 28 & 300 & f\\
	\end{tabular}
	\caption{{\small  Datasets used in the experiments. We show
            the runtimes (in seconds) for mining full MVDs with
            threshold $0.0$, with a  time limit (TL) of 5 hours.} }
	\label{tab:evaluationDatasets}
}%
\end{table}

In this section we conduct an experimental evaluation of \system.
We start with an end-to-end evaluation of its
  usefulness in Section~\ref{sec:endEndApp}, then evaluate the
  accuracy of the approximate schemas in terms of the relationship
  between the $J$-measure and number of spurious tuples in
  Section~\ref{sec:proofCorrectness}.  Next, we evaluate the
  efficiency and scalability of \system, measuring the time to find
  the minimal separators in Section~\ref{sec:scalability}.  Finally,
  we report the rate of enumeration, and some quality metrics of the generated acyclic schemes in
  Section~\ref{sec:enumerationExperiments}.

We used 20 real-world datsets~\cite{naumann-datasets} that are part of
the Metanome data profiling
project~\cite{Papenbrock:2015:DPM:2824032.2824086}, shown in
Table~\ref{tab:evaluationDatasets} (we discuss the runtimes in
Sec.~\ref{sec:scalability}).  \system\ was implemented in Java 1.8 and
all experiments are conducted on a 64bit Linux machine with 120 CPUs
and 1 TB of memory, running Ubuntu 5.4.0; our algorithm is
single-threaded and runs on a single core.

\eat{
\begin{table*}[!ht]
	\begin{tabular}[t]{|c|c|c|c|c|c|c|c|c|}
		\hline 
		{$A$}&{ $B$}&{ $C$}&{ $D$}&{ $E$}&{ $F$}&{ $G$}&{ $H$}&{ $I$} \\
		\hline 
		{Occupation}&{ $\substack{\text{Child} \\ \text{current}\\ \text{care}}$}& { $\substack{\text{Family}\\\text{ form}}$}&{ \#children}& { Housing} & { $\substack{\text{Financial}\\ \text{standing}}$} & { $\substack{\text{Social}\\ \text{cond.}}$} & { Health} & { $\substack{\text{Application}\\ \text{Priority}}$}\\
		\hline				
	\end{tabular}
	\captionof{table}{\small{Attributes of the \textsf{Nursery} dataset.	
	\label{tab:NurseryFields}
	}}
\end{table*}
}
\begin{figure*}[!ht]
		\begin{tabular}[t]	%{|m{0.15\textwidth}|m{0.15\textwidth}|m{0.15\textwidth}|m{0.16\textwidth}|m{0.16\textwidth}|m{0.16\textwidth}|}
		{|c|c|c|c|c|c|}
		\hline		
		\subfigure[]{\includegraphics[width=0.09\textwidth]{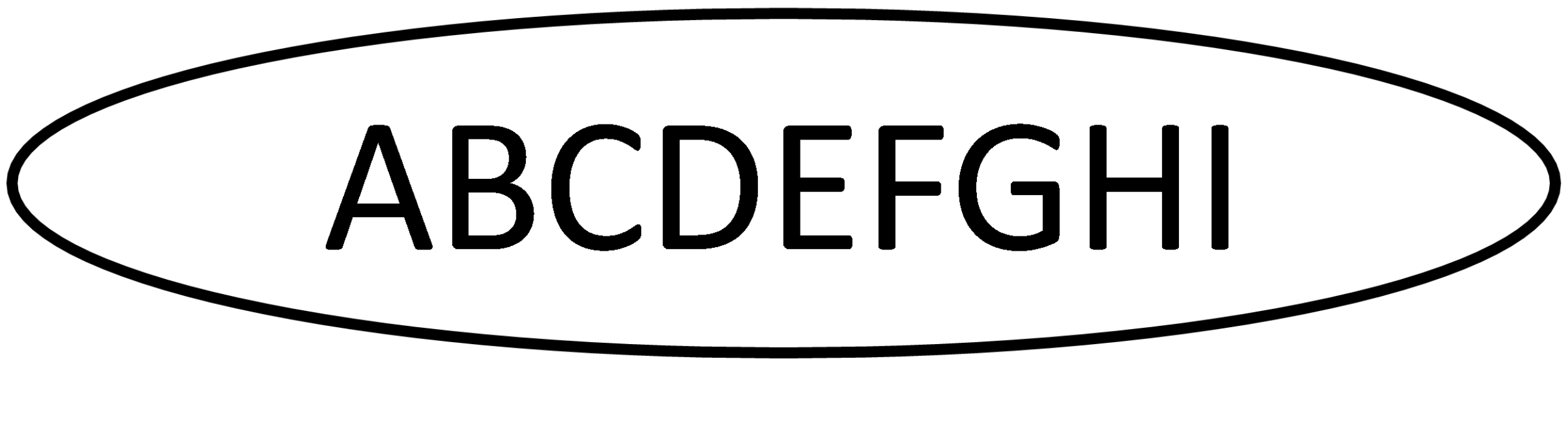}\label{chart:B2}}&
		\subfigure[]{\includegraphics[width=0.09\textwidth]{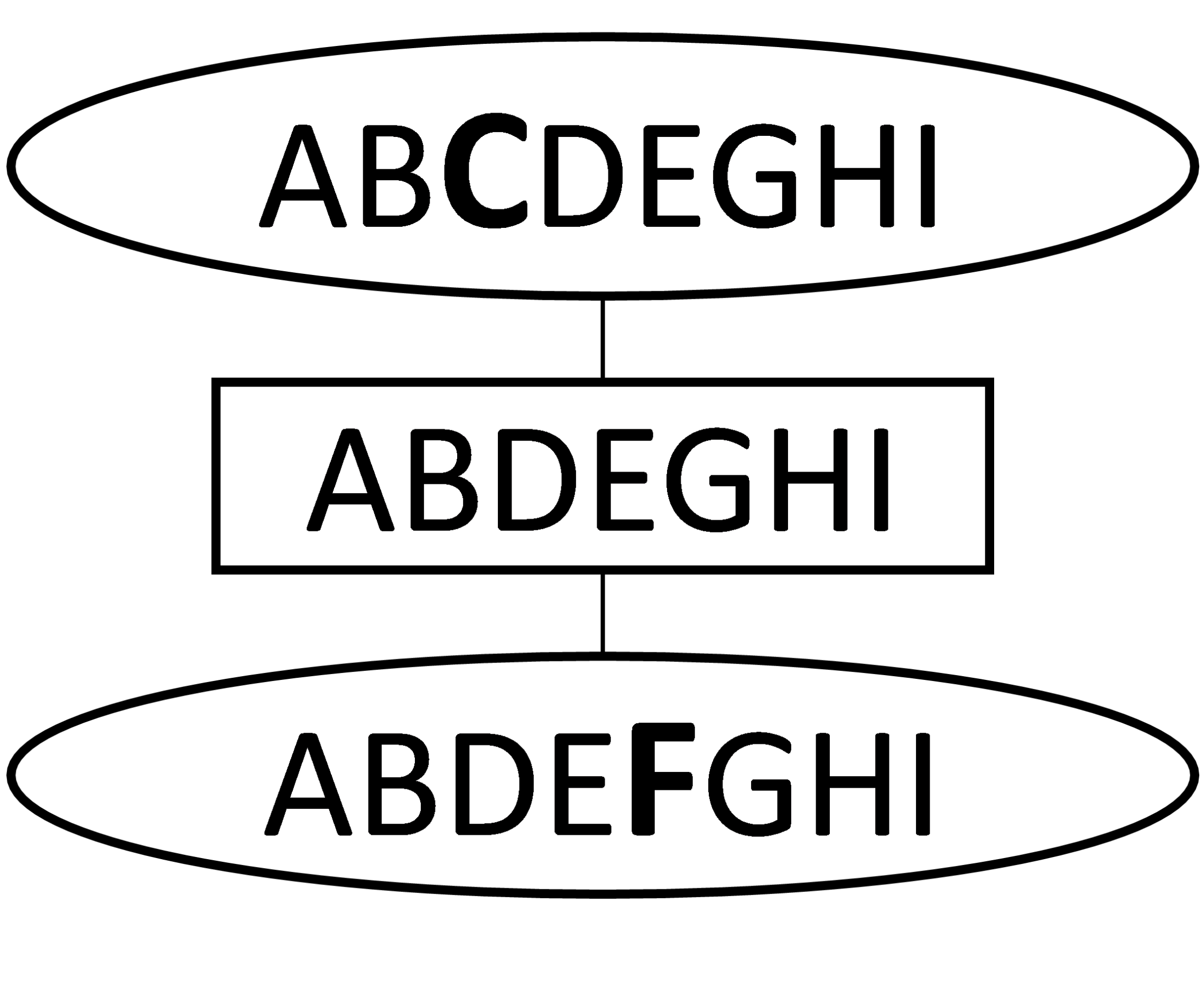}\label{chart:B2}}&
		\subfigure[]{\includegraphics[width=0.09\textwidth]{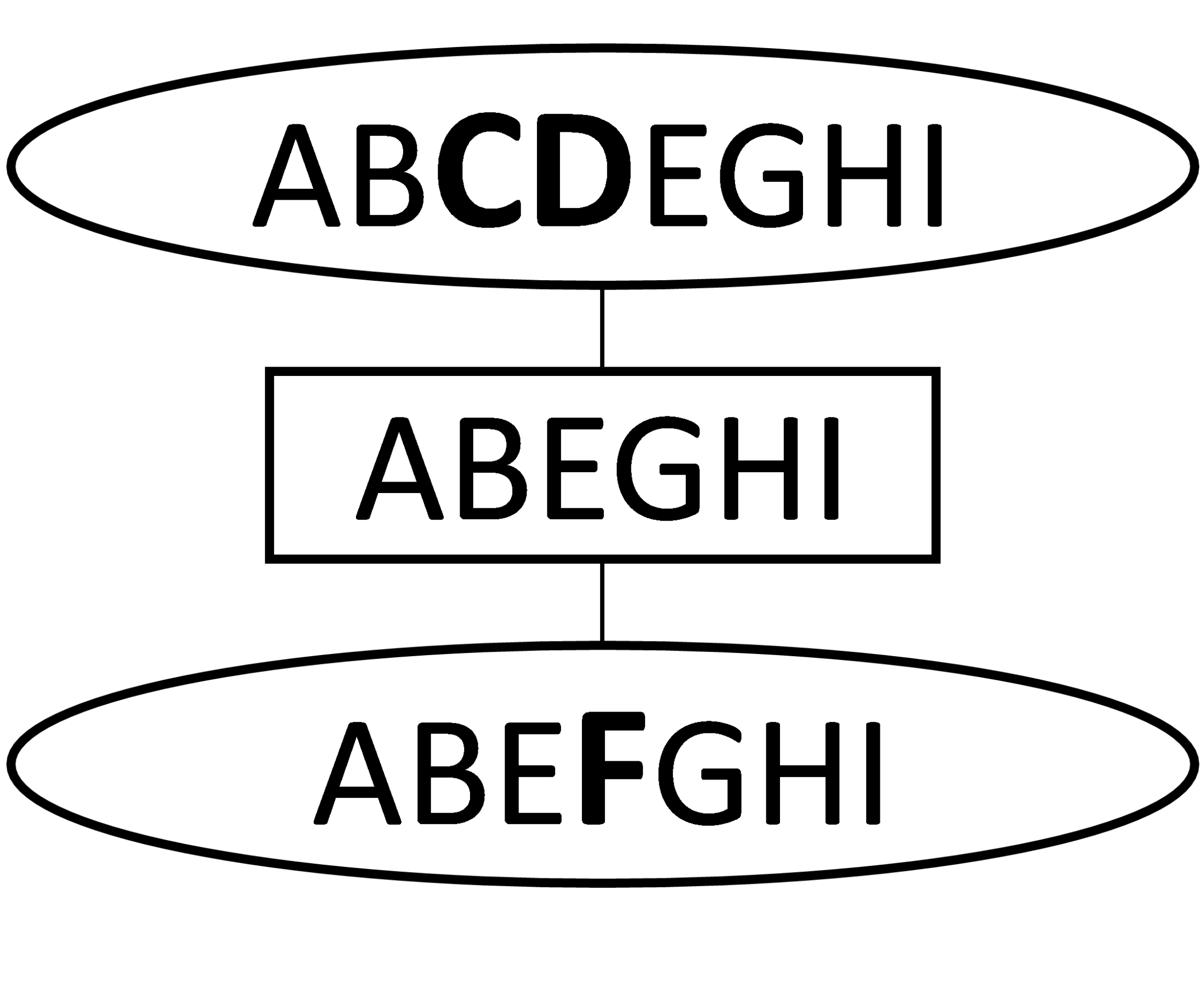}\label{chart:B2}}&
		\subfigure[]{\includegraphics[width=0.18\textwidth]{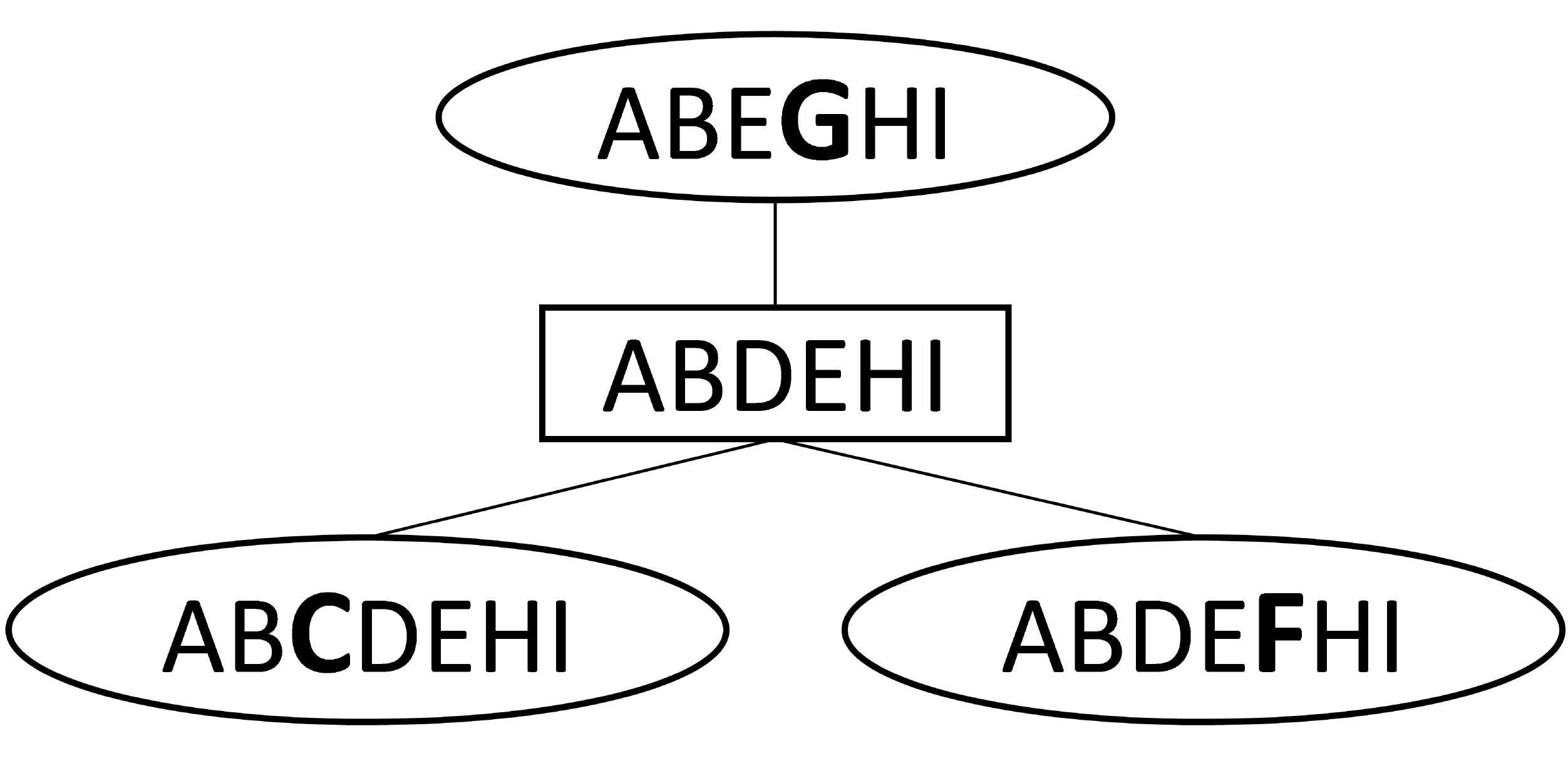}\label{chart:B2}}&
		\subfigure[]{\includegraphics[width=0.18\textwidth]{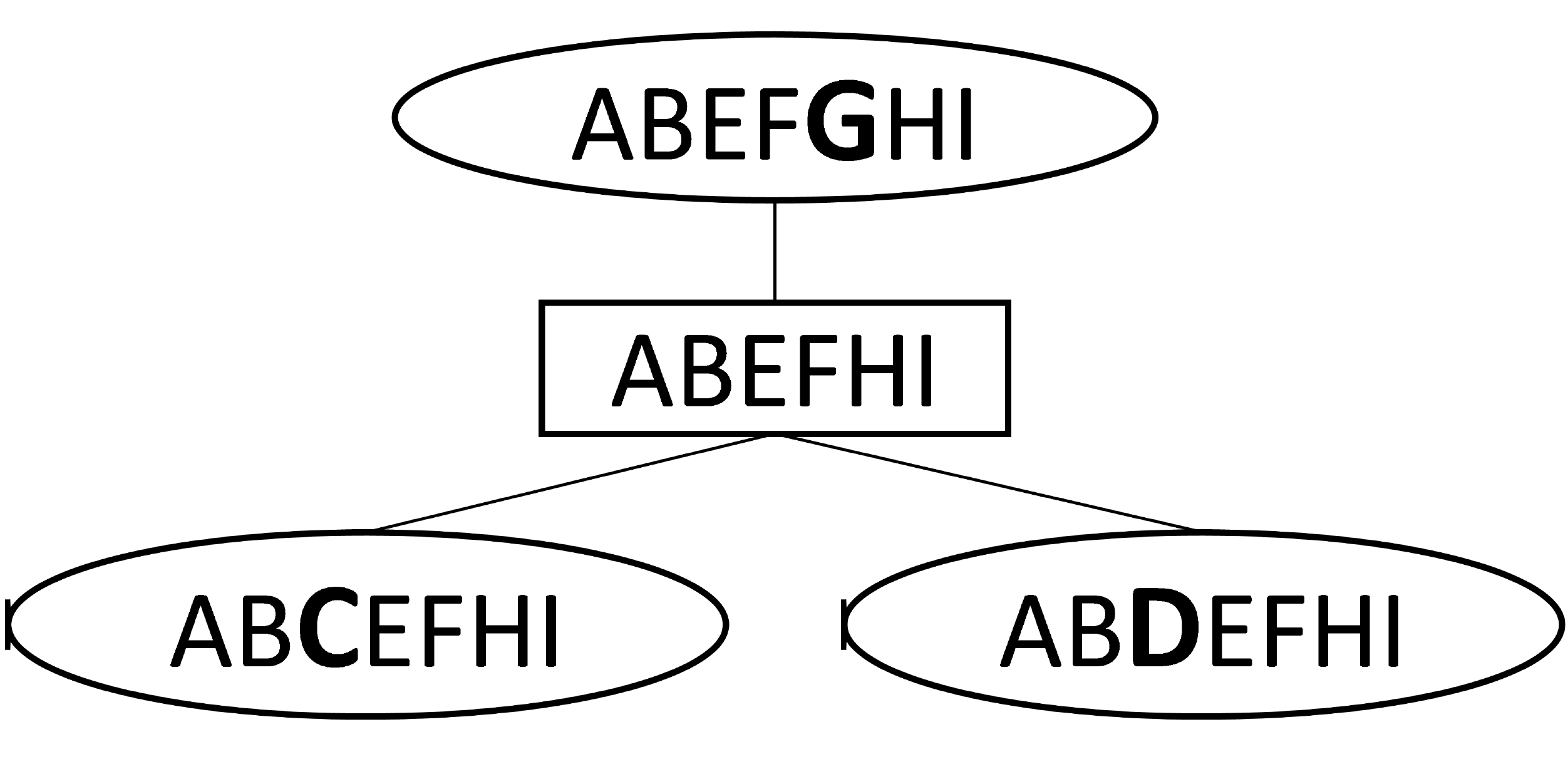}\label{chart:B2}}&
		\subfigure[]{\includegraphics[width=0.18\textwidth]{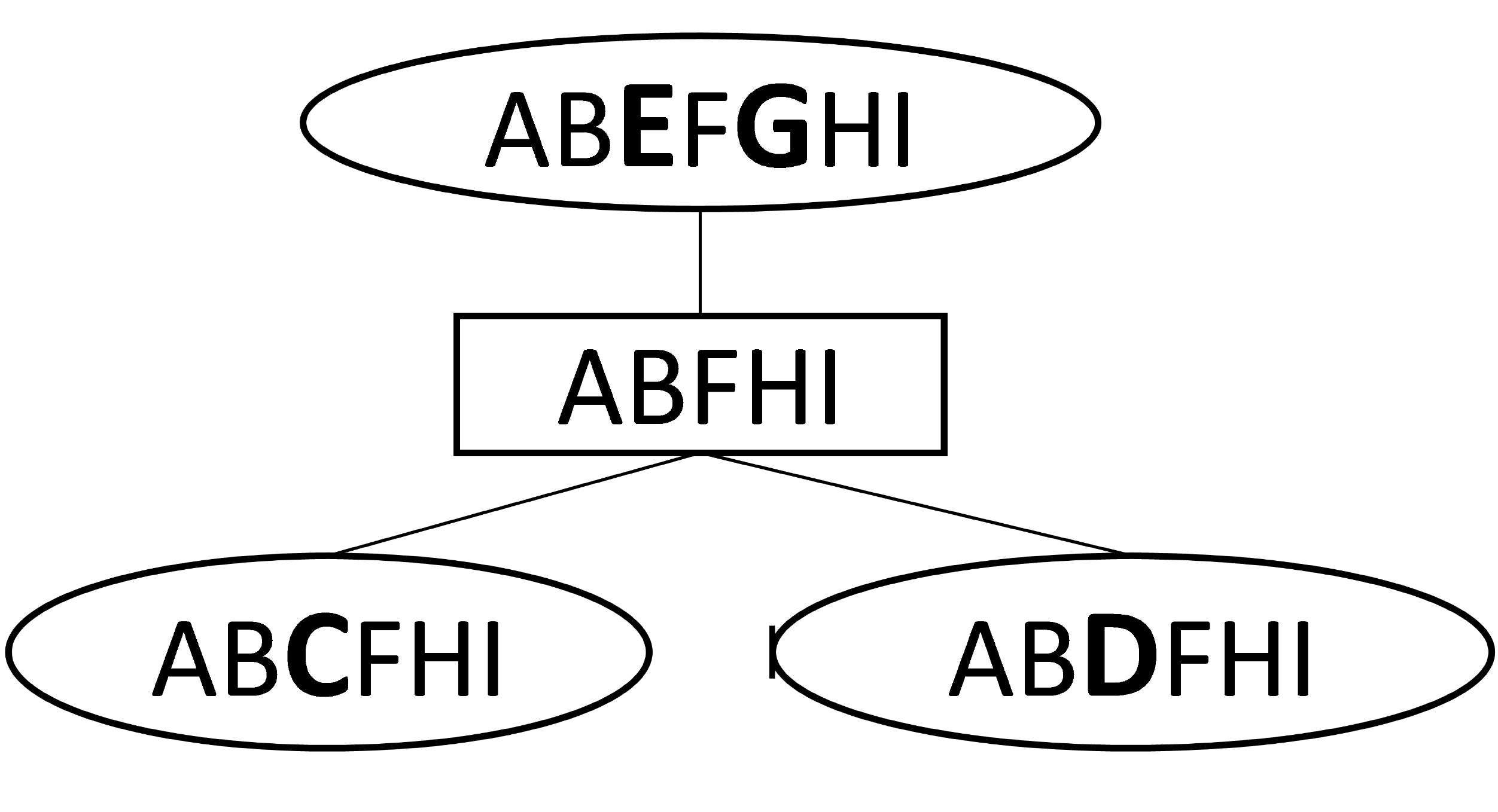}\label{chart:B2}}
		\\
		{ $\substack{J=0,S=0,\\E=0\%,m=1}$}& 	{ $\substack{J=0.009,S=28\%,\\ E=1.08\%,m=2}$}&	{ $\substack{J=0.021,S=46\%,\\E=3.42\%,m=2}$}&
			{ $\substack{J=0.044,S=65\%,\\E=7.62\%,m=3}$}&
				{ $\substack{J=0.062,S=78\%,\\ E=8.61\%,m=3}$}&
					{ $\substack{J=0.097,S=89\%,\\E=16.48\%, m=3}$}\\
		
		\eat{
		\multicolumn{2}{|c|}{
			\subfigure[]{\includegraphics[width=0.15\textwidth]{pics/NurseryExperiment/T1n.pdf}\label{chart:B2}}}&
		\multicolumn{2}{|c|}{
			\subfigure[]{\includegraphics[width=0.15\textwidth]{pics/NurseryExperiment/T2n.pdf}\label{chart:B2}}}&
		\multicolumn{2}{|c|}{
			\subfigure[]{\includegraphics[width=0.15\textwidth]{pics/NurseryExperiment/T3n.pdf}\label{chart:B2}}}\\	
		
		\multicolumn{2}{|c|}{
				{ $\substack{J=0,S=0,\\E=0\%,m=1}$}
		}
		&
		\multicolumn{2}{|c|}{
			{ $\substack{J=0.009,S=28\%,\\ E=1.08\%,m=2}$}
		}
		&
		\multicolumn{2}{|c|}{
			{ $\substack{J=0.044,S=65\%,\\E=7.62\%,m=3}$}
		}\\
	}
			\eat{
		\hline
		\multicolumn{2}{|c|}{
			\subfigure[]{\includegraphics[width=\eat{0.35}0.2\textwidth]{pics/NurseryExperiment/T4n.pdf}\label{chart:B2}}}&
		\multicolumn{2}{|c|}{
			\subfigure[]{\includegraphics[width=\eat{0.35}0.2\textwidth]{pics/NurseryExperiment/T5n.pdf}\label{chart:B2}}}&
		\multicolumn{2}{|c|}{
			\subfigure[]{\includegraphics[width=\eat{0.35}0.2\textwidth]{pics/NurseryExperiment/T6n.pdf}\label{chart:B2}}}\\
		
		\multicolumn{2}{|c|}{
				{ $\substack{J=0.17,S=94\%,\\E=26.6\%,m=3}$}
		}
		&
		\multicolumn{2}{|c|}{
			{$\substack{J=0.277,S=95.7\%,\\ E=26.8\%,m=4}$}
		}
		&
		\multicolumn{2}{|c|}{
			{ $\substack{J=0.345,S=97.4\%,\\ E=45.2\%,m=4}$}
		}			
	\\
}
	\hline
		\subfigure[]{\includegraphics[width=0.1\textwidth]{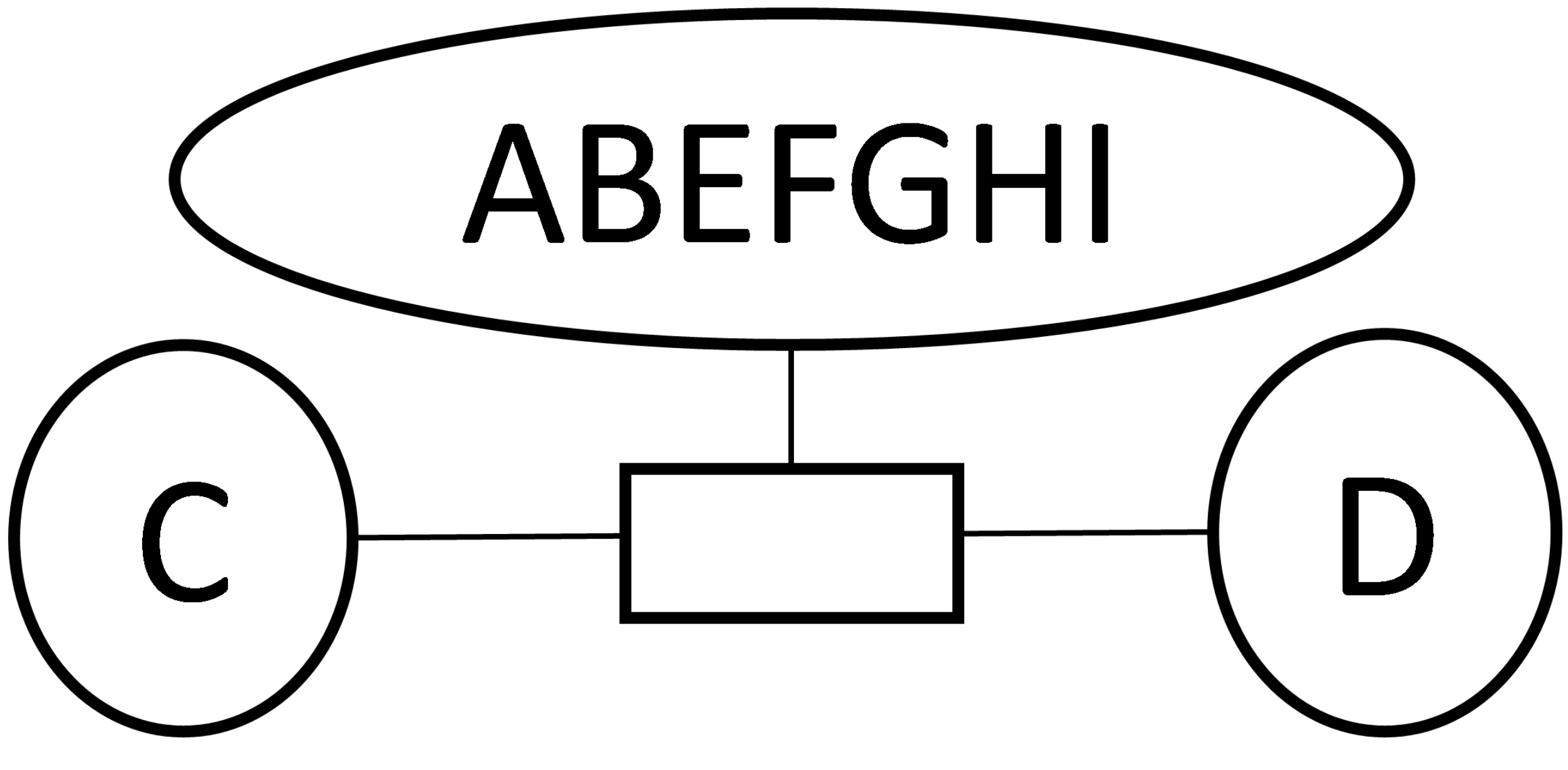}\label{chart:B1}}& 
	\multicolumn{2}{|c|}{
		\subfigure[]{\includegraphics[width=0.25\textwidth]{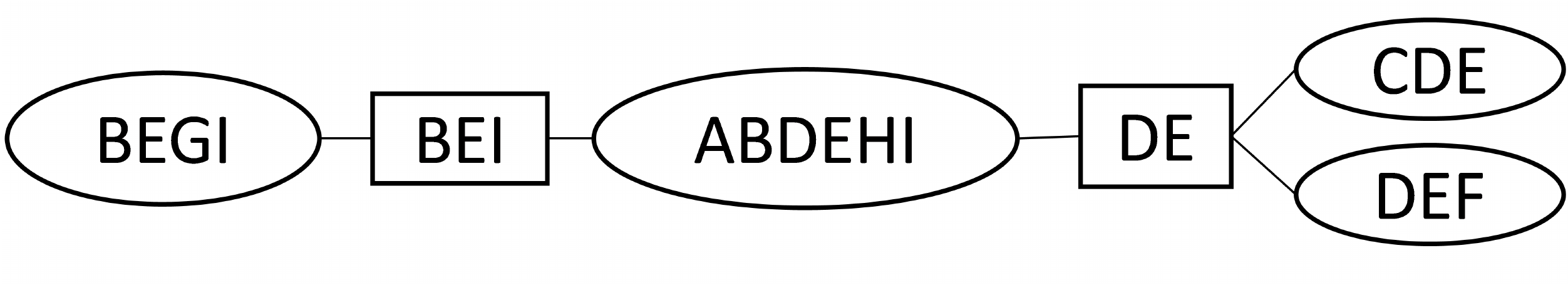}\label{chart:B2}}}& 
	\subfigure[]{\includegraphics[width=0.12\textwidth]{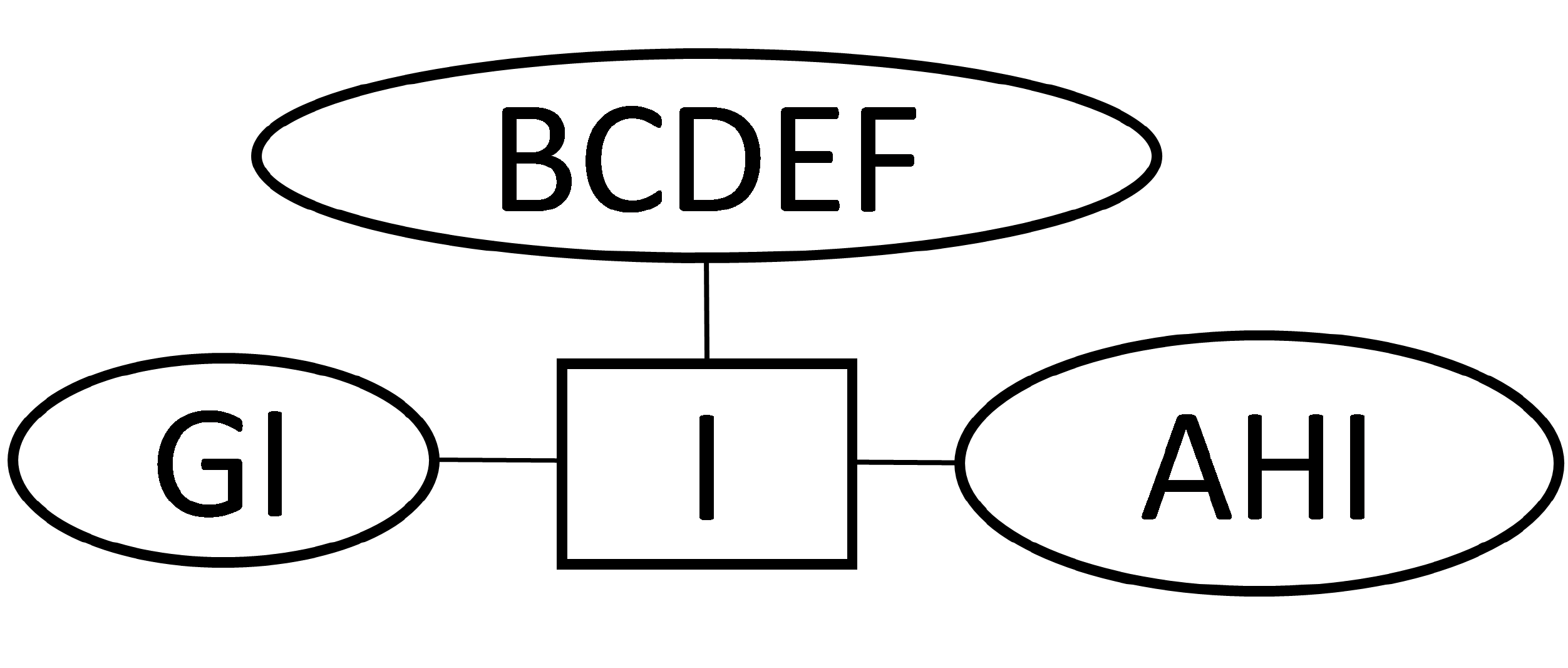}\label{chart:B3}}& 
	\multicolumn{2}{|c|}{
		\subfigure[]{\includegraphics[width=0.25\textwidth]{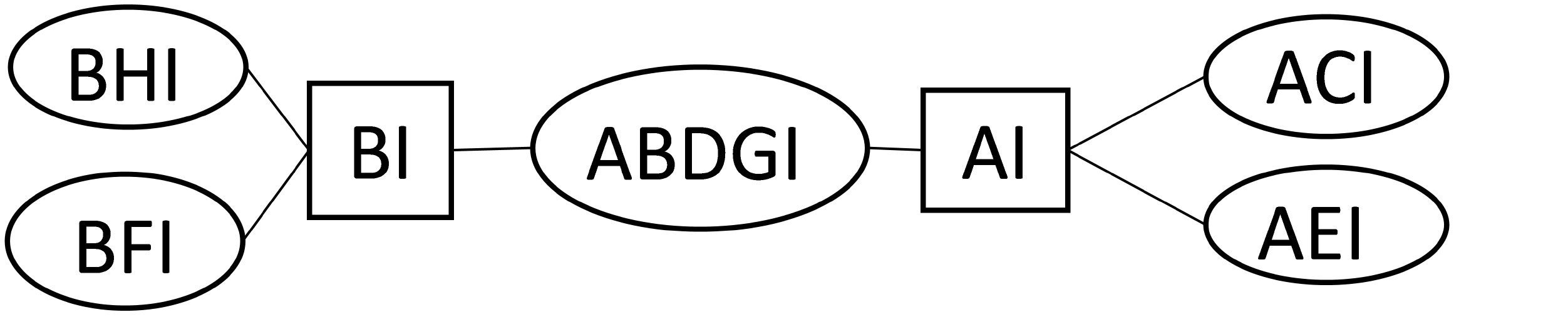}\label{chart:B4}}}\\
	{$\substack{J=0.17,S=94\%,\\E=26.6\%,m=3}$}&
	\multicolumn{2}{|c|}{{$\substack{J=0.277,S=95.7\%,\\ E=26.8\%,m=4}$}}&
	{$\substack{J=0.33,S=92.6\%,\\E=51.4\%,m=3}$}&
	\multicolumn{2}{|c|}{{$\substack{J=0.345,S=97.4\%,\\ E=45.2\%,m=4}$}}\\					
	\hline					
	\end{tabular}
	\captionof{figure}{\small{The \textsf{Nursery} use case, showing the 10 pareto optimal schemes (out of 415).  We encode the 9 attributes as $A, B, \cdots, I$ (top).  The data does not admit a exact decomposition (a), but we obtain increasingly better schemes (b)-(j) as we increase the $J$-measure, with increased space savings $S$, at the cost of increased rate of spurious tuples $E$; for example, for $J=0.277$ the data decomposes into 4 relations, $S= 95.7\%$ (see text for the explanation of why it is so high) and $E = 26.8\%$.	
	}}
	\label{fig:NurserySchemes}
\end{figure*}

\begin{figure}
	\begin{center}
	\includegraphics[width=0.35\textwidth]{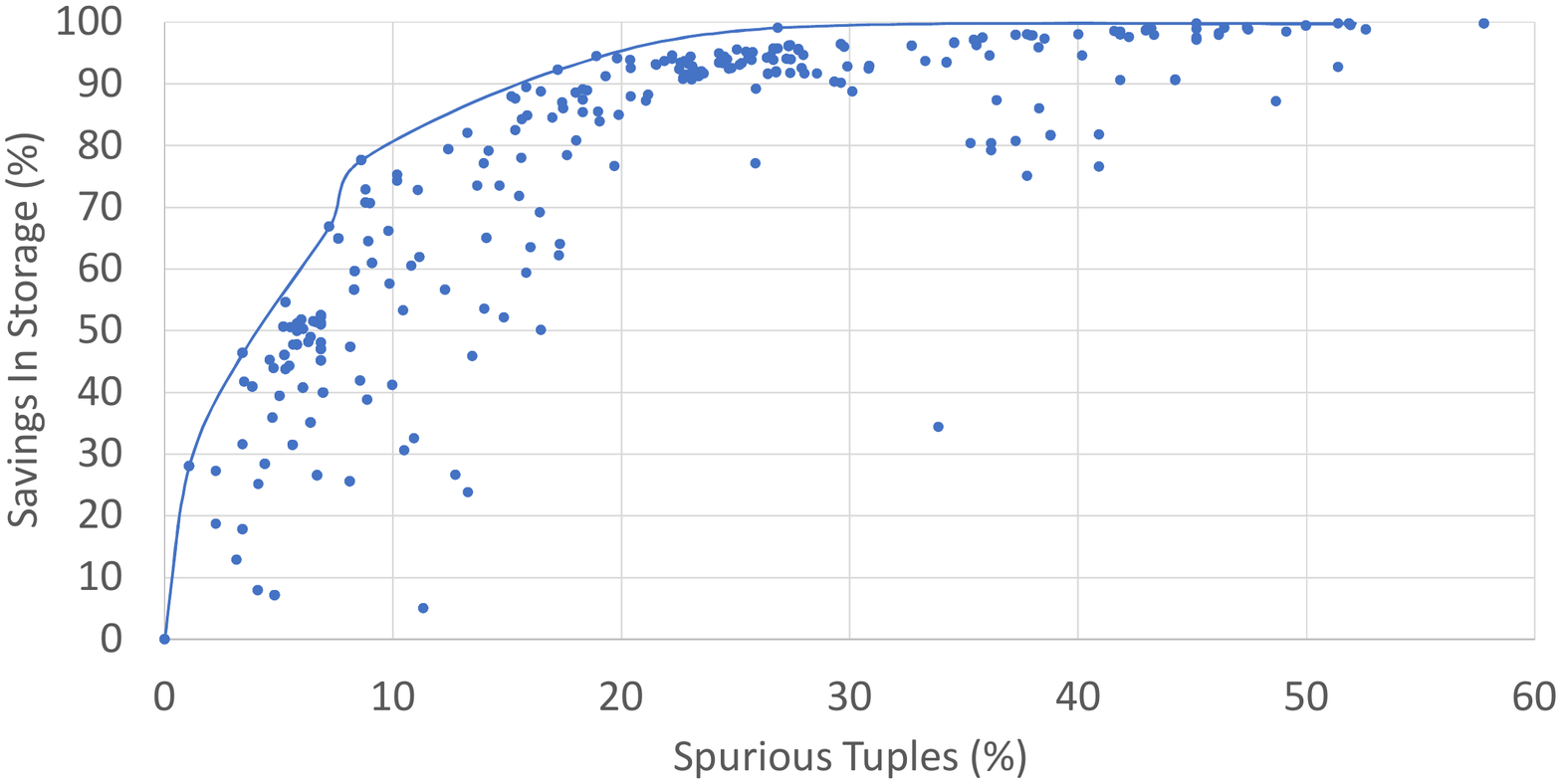}%	
	\end{center}
	\captionof{figure}{\small{All 415 schemes discovered for
            \textsf{Nursery}. The plot shows the  savings $S$ v.s. the
             spurious tuples $E$.  The line connects the ten
             pareto-optimal schemes further
             detailed in Fig.~\ref{fig:NurserySchemes}. }.	\label{fig:NurseryChart}}
\end{figure}

\eat{
\begin{figure*}[!ht]
	\centering
%	\begin{minipage}[t]{0.4\textwidth}\centering	
	\label{fig:nurseryAcycSchemes}	
	\includegraphics[width=0.7\textwidth]{pics/NurseryExperiment/NurseryChartPareto.pdf}%	\includegraphics[width=0.6\textwidth]{pics/NurseryExperiment/NurseryChartLegend.pdf}
	\captionof{figure}{\small{For every scheme we plot the percentage of savings in storage against the percentage of spurious tuples. }.	\label{fig:NurseryChart}}
%	\end{minipage}%	

	\qquad
	\begin{minipage}[t]{0.45\textwidth}\centering		
		\begin{tabular}[t]{|c|c|c|c|c|c|c|c|c|}
			\hline 
			{\tiny $A$}&{\tiny $B$}&{\tiny $C$}&{\tiny $D$}&{\tiny $E$}&{\tiny $F$}&{\tiny $G$}&{\tiny $H$}&{\tiny $I$} \\
			\hline 
			{\tiny Occupation}&{\tiny $\substack{\text{Child} \\ \text{current}\\ \text{care}}$}& {\tiny $\substack{\text{Family}\\\text{ form}}$}&{\tiny \#children}& {\tiny Housing} & {\tiny $\substack{\text{Financial}\\ \text{standing}}$} & {\tiny $\substack{\text{Social}\\ \text{cond.}}$} & {\tiny Health} & {\tiny $\substack{\text{Appli-}\\ \text{cation}\\ \text{Priority}}$}\\
			\hline				
		\end{tabular}
		\\
		\begin{tabular}[t]	{|m{0.15\textwidth}|m{0.15\textwidth}|m{0.15\textwidth}|m{0.16\textwidth}|m{0.16\textwidth}|m{0.16\textwidth}|}
			\hline					
		\subfigure[]{\includegraphics[width=0.05\textwidth]{pics/NurseryExperiment/T1.pdf}\label{chart:1}} & 
		\subfigure[]{\includegraphics[width=\eat{0.166}0.14\textwidth]{pics/NurseryExperiment/T2.pdf}\label{chart:2}}& 
		\subfigure[]{\includegraphics[width=\eat{0.166}0.14\textwidth]{pics/NurseryExperiment/T3.pdf}\label{chart:3}}& 
		\subfigure[]{\includegraphics[width=\eat{0.166}0.14\textwidth]{pics/NurseryExperiment/T4.pdf}\label{chart:4}}& 
		\subfigure[]{\includegraphics[width=\eat{0.166}0.14\textwidth]{pics/NurseryExperiment/T5.pdf}\label{chart:5}}& 
		\subfigure[]{\includegraphics[width=\eat{0.166}0.14\textwidth]{pics/NurseryExperiment/T6.pdf}\label{chart:6}}\\
		{\tiny $\substack{J=0,S=0,\\E=0\%,m=1}$}&{\tiny $\substack{J=0.009,S=28\%,\\ E=1.08\%,m=2}$}&{\tiny $\substack{J=0.021,S=46\%,\\E=3.42\%,m=2}$}&{\tiny $\substack{J=0.044,S=65\%,\\E=7.62\%,m=3}$}&{\tiny $\substack{J=0.062,S=78\%,\\ E=8.61\%,m=3}$}&{\tiny $\substack{J=0.097,S=89\%,\\E=16.48\%, m=3}$}\\
		\hline	
		\subfigure[]{\includegraphics[width=\eat{0.166}0.14\textwidth]{pics/NurseryExperiment/B1.pdf}\label{chart:B1}}& 
		\multicolumn{2}{|c|}{
			\subfigure[]{\includegraphics[width=\eat{0.35}0.32\textwidth]{pics/NurseryExperiment/B2.pdf}\label{chart:B2}}}& 
		\subfigure[]{\includegraphics[width=\eat{0.166}0.14\textwidth]{pics/NurseryExperiment/B3.pdf}\label{chart:B3}}& 
		\multicolumn{2}{|c|}{
			\subfigure[]{\includegraphics[width=0.32\textwidth]{pics/NurseryExperiment/B4.pdf}\label{chart:B4}}}\\
		{$\substack{J=0.17,S=94\%,\\E=26.6\%,m=3}$}&
		\multicolumn{2}{|c|}{{$\substack{J=0.277,S=95.7\%,\\ E=26.8\%,m=4}$}}&
		{$\substack{J=0.33,S=92.6\%,\\E=51.4\%,m=3}$}&
		\multicolumn{2}{|c|}{{$\substack{J=0.345,S=97.4\%,\\ E=45.2\%,m=5}$}}\\					
		\hline		
		\end{tabular}
		\captionof{figure}{\small{The \textsf{Nursery} use case, showing the 10 pareto optimal schemes (out of 415).  We encode the 9 attributes as $A, B, \cdots, I$ (top).  The data does not admit a exact decomposition (a), but we obtain increasingly better schemes (b)-(j) as we increase the $J$-measure, with increased space savings $S$, at the cost of increased rate of spurious tuples $E$; for example, for $J=0.277$ the data decomposes into 4 relations, $S= 95.7\%$ (see text for the explanation of why it is so high) and $E = 26.8\%$.
                    \eat{($J$), savings in storage in \% ($S$), spurious tuples in \% ($E$), and number of relations ($m$) for 10 (out of 415) schemes achieving maximum savings in storage. The attribtues $A,\cdots,I$ represent: $(A)$ Parents' occupation, $(B)$ State of child's current care $(C)$ Form of family (e.g., single parent, foster family), $(D)$ Number of children in family, $(E)$ Housing conditions, $(F)$ Family financial standing, $(G)$ Family social conditions, $(H)$ Family financial condition, and $(I)$ Classification variable representing the priority of acceptance to the public nursery.}
                }
}\label{fig:NurserySchemes}			
		\end{minipage}

\end{figure*}
}

			\eat{	
\begin{figure*}[!ht]
	\begin{minipage}[b]{1.0\textwidth}\centering
	%	\label{fig:NurseryExperiment}
		\begin{flushleft}

			\begin{minipage}[t]{0.45\textwidth}\centering				
				%\vspace{0pt}
				%\centering
				\begin{tabular}[t]
					
					%{|m{0.166\textwidth}|m{0.166\textwidth}|m{0.166\textwidth}|m{0.166\textwidth}|m{0.166\textwidth}|m{0.166\textwidth}|}
					{|m{0.15\textwidth}|m{0.15\textwidth}|m{0.15\textwidth}|m{0.15\textwidth}|m{0.15\textwidth}|m{0.15\textwidth}|}
					\hline					
					\subfigure[]{\includegraphics[width=0.05\textwidth]{pics/NurseryExperiment/T1.pdf}\label{chart:1}} & 
					\subfigure[]{\includegraphics[width=\eat{0.166}0.14\textwidth]{pics/NurseryExperiment/T2.pdf}\label{chart:2}}& 
					\subfigure[]{\includegraphics[width=\eat{0.166}0.14\textwidth]{pics/NurseryExperiment/T3.pdf}\label{chart:3}}& 
					\subfigure[]{\includegraphics[width=\eat{0.166}0.14\textwidth]{pics/NurseryExperiment/T4.pdf}\label{chart:4}}& 
					\subfigure[]{\includegraphics[width=\eat{0.166}0.14\textwidth]{pics/NurseryExperiment/T5.pdf}\label{chart:5}}& 
					\subfigure[]{\includegraphics[width=\eat{0.166}0.14\textwidth]{pics/NurseryExperiment/T6.pdf}\label{chart:6}}\\
					{\tiny $\substack{J=0,S=0,\\E=0\%,m=1}$}&{\tiny $\substack{J=0.009,S=28\%,\\ E=1.08\%,m=2}$}&{\tiny $\substack{J=0.021,S=46\%,\\E=3.42\%,m=2}$}&{\tiny $\substack{J=0.044,S=65\%,\\E=7.62\%,m=3}$}&{\tiny $\substack{J=0.062,S=78\%,\\ E=8.61\%,m=3}$}&{\tiny $\substack{J=0.097,S=89\%,\\E=16.48\%, m=3}$}\\
					\hline
					\subfigure[]{\includegraphics[width=\eat{0.166}0.14\textwidth]{pics/NurseryExperiment/B1.pdf}\label{chart:B1}}& 
					\multicolumn{2}{|c|}{
						\subfigure[]{\includegraphics[width=\eat{0.35}0.32\textwidth]{pics/NurseryExperiment/B2.pdf}\label{chart:B2}}}& 
					\subfigure[]{\includegraphics[width=\eat{0.166}0.14\textwidth]{pics/NurseryExperiment/B3.pdf}\label{chart:B3}}& 
					\multicolumn{2}{|c|}{
						\subfigure[]{\includegraphics[width=0.32\textwidth]{pics/NurseryExperiment/B4.pdf}\label{chart:B4}}}\\
					{\tiny $\substack{J=0.17,S=94\%,\\E=26.6\%,m=3}$}&
					\multicolumn{2}{|c|}{{\tiny$\substack{J=0.277,S=95.7\%,\\ E=26.8\%,m=4}$}}&
					{\tiny $\substack{J=0.33,S=92.6\%,\\E=51.4\%,m=3}$}&
					\multicolumn{2}{|c|}{{\tiny $\substack{J=0.345,S=97.4\%,\\ E=45.2\%,m=4}$}}\\					
					\hline		
				\end{tabular}	
				\captionof{figure}{\small{The $J$-measure ($J$), savings in storage in \% ($S$), spurious tuples in \% ($E$), and number of relations ($m$) for 10 (out of 415) schemes achieving maximum savings in storage. The attribtues $A,\cdots,I$ represent: $(A)$ Parents' occupation, $(B)$ State of child's current care $(C)$ Form of family (e.g., single parent, foster family), $(D)$ Number of children in family, $(E)$ Housing conditions, $(F)$ Family financial standing, $(G)$ Family social conditions, $(H)$ Family financial condition, and $(I)$ Classification variable representing the priority of acceptance to the public nursery.}\label{fig:NurserySchemes}				
				 }			 
			\end{minipage}%
	
		\hfill
				\begin{minipage}[t]{0.45\textwidth}\centering	
				\label{fig:nurseryAcycSchemes}
				\vspace{0pt}
				\includegraphics[width=1.0\textwidth]{pics/NurseryExperiment/NurseryChart.pdf}
				\includegraphics[width=0.6\textwidth]{pics/NurseryExperiment/NurseryChartLegend.pdf}
				\captionof{figure}{\small{For every scheme we plot the percentage of spurious tuples (blue), savings in storage (orange), and number of relations (black). We draw a line through the Pareto Optimal acyclic schemes (see Section~\ref{sec:endEndApp})}.	\label{fig:NurseryChart}}	
			\end{minipage}%	
		\end{flushleft}
		\end{minipage}
\end{figure*}

	}
%[$\textsc{BreastCancerWisconsin}$]
%[$\textsc{Letter}$]
%[$\textsc{Bridges}$]
%[$\textsc{Echocardiogram}$]
%[$\textsc{Abalone}$]

  \subsection{A Use Case: \textsf{Nursery}} \label{sec:endEndApp}

  To evaluate the usefulness of \system\ we applied it to the
  \textsf{Nursery}
  dataset\footnote{https://archive.ics.uci.edu/ml/datasets/nursery}, a
  training data for classifying and ranking applications for nursery
  schools.  The dataset contains eight attributes describing
  occupational, financial, social and health conditions of the family,
  and a classification attribute that indicates the priority of the
  application; we renamed the attributes $A\ldots I$ for brevity.  The
  data has 12960 tuples and
  \eat{ 9 attributes (Table~\ref{tab:NurseryFields}) with}
  a total of $12960*9=116640$ cells.  By increasing the threshold $J$
  from $0$ to $0.5$, we found $415$ acyclic schemes
  (Fig~\ref{fig:NurseryChart}), and show ten of them in detail in
  Fig.~\ref{fig:NurserySchemes}.  As one can see in
  Fig.~\ref{fig:NurserySchemes}(a), when $J=0$, no exact decomposition
  is possible; a traditional (exact) decomposition of this data is not
  possible.  As we increase $J$, however, we find better and better
  schemas in Fig.~\ref{fig:NurserySchemes} (b)-(j), in the sense that
  it decomposes into more relations, each with fewer attributes.  For
  example, the schema in (h) ($J=0.277$) has 4 relations, $BEGI$,
  $ABDEHI$, $CDE$, $DEF$.  For each scheme we report the percentage
  cell savings, $S$, and the percentage of spurious tuples, $E$.
  There is a good tradeoff between space savings and error rate:
  several schemes have under $10\%$ spurious tuples yet achieve over
  $80\%$ space saving.  The space savings are very high (e.g. over
  $90\%$), because the \textsf{Nursery} data is dense: the attribute
  domains have sizes $3,5,4,4,3,2,3,3,5$.  For example, the extreme
  schema where each attribute is a separate relation (not shown in the
  Figure) has $3+5+4+4+3+2+3+3+5=32$ cells and a savings of
  $(116640-32)/116640$ i.e.  $S = 99.9725\%$; however, its fraction of
  spurious tuples is $(3*5*4*4*3*2*3*3*5 - 12960)/12960=4$, i.e.
  $E= 400\%$.  Fig.~\ref{fig:NurseryChart} shows the values $S,E$ for
  all $415$ schemes.  Users are likely to select the pareto optimal
  schemes, i.e. whose $S, E$ values are not dominated by any other
  schemes: the ten pareto optimal schemes in this graph are connected
  by a line, and are precisely those we have selected to show in
  detail in Fig.~\ref{fig:NurserySchemes}.  In addition to savings $S$
  and spurious tuples $E$, applications are likely to define their own
  domain specific quality measure and choose the optimal schema for
  that application.
  %%% Dan: I'm concerned about emphasizing too much the space savings
  %%% because, if that were the only goal, there are much better
  %%% compression techniques.  That's why I added the last sentence
  %%% about application-specific measures.

  \eat{ We observe that by incurring a loss of less than $10\%$ (i.e.,
    $10\%$ spurious tuples), we are able to reach $80\%$ savings in
    storage.}

  \eat{
  	 As can be
  	seen from Fig.~\ref{fig:NurserySchemes}, \system\ achieves huge
  	space savings with few spurious tuples. 
  	
   In
  Fig.~\ref{fig:NurseryChart} we report the savings $S$ (orange), the
  error $E$ (blue), and the number of relations (black) for all $415$
  schemes.  The 10 pareto-optimal schemes (meaning that they are not
  dominated in terms of both $S$ and $E$) are connected by the orange
  line, and are also those shown in Fig.~\ref{fig:NurserySchemes}.
}
\eat{
  We discover that no exact decomposition holds for \textsf{Nursery},
  making approximate acyclic schemes the only option for
  decomposition. That is, when the threshold is $0$, the only acyclic
  scheme that holds in the relation is the complete table
  (Figure~\ref{fig:NurserySchemes}(a)).

We ran the algorithm over thresholds in the range $[0.0,0.5]$, and for
each one of the $415$ schemes enumerated, we plot: (1) the percentage
of spurious tuples, (2) savings in storage, where storage is measured
in number of cells (e.g., the original table takes up $9\times 12960$
cells), and (3) number of relations in the decomposition.  The results
are presented in the chart in Figure~\ref{fig:NurseryChart} where we
also plotted a line through the \e{Pareto Optimal} acyclic
schemes. These are the schemes for which no other scheme leads to
greater savings in storage with a lower loss. From
Figure~\ref{fig:NurseryChart} we see that by incurring a loss of less
than $10\%$ (i.e., $10\%$ spurious tuples), we are able to reach
$80\%$ savings in storage.  Figure~\ref{fig:NurserySchemes}  presents a
subset of the \e{Pareto Optimal} acyclic schemes.  \eat{In particular,
  note that there is no \e{exact} decomposition for the
  \textsf{Nursery} dataset. That is, when the threshold is $0$, the
  only acyclic scheme that holds in the relation is the complete table
  (Figure~\ref{chart:1}). }} 

\begin{figure}[t]
  %  \includegraphics[width=0.014\textwidth]{pics/SpriousTuplesLegend}~
  %  \vline~~
  %	\begin{minipage}[b]{1.0\textwidth}\centering
  		\eat{
	\begin{minipage}{0.017\textwidth}\flushleft
    	\includegraphics[width=1.0\textwidth]{pics/SpriousTuplesLegend}       	
    \end{minipage}
    ~
		}
	\subfigure[$\textsf{BreastCancer}$]{\includegraphics[width=0.23\textwidth]{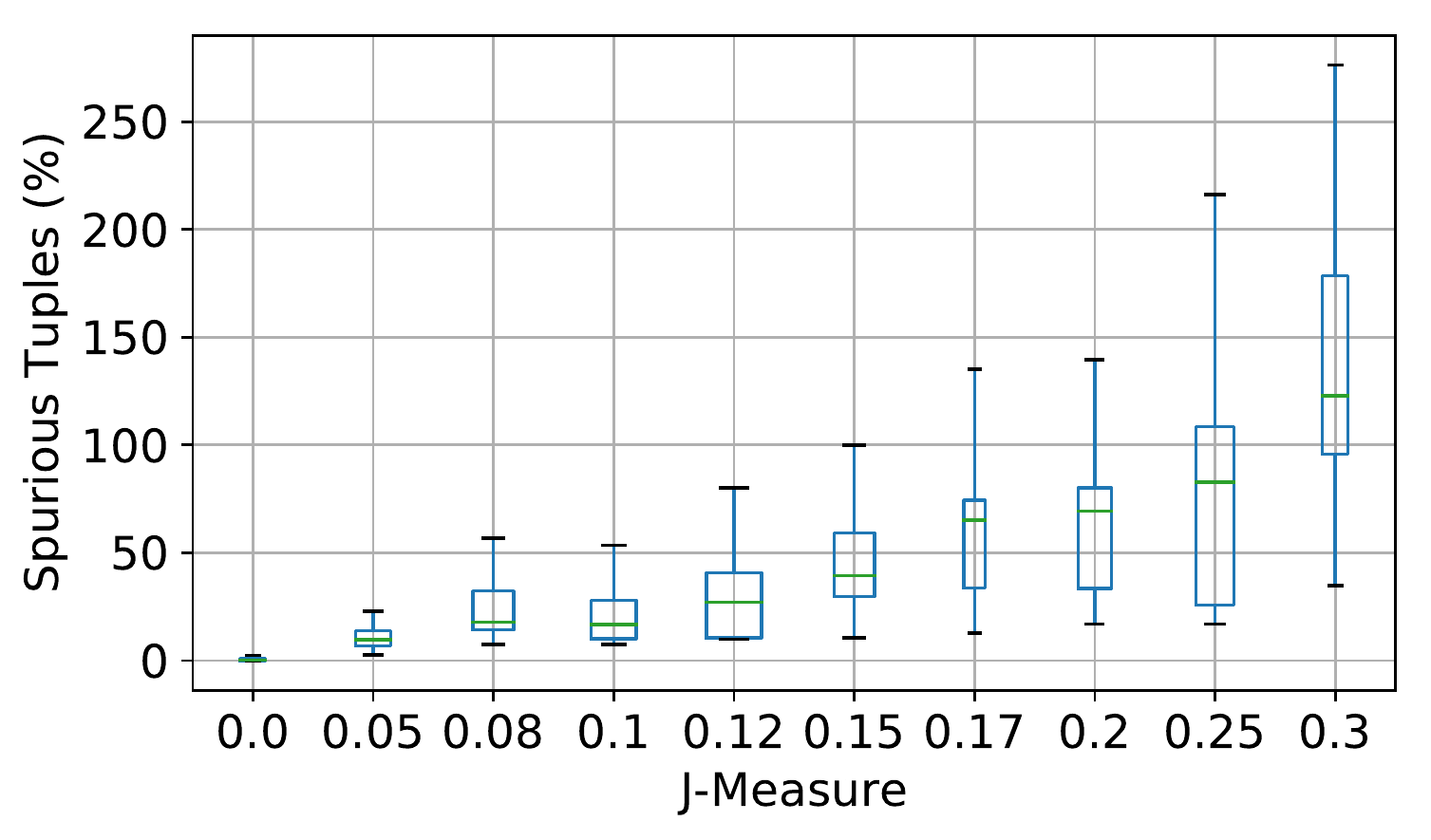}}
	\hfill
	\subfigure[$\textsf{Bridges}$]{\includegraphics[width=0.23\textwidth]{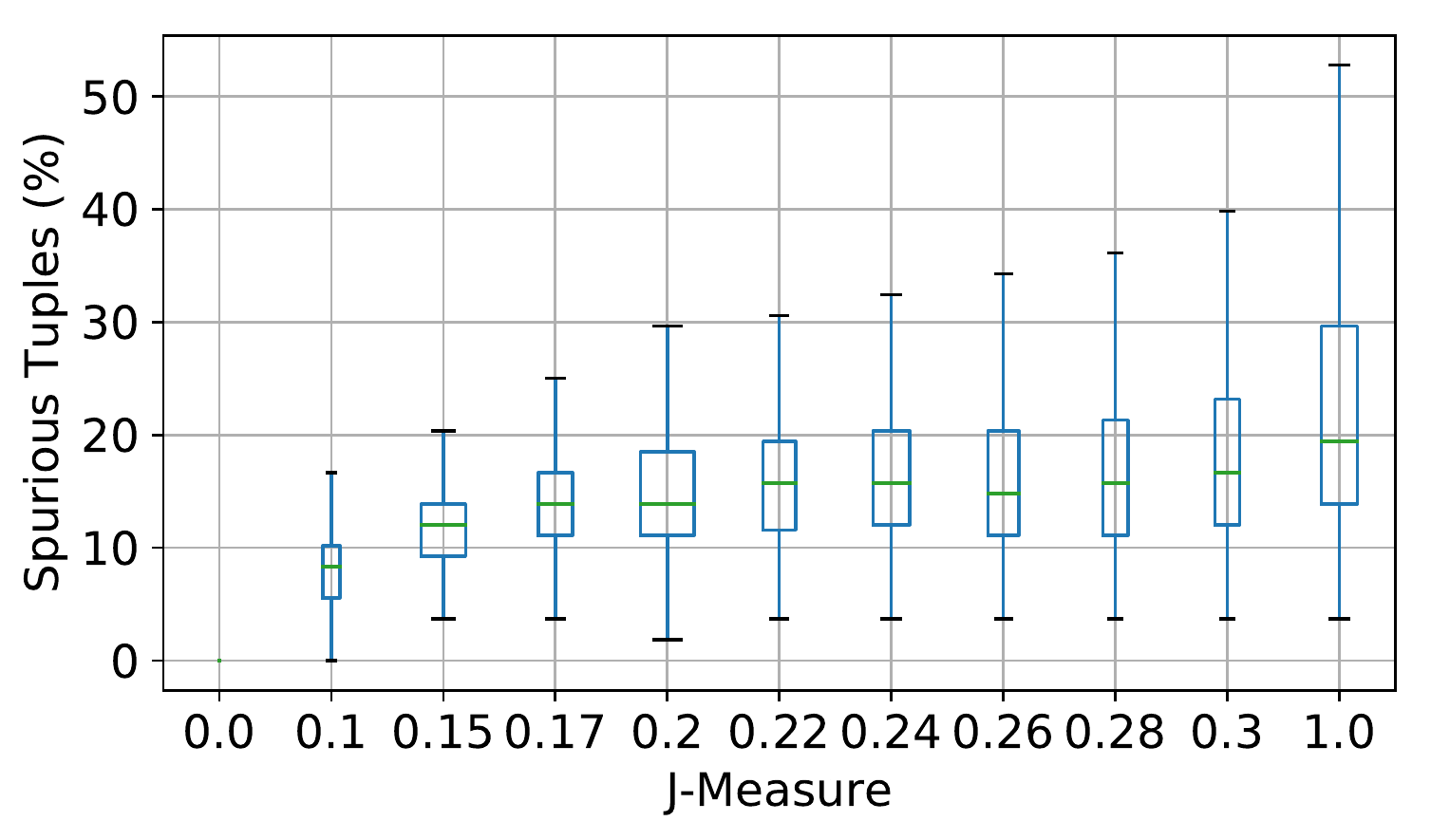}}
	\\
	\subfigure[$\textsf{Nursery}$]{\includegraphics[width=0.23\textwidth]{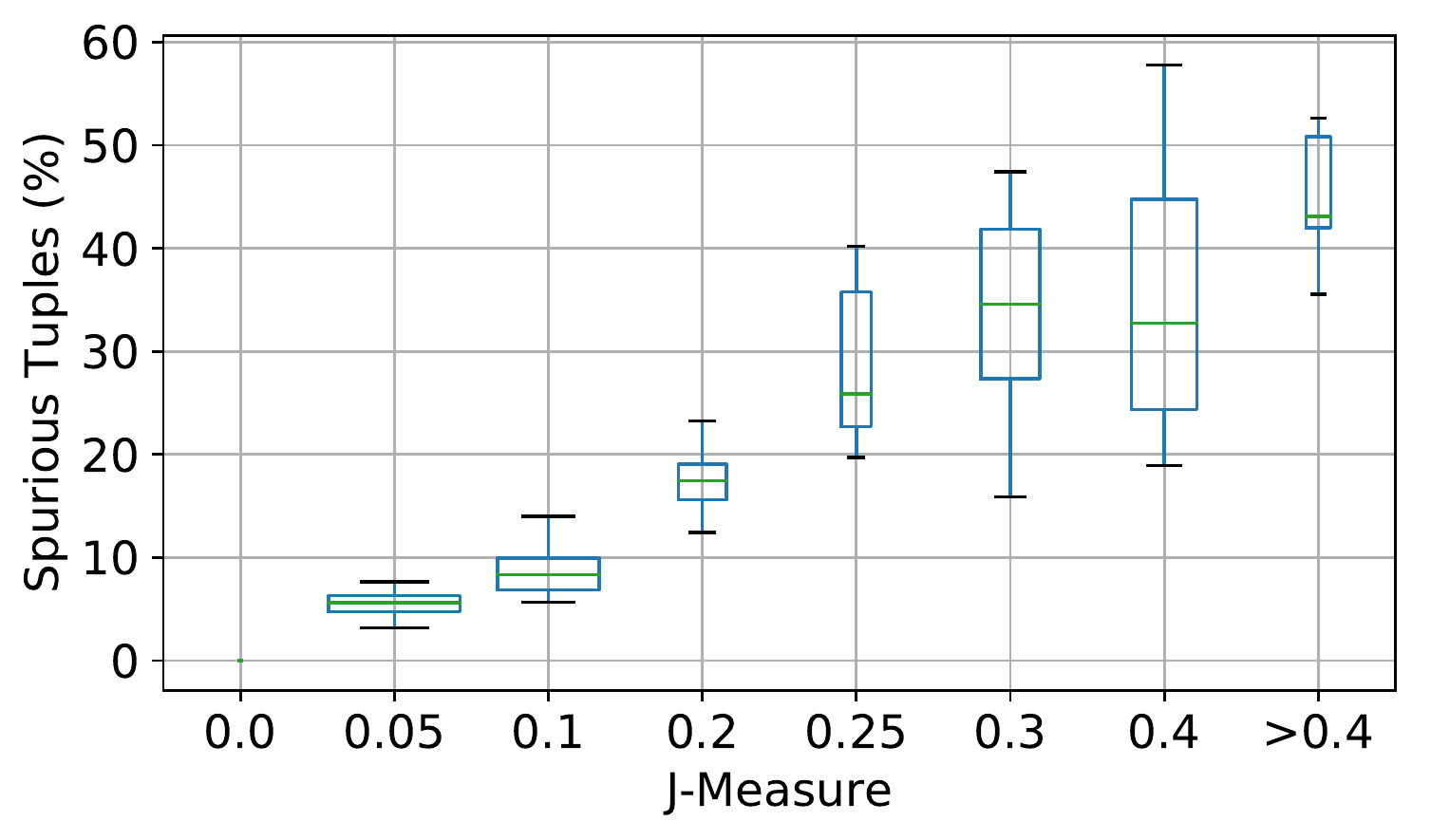}}
	\hfill
	\subfigure[$\textsf{Echocardiogram}$]{\includegraphics[width=0.23\textwidth]{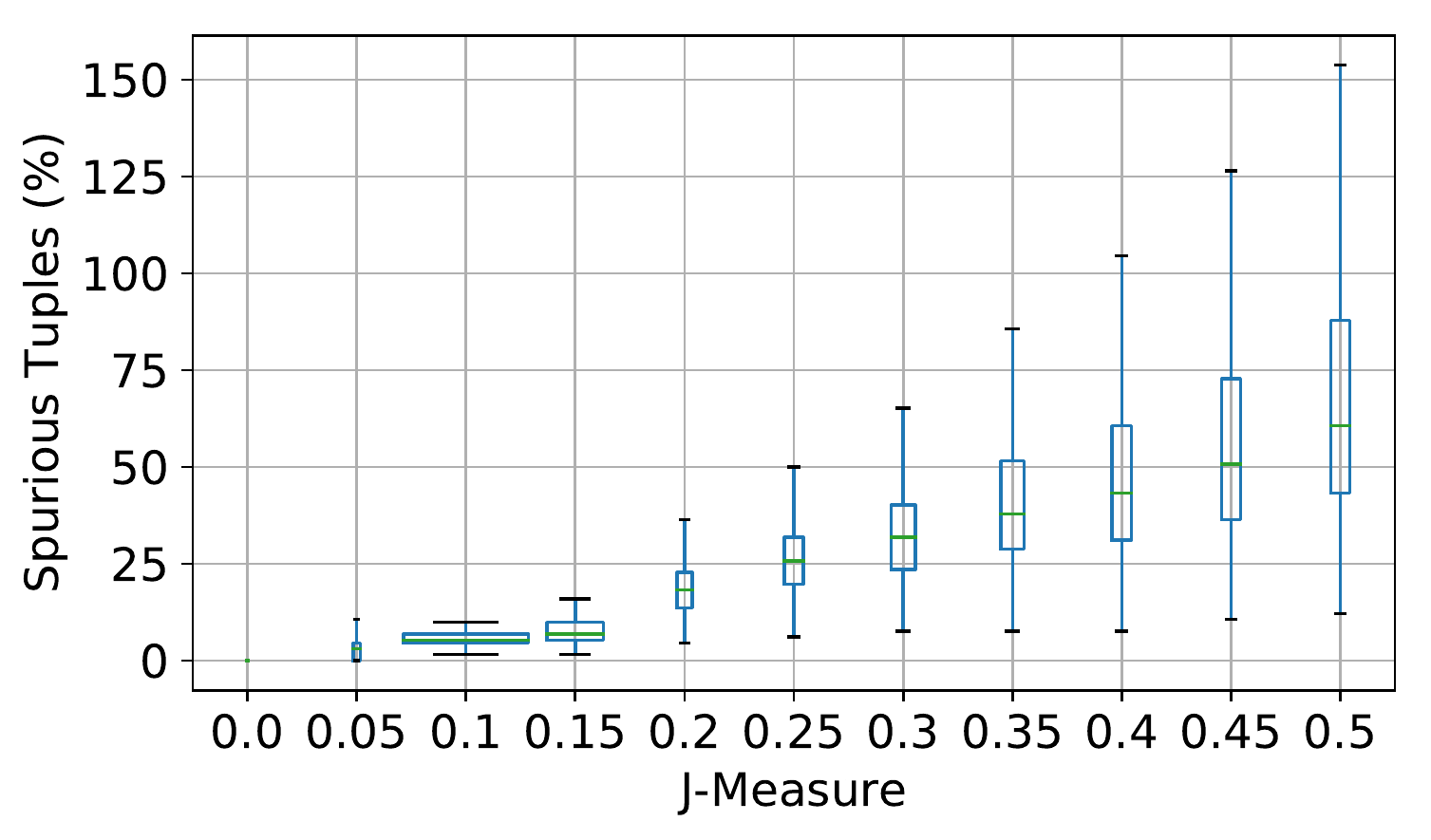}}
	
	\eat{
	\subfigure[$\textsf{Letter}$]{\includegraphics[width=0.135\textwidth]{pics/spuriousTuplesLetter}}
	\qquad
	\subfigure[$\textsf{Bridges}$]{\includegraphics[width=0.135\textwidth]{pics/spuriousTuplesBridges}} \\
	\subfigure[$\textsf{Echocardiogram}$]{\includegraphics[width=0.135\textwidth]{pics/spuriousTuplesEcho}}
	\qquad
	\subfigure[$\textsf{Abalone}$]{\includegraphics[width=0.135\textwidth]{pics/spuriousTuplesAbalone}}\hfill	
}
 %   \end{minipage}
%	\end{minipage}
	\caption{\small Spurious Tuples (\%) vs. J-measure (see Sec. \ref{sec:proofCorrectness}). \eat{	Width of the box-plots represent the number of schemes whose $J$-measure is in the appropriate range }
	}
\eat{
	\caption{{\small The $Y$-axis shows the percentage of spurious tuples. The width of the box-plots represent the number of schemes whose $J$-measure is in the appropriate range (e.g., in \textsf{Nursery} there are are more schemes in the range $(0,0.05]$ than in $(0.3,0.4]$).}
	\eat{For every threshold we show the range of these percentages over all acyclic schemas generated for the datasdet in that threshold (e.g., in \textsf{Letter} the percentage of spurious tuples for $\varepsilon=0.02$ is between $0\%$ (minimum), and $5\%$ (maximum)).}
	}
}
	\label{fig:spuriousTuplesChart}
\end{figure}

\normalsize{
\subsection{Accuracy}\label{sec:proofCorrectness}

Next, we analyzed the relationship between the
  $J$-measure of the acyclic schemes, and the percentage of spurious
  tuples.  There is no tight theoretical connection between these two
  measures, except that $J{=}0$ iff there are no spurious tuples, hence
  the need for an empirical evaluation.  The results are presented in
  Figure~\ref{fig:spuriousTuplesChart}.  We generated all acyclic
  schemes with a threshold $\varepsilon \in [0, 0.5]$, partitioned the
  schemes into buckets according to their $J$-measure, and report the
  quantiles of the number of spurious tuples in each bucket.  The
  experiments confirm a consistent relationship between the
  $J$-measure and the percentage of spurious tuples.  Assuming we want
  to have no more than $20\%$ spurious tuples, then we can increase
  $J$ up to $0.1{-}0.3$, depending on the dataset.  The width of the
  boxes represent the number of acyclic schemes in that bucket.  In
  general, as $J$ increases, the number of acyclic schemes will
  eventually decrease: this is particularly visible in
  Fig.~\ref{fig:spuriousTuplesChart} (d).  The explanation lies in the
  fact that larger $J$'s reduce the size (and, hence, the number) of
  minimum separators.  If we allowed $J$ to increase further,
  eventually we find a single schema, where each attribute is a
  separate relation, and where the sole minimal separator is the empty
  set.

}

\begin{figure*}[ht]
	\begin{tabular}{ccc}
	\subfigure[$\textsf{Image}$]{\includegraphics[width=0.33\textwidth]{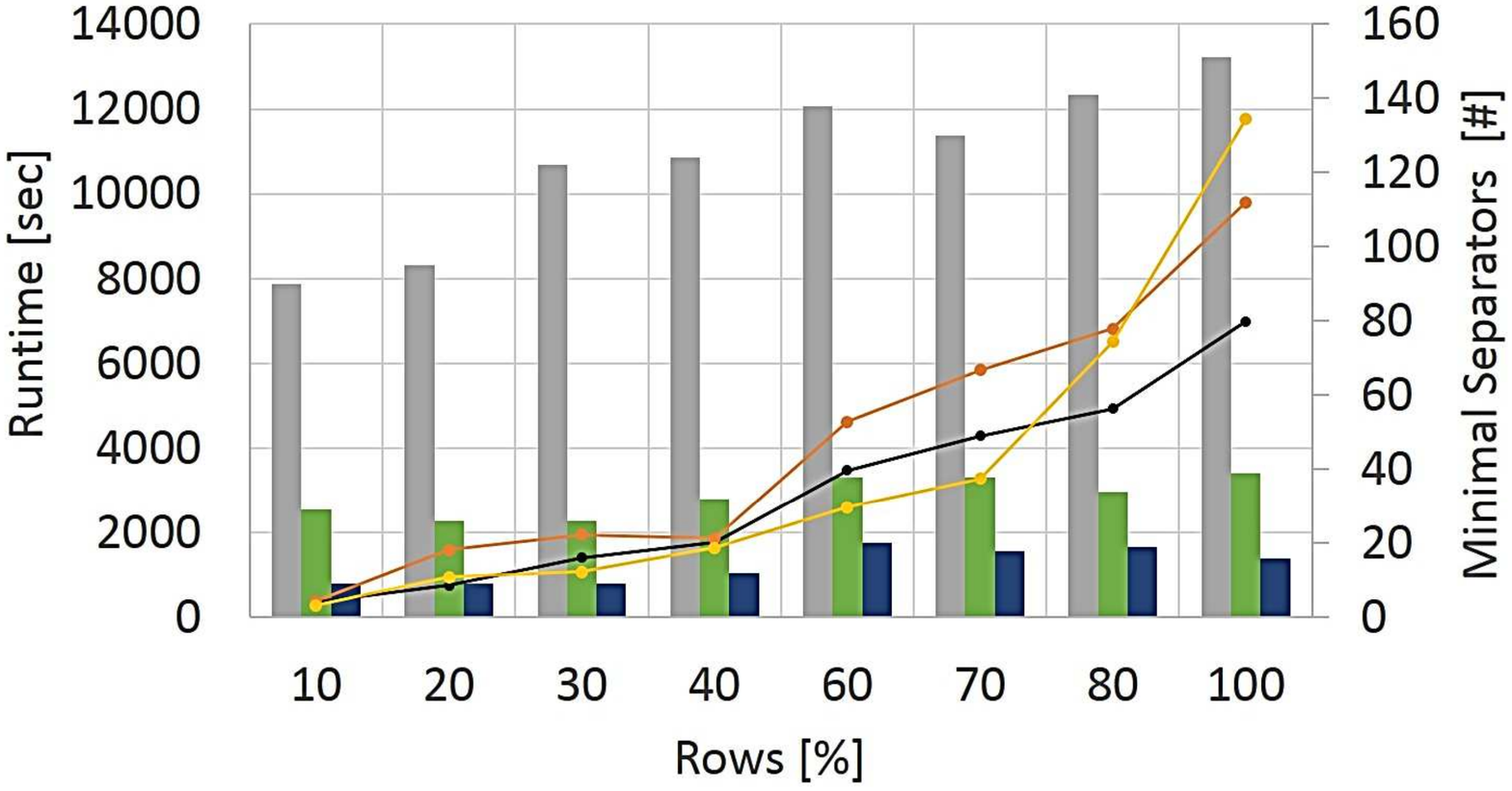}}& 
    \subfigure[$\textsf{Spots} $]{\includegraphics[width=0.32\textwidth]{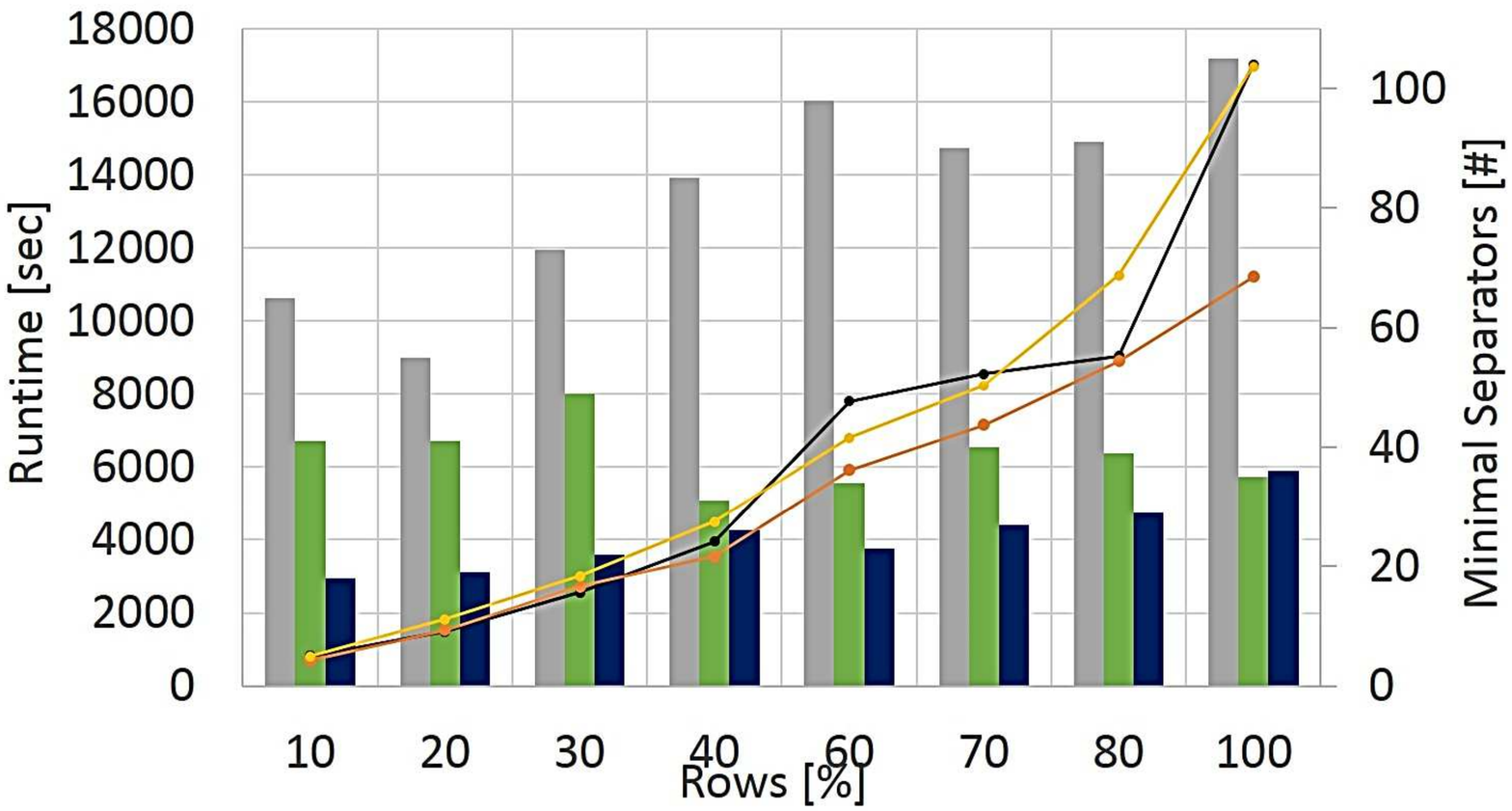}}&
    \subfigure[$\textsf{Ditag Feature}$]{\includegraphics[width=0.33\textwidth]{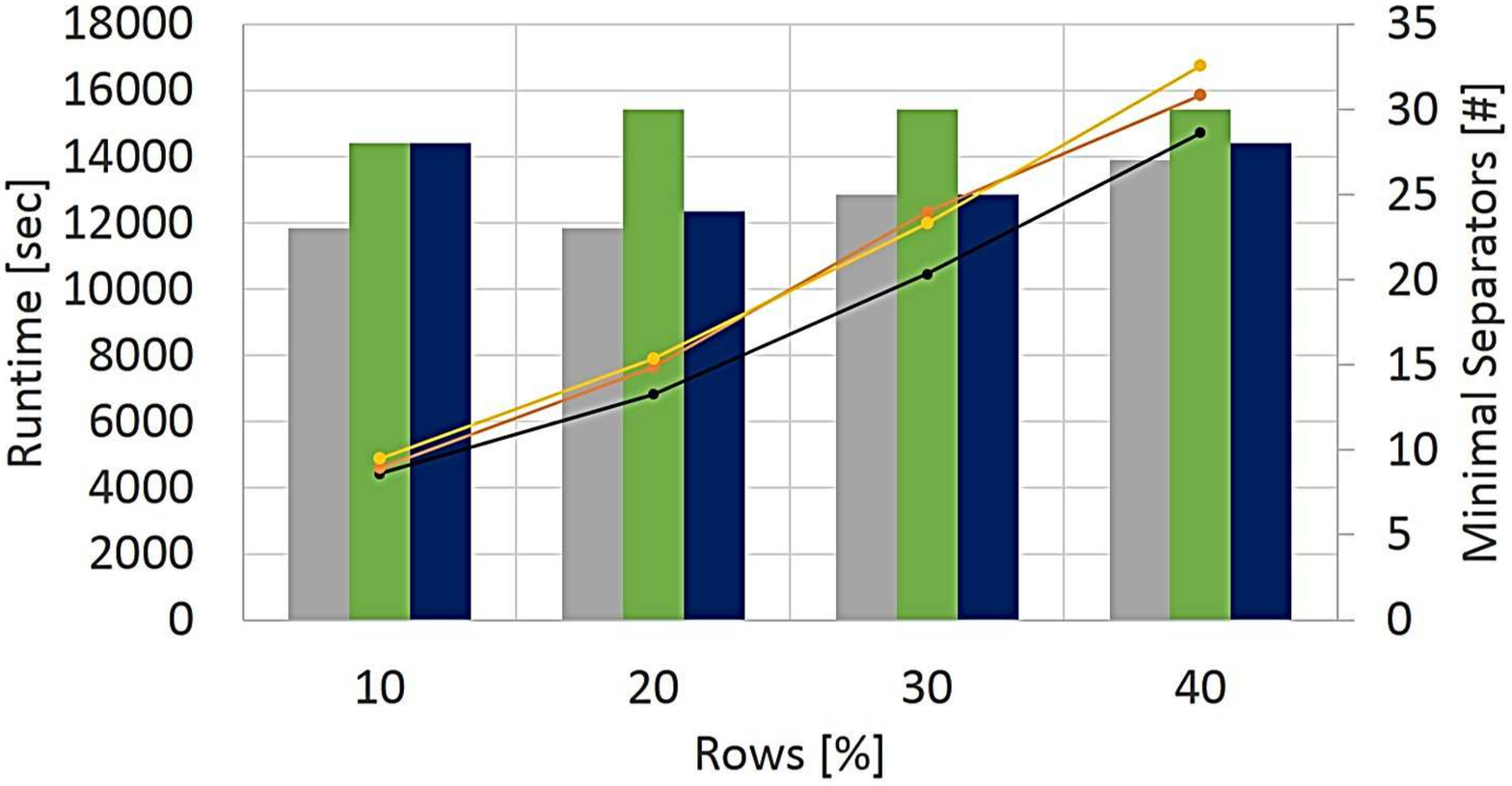}}\\	
    \includegraphics[width=0.12\textwidth]{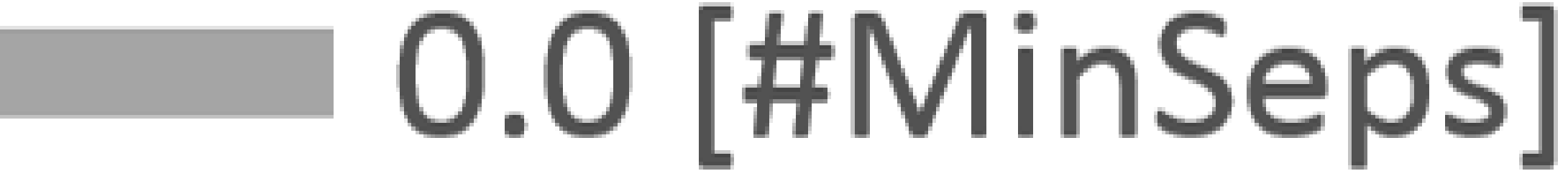}\hfill\includegraphics[width=0.12\textwidth]{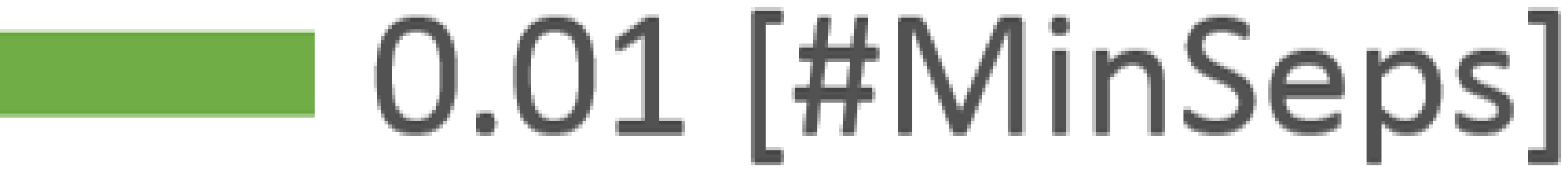}&\includegraphics[width=0.12\textwidth]{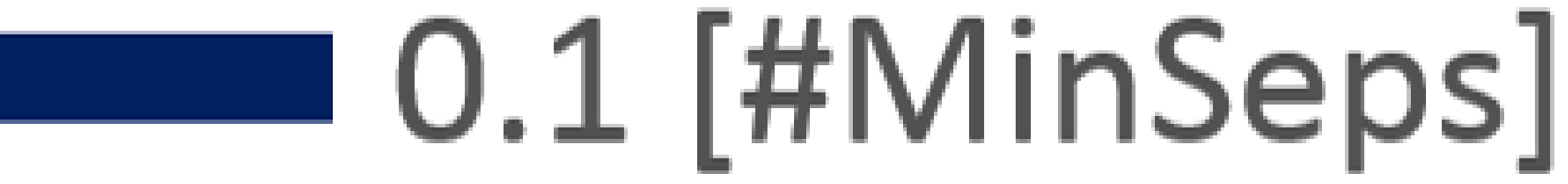}\hfill\includegraphics[width=0.1\textwidth]{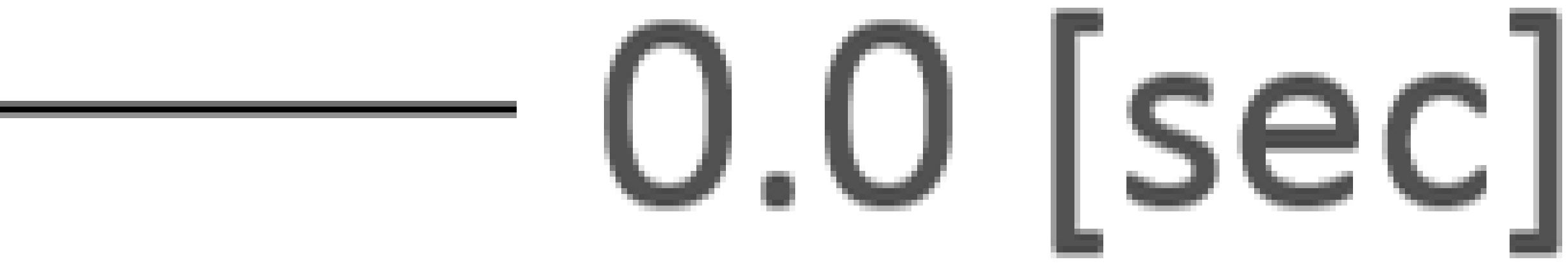}& \includegraphics[width=0.1\textwidth]{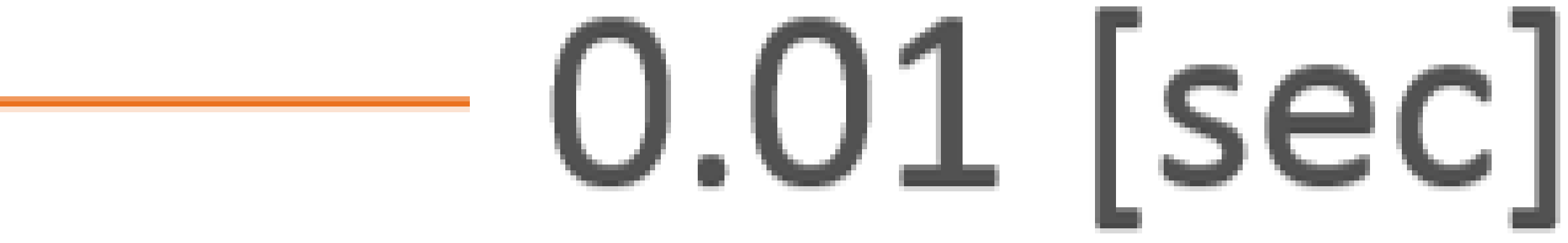}\hfill\includegraphics[width=0.1\textwidth]{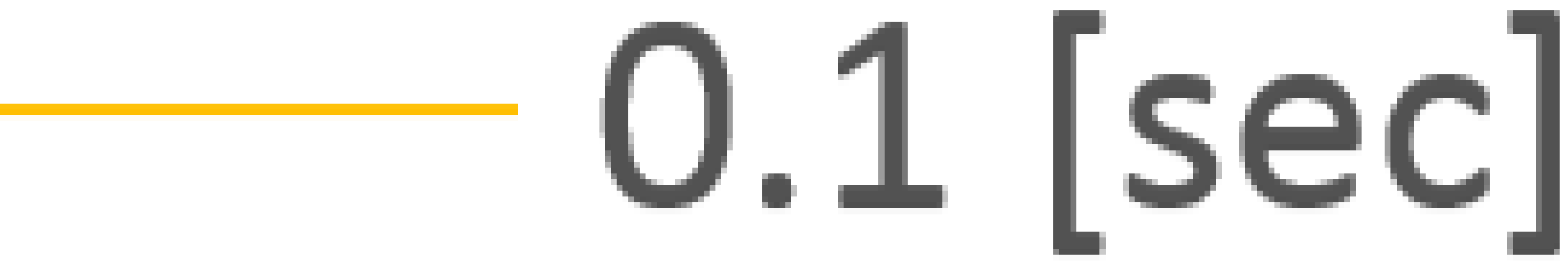}
	\end{tabular}
	\caption{Row scalability experiments, for $\varepsilon \in \set{0., 0.01, 0.1}$ (Sec~\ref{subsubsec:row}).}
	\label{tab:rowScalabilityExperiments}	
\end{figure*}

\begin{figure*}[ht]
	\begin{tabular}{ccc}
		\subfigure[$\textsf{Entity Source}$]{\includegraphics[width=0.33\textwidth]{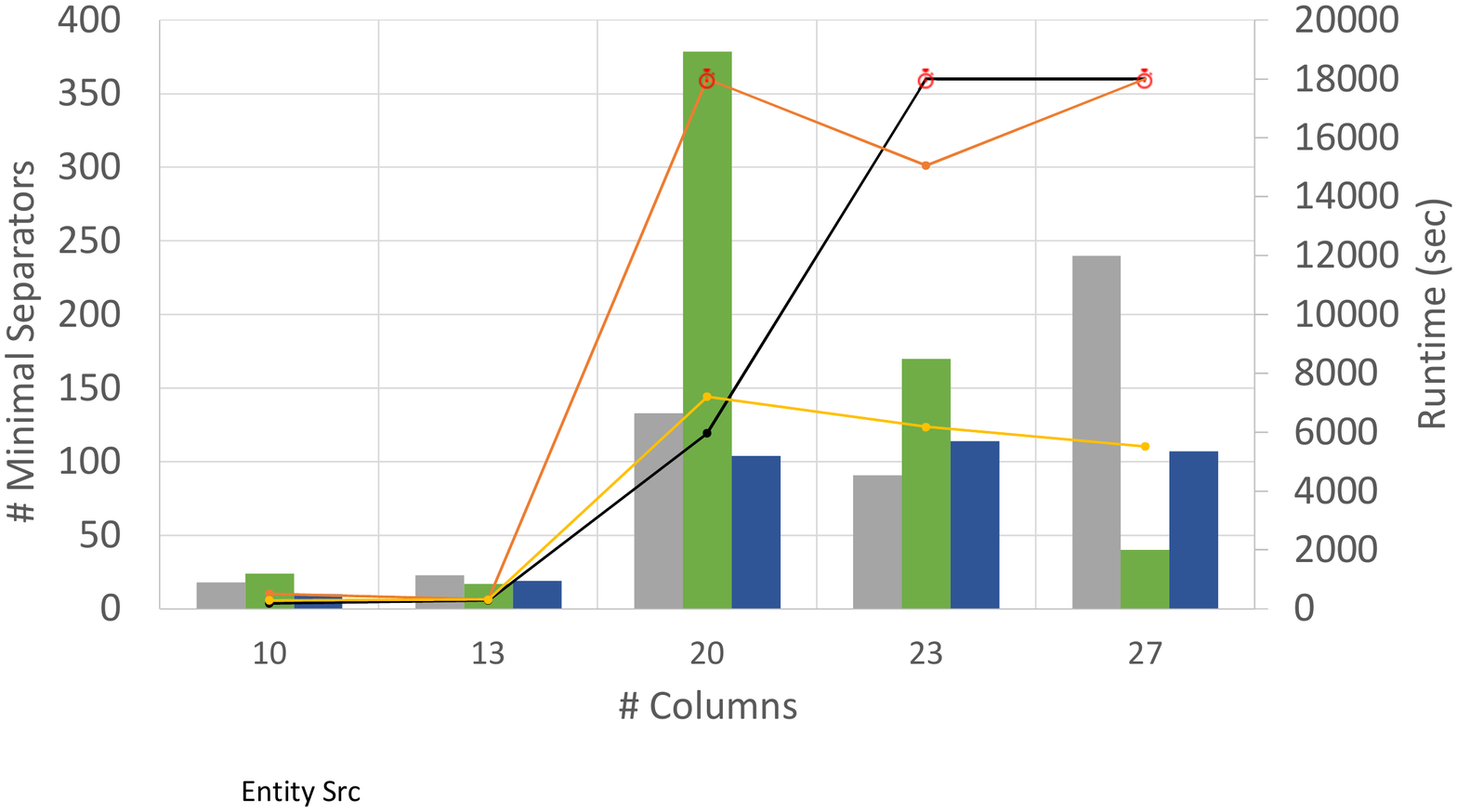}\label{fig:EntitySrc}}& 
		\subfigure[$\textsf{Voter State}$]{\includegraphics[width=0.33\textwidth]{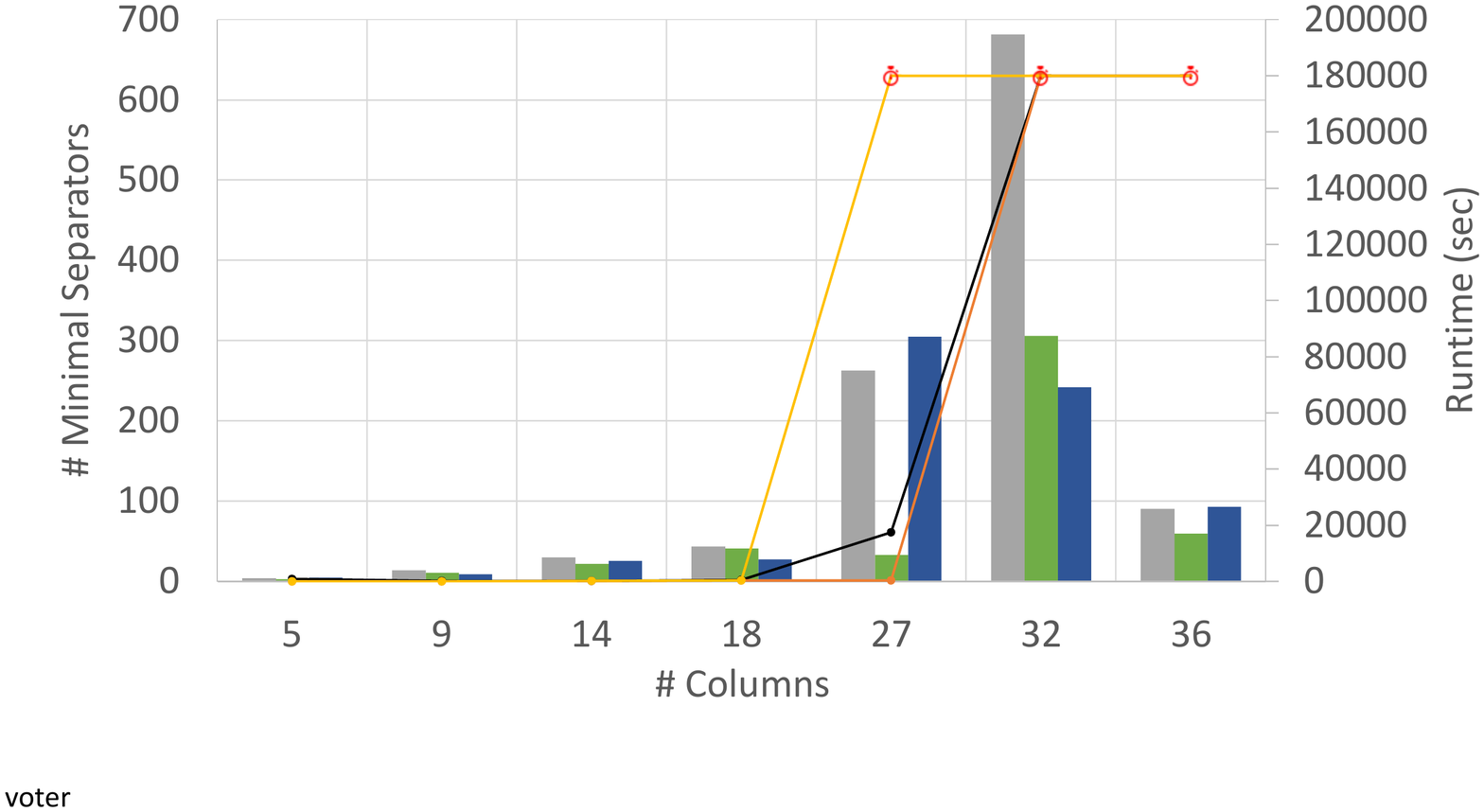}\label{fig:ncVoter}}&
		\subfigure[$\textsf{Census}$]{\includegraphics[width=0.33\textwidth]{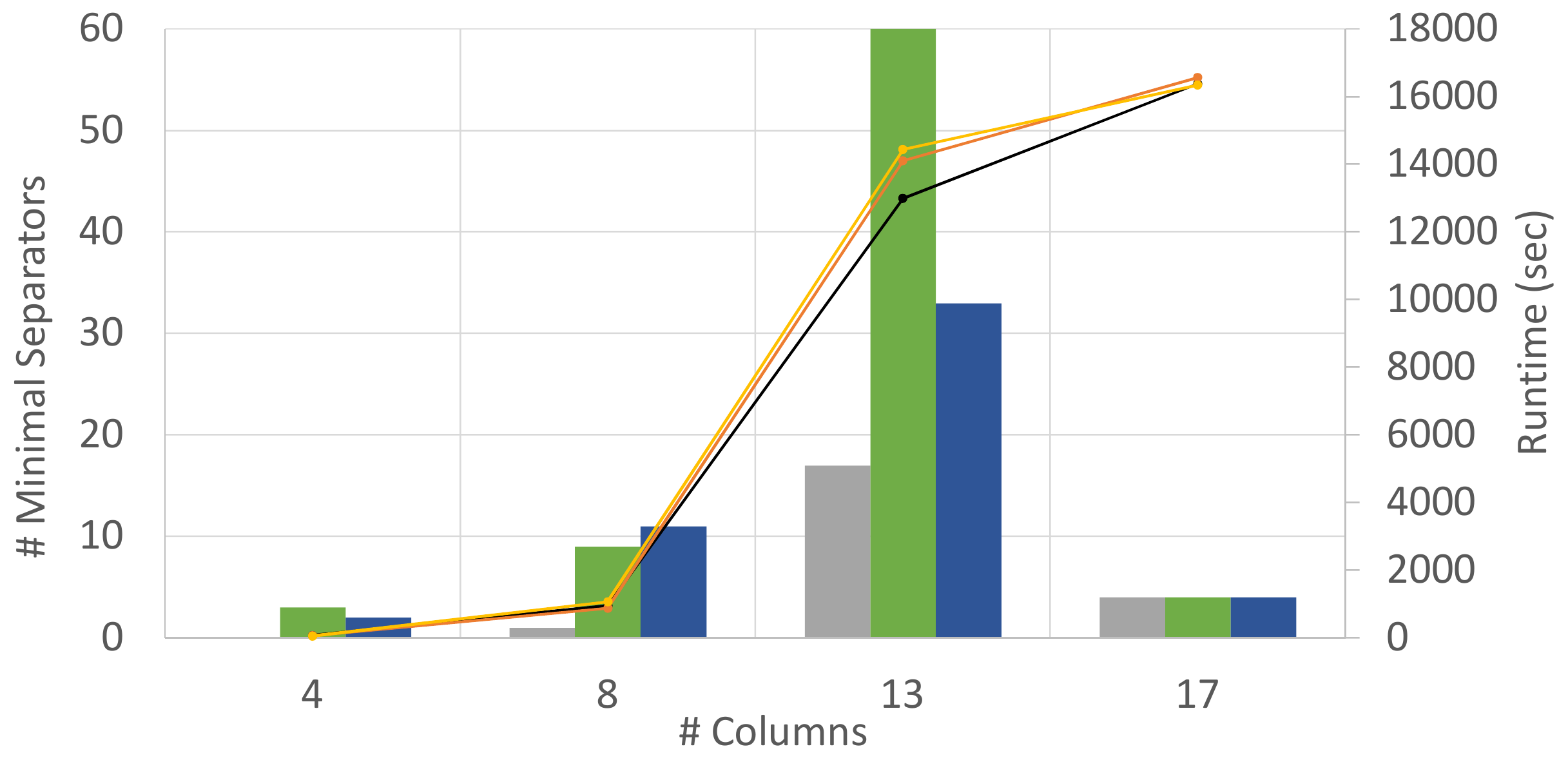}\label{fig:census}}\\	
		 \includegraphics[width=0.12\textwidth]{MinSep00.pdf}\hfill\includegraphics[width=0.12\textwidth]{MinSep001.pdf}&\includegraphics[width=0.12\textwidth]{MinSep01.pdf}\hfill\includegraphics[width=0.1\textwidth]{ScaleTime00.pdf}& \includegraphics[width=0.1\textwidth]{ScaleTime001.pdf}\hfill\includegraphics[width=0.1\textwidth]{ScaleTime01.pdf}
		 \eat{
		\includegraphics[width=0.12\textwidth]{pics/MinSep00.pdf}&\includegraphics[width=0.12\textwidth]{pics/MinSep001.pdf}\hfill\includegraphics[width=0.1\textwidth]{pics/ScaleTime00.pdf}& \includegraphics[width=0.1\textwidth]{pics/ScaleTime001.pdf}
	}
		%&\includegraphics[width=0.32\textwidth]{ColumnScaleLegend}&
	\end{tabular}
	\caption{Column scalability experimentsfor $\varepsilon \in
          \set{0, 0.01,0.1}$ (Sec~\ref{subsubsec:row}). We timed out at five hours (red clock).}
	\label{tab:columnScalabilityExperiments}	
\end{figure*}

\eat{
\begin{figure*}[ht]
	\begin{tabular}{ccccc}
		\subfigure[\textsf{Classification}\label{chart:classificationFullMVDs}]{\includegraphics[width=0.25\textwidth]{pics/classificationFullMVDs}}& 
		\subfigure[\textsf{BreastCancer}]{\includegraphics[width=0.25\textwidth]{pics/breastCancerFullMVDs}\label{chart:breastCancerFullMVDs}}&
		\subfigure[\textsf{Adult}]{\includegraphics[width=0.25\textwidth]{pics/adultFullMVDs}\label{chart:adultFullMVDs}}&
		\subfigure[\textsf{Bridges}]{\includegraphics[width=0.25\textwidth]{pics/bridgesFullMVDs}\label{chart:bridgesFullMVDs}}
		\\		
		%\subfloat[]{\includegraphics[width=0.2\textwidth]{pics/horse}}
		\includegraphics[width=0.13\textwidth]{pics/FullMVDsLegendMinSepNum}& 
		\includegraphics[width=0.13\textwidth]{pics/FullMVDsLegendFullMVDsNum} &
		\includegraphics[width=0.13\textwidth]{pics/FullMVDsLegendFullMVDsRT}&
		\includegraphics[width=0.13\textwidth]{pics/FullMVDsLegendMinSepRT}&
	\end{tabular}
	\caption{Full MVDs Experiments. Red stopwatch indicates that the algorithm stopped after 30 minutes.}
	\label{tab:FullMVDsExperiments}	
\end{figure*}
}

\begin{figure*}[ht]
	\begin{tabular}{cccc}
		\subfigure[\textsf{IMAGE}]{\includegraphics[width=0.25\textwidth]{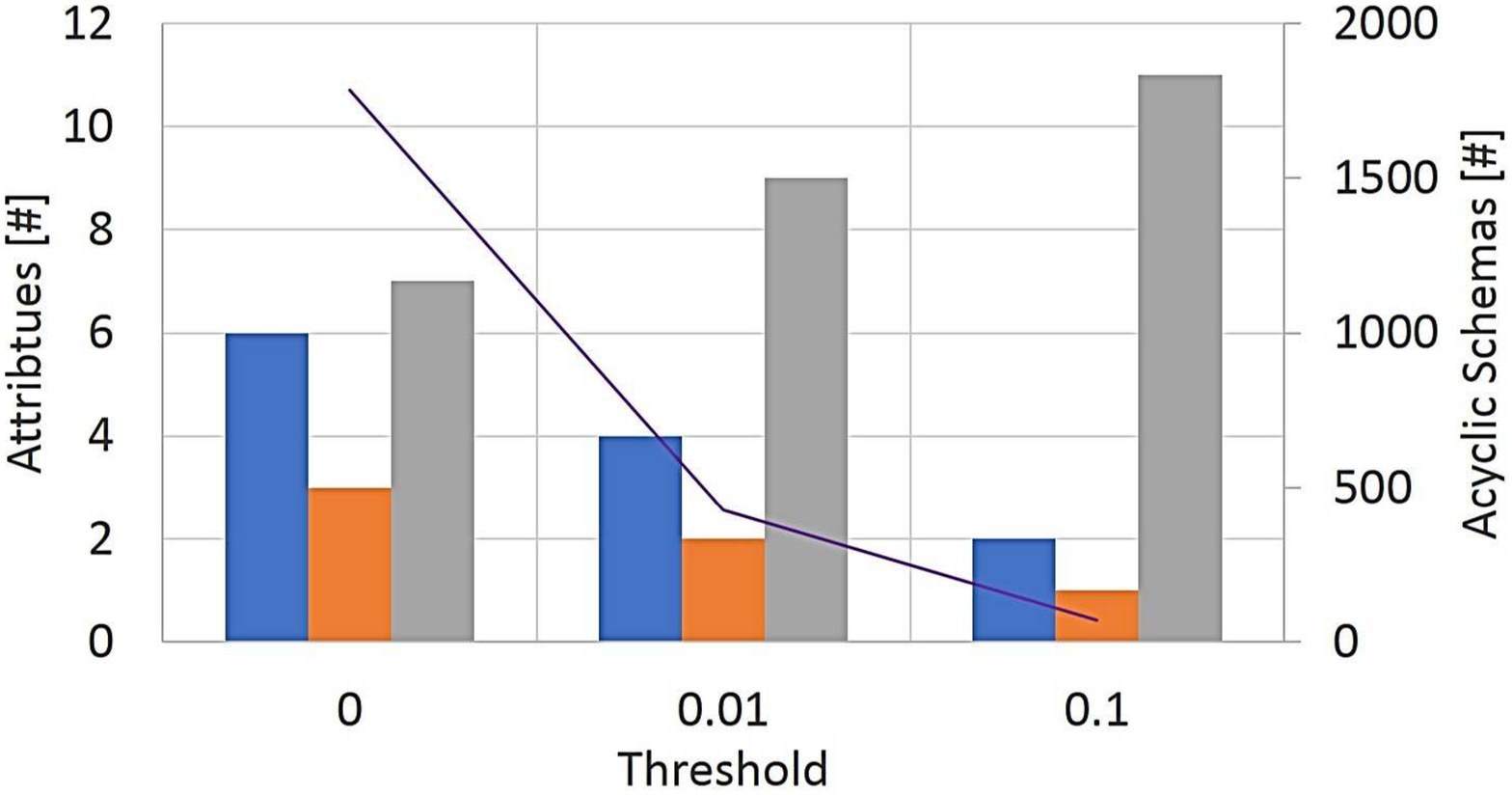}\label{chart:image}}& 
		\subfigure[\textsf{Abalone}]{\includegraphics[width=0.25\textwidth]{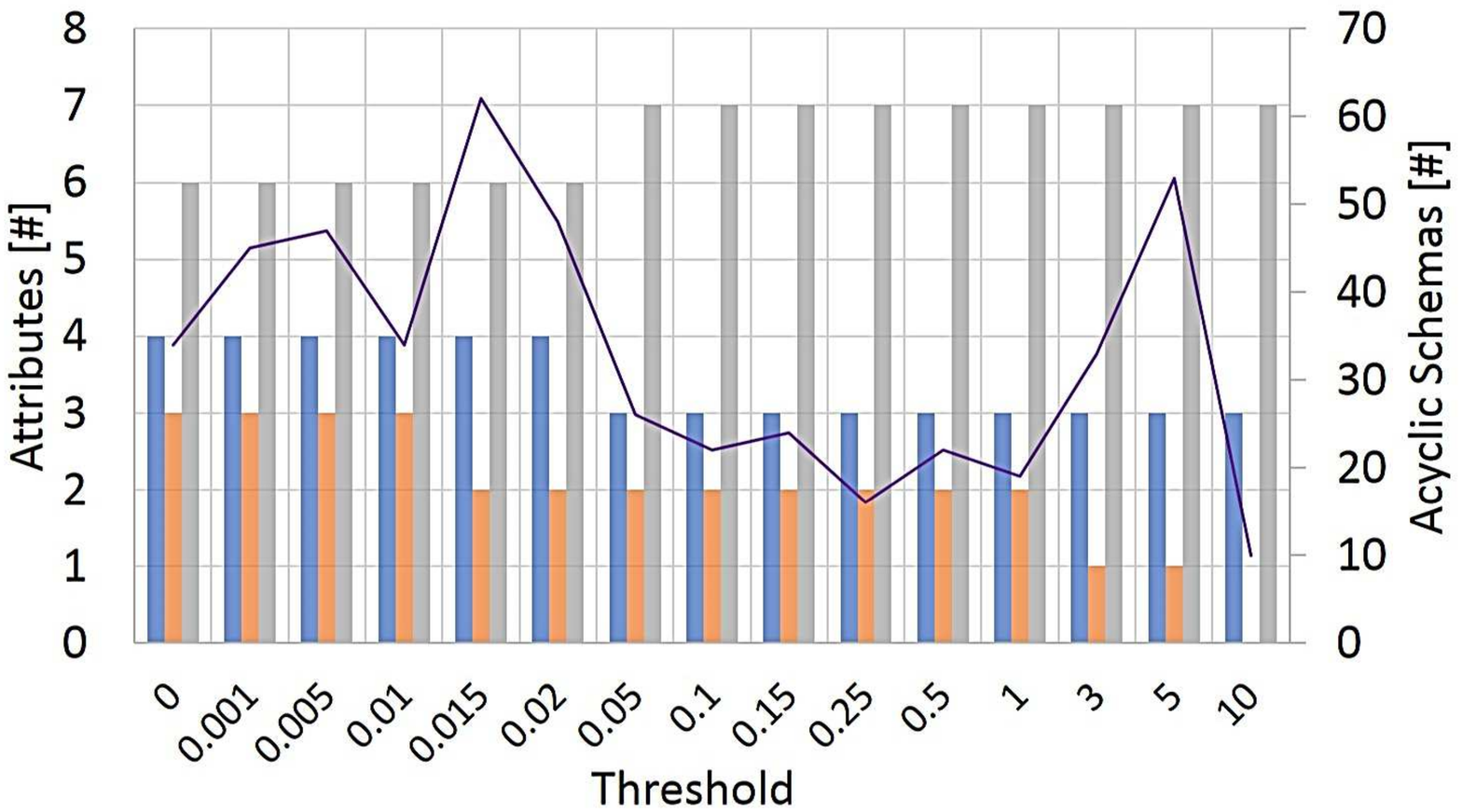}\label{chart:abalone}}& 
		\subfigure[\textsf{Adult}]{\includegraphics[width=0.25\textwidth]{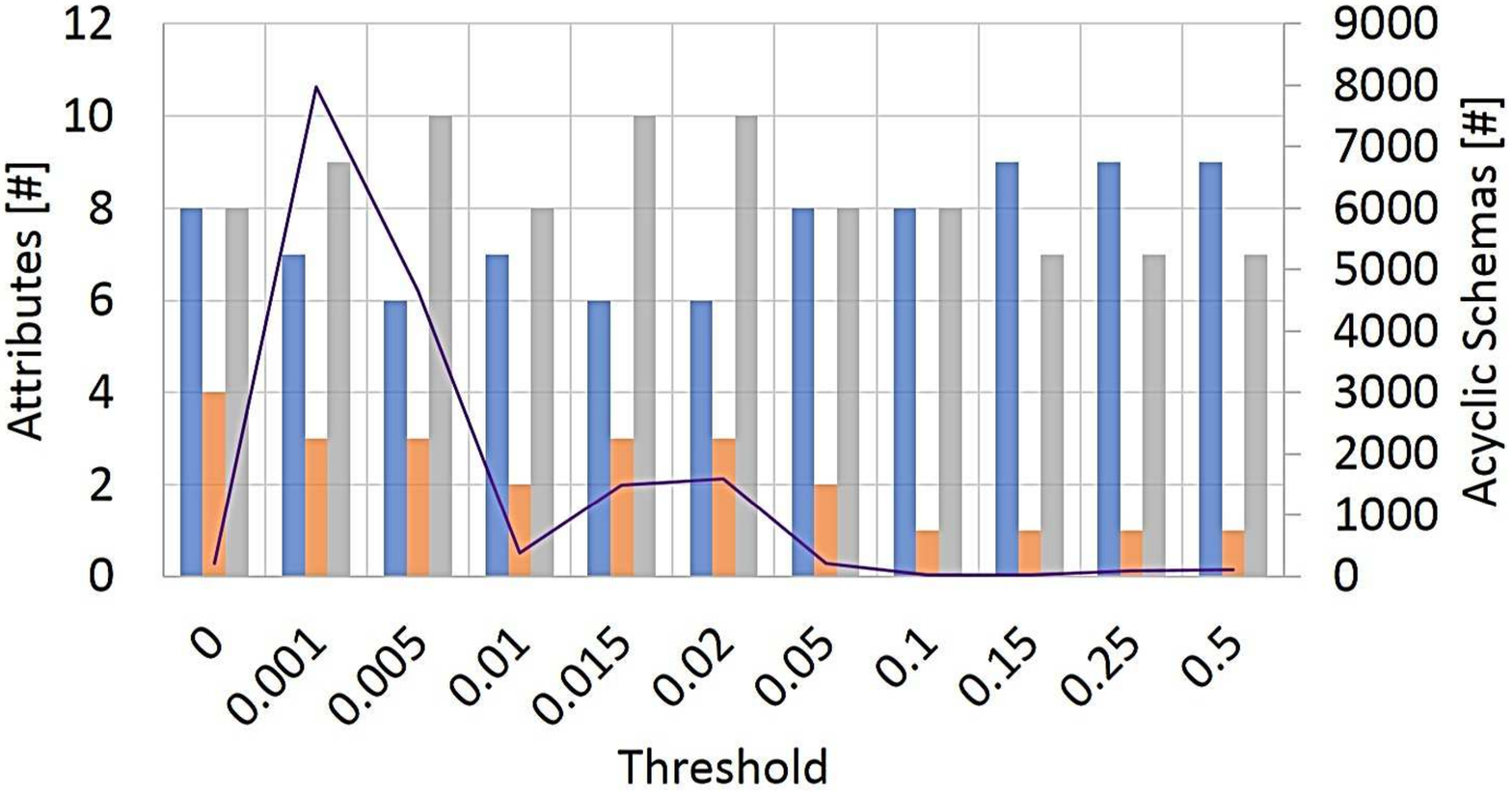}\label{chart:adult}}&
		\subfigure[\textsf{BreastCancer}]{\includegraphics[width=0.25\textwidth]{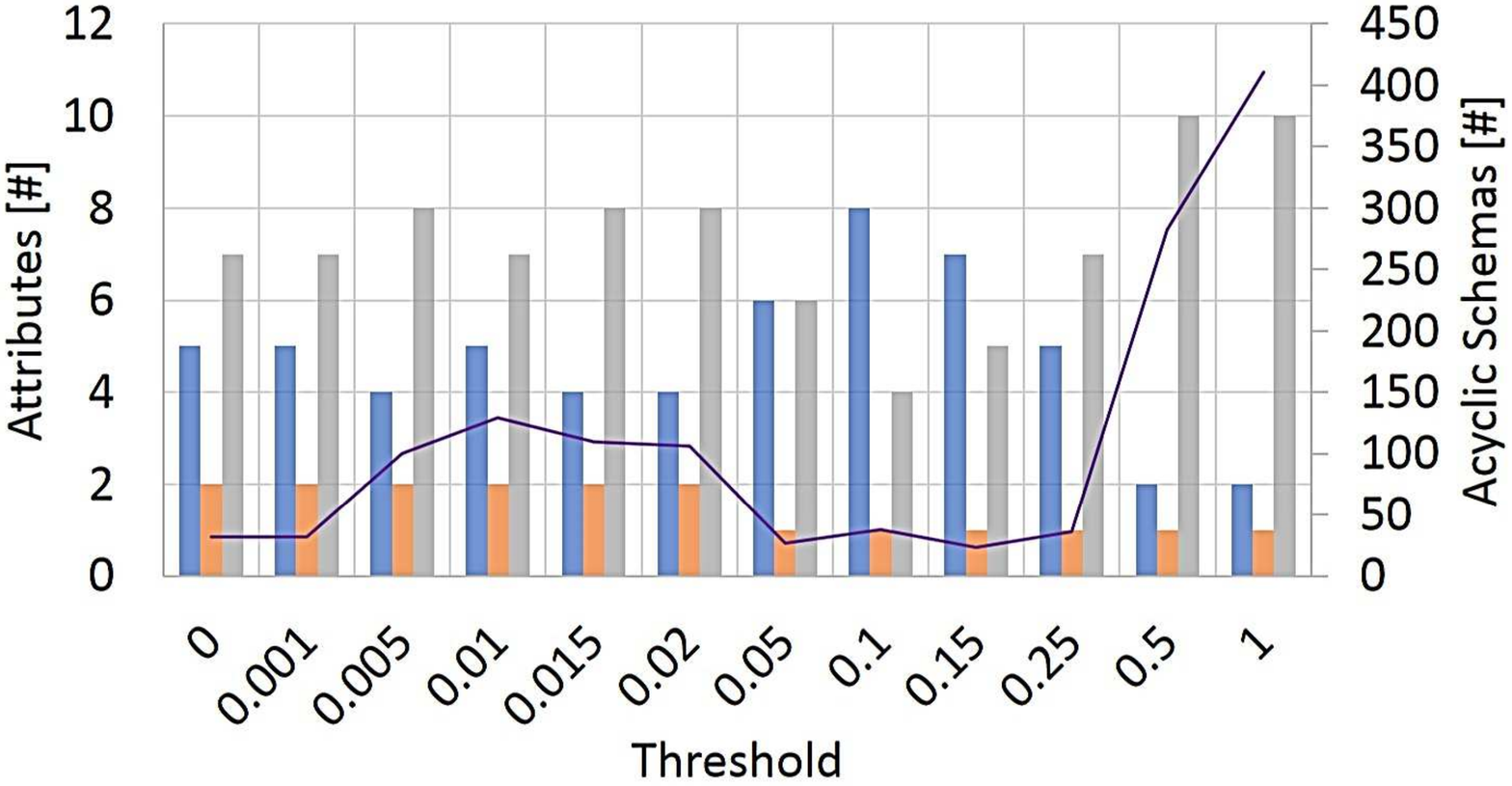}\label{chart:breast}}
		\\
		\subfigure[\textsf{Bridges}]{\includegraphics[width=0.25\textwidth]{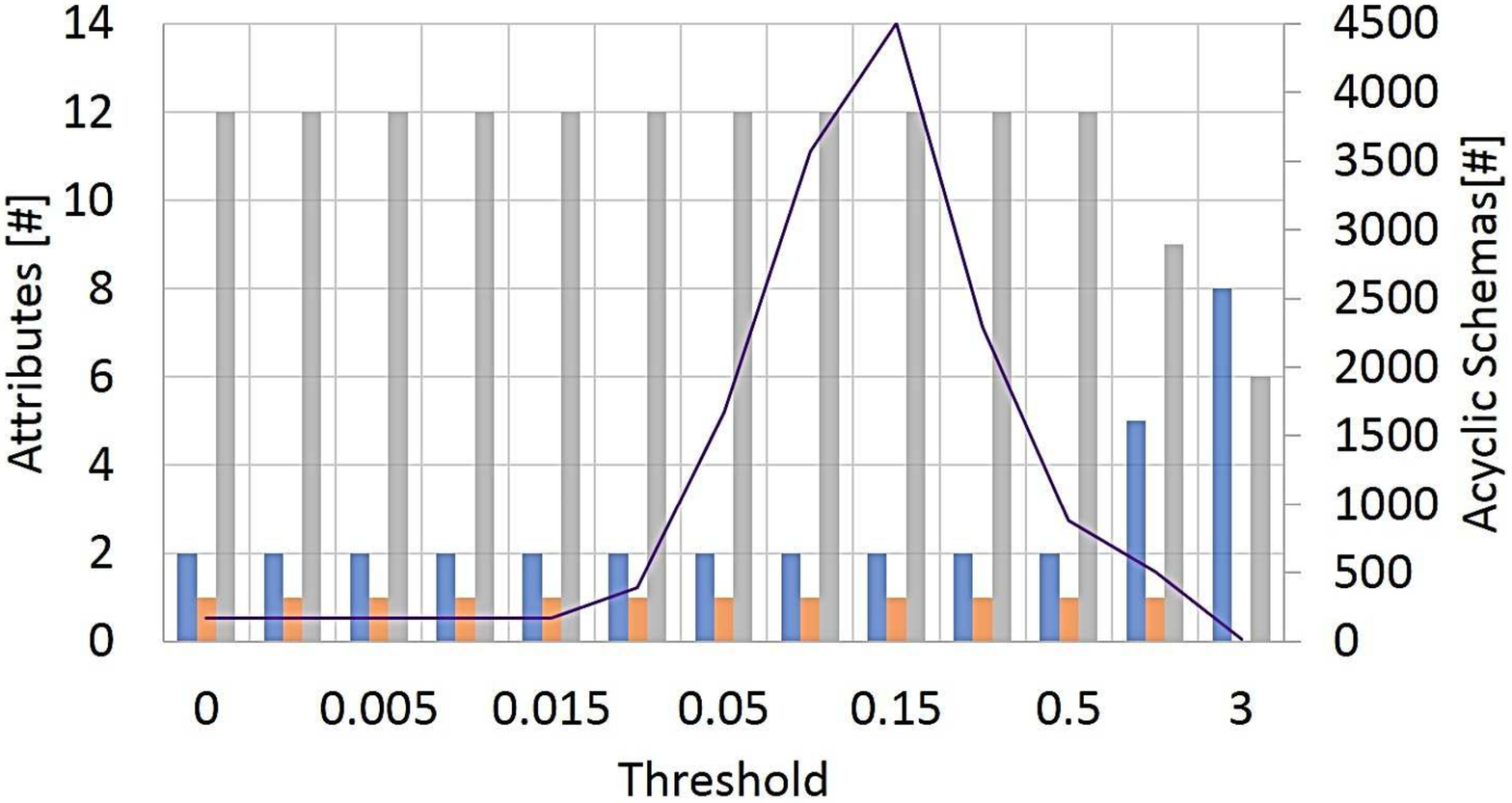}\label{chart:bridge}}&
		\subfigure[\textsf{Echocardiogram}]{\includegraphics[width=0.25\textwidth]{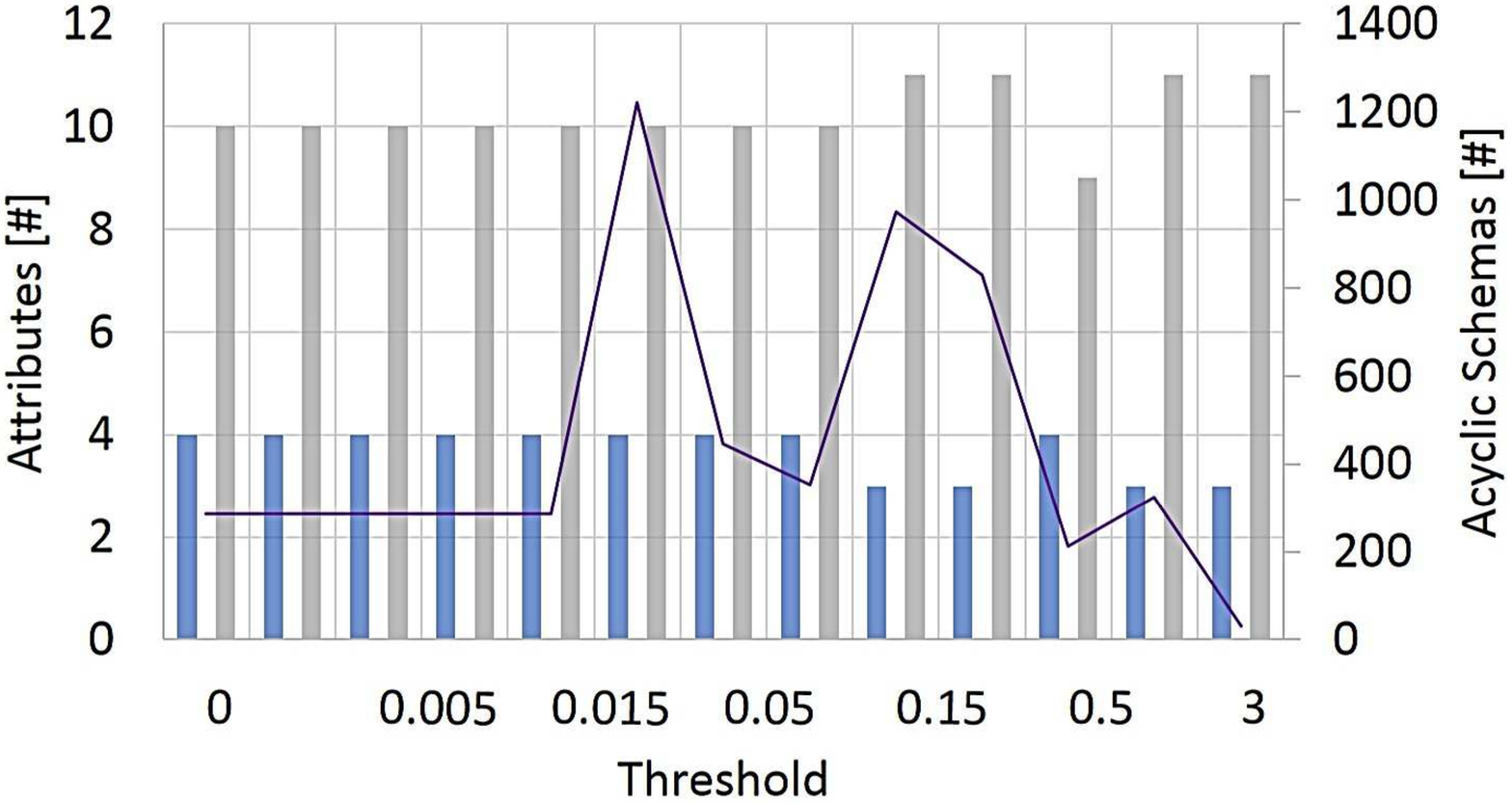}\label{chart:echo}}	&
		\subfigure[\textsf{FD\_Reduced\_15}]{\includegraphics[width=0.25\textwidth]{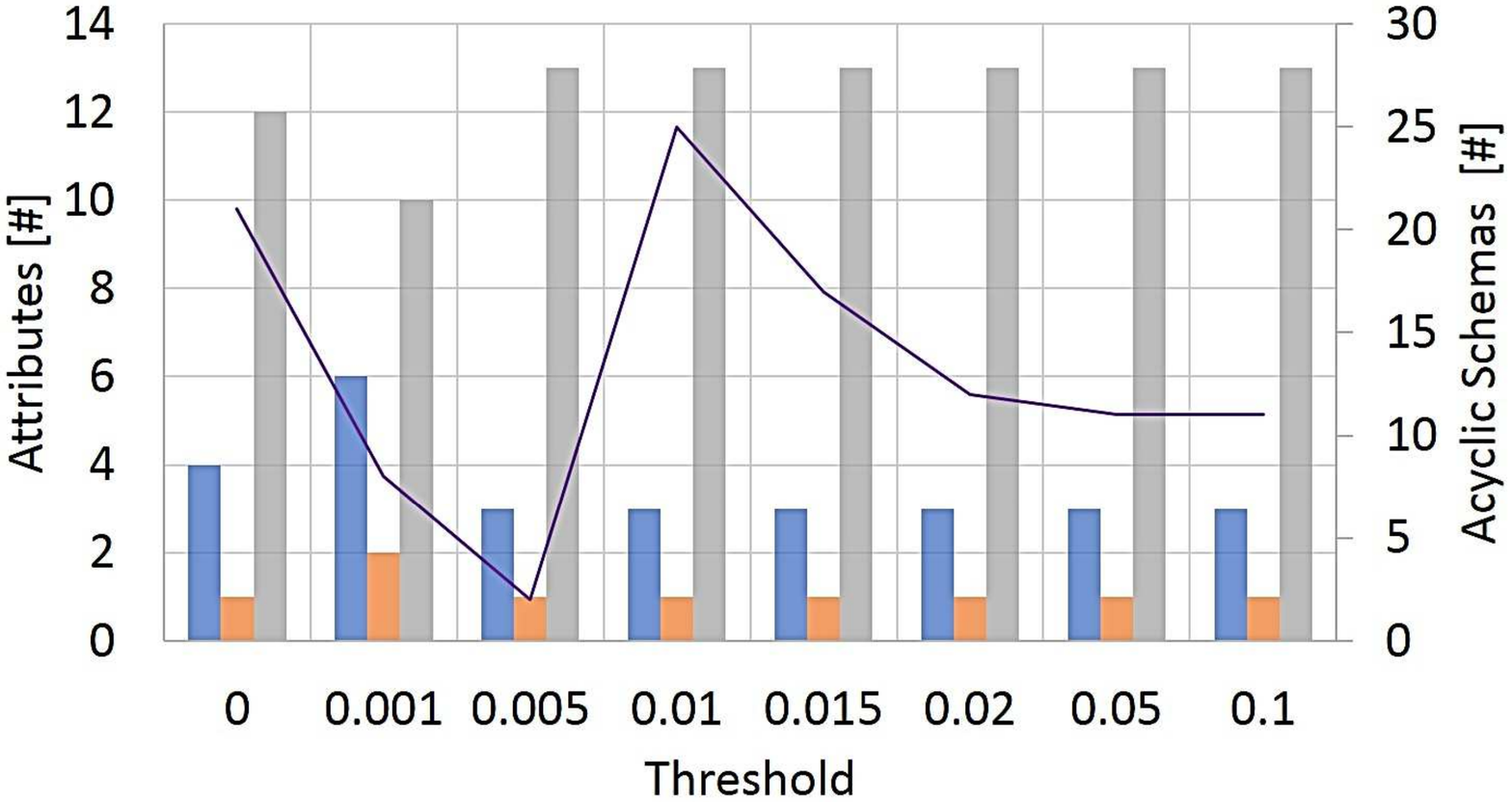}\label{chart:reduced15}}	&
		\subfigure[\textsf{Hepatitis}]{\includegraphics[width=0.25\textwidth]{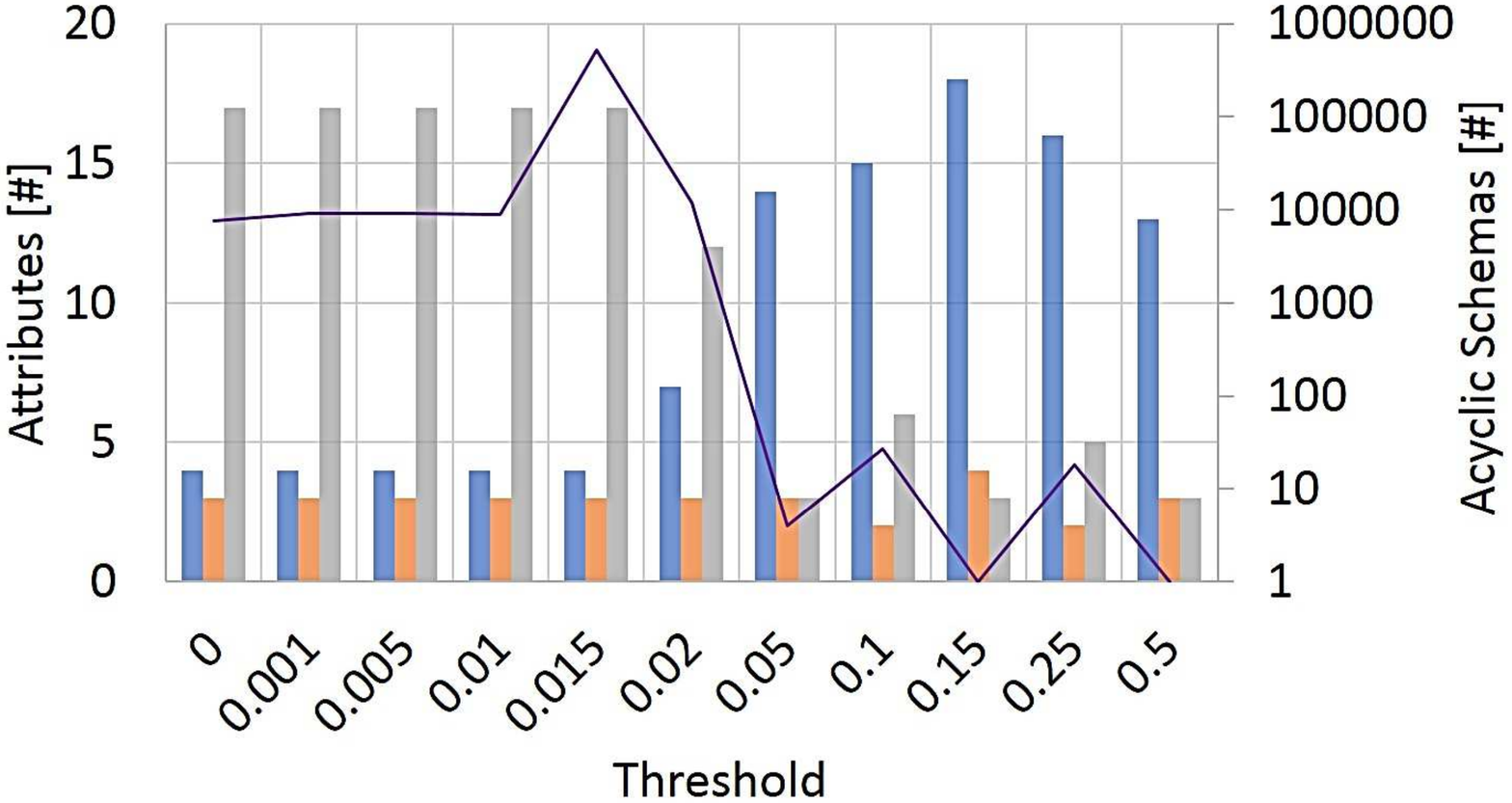}\label{chart:hepatits}}	\\
		%\subfloat[]{\includegraphics[width=0.2\textwidth]{pics/horse}}
		\includegraphics[width=0.2\textwidth]{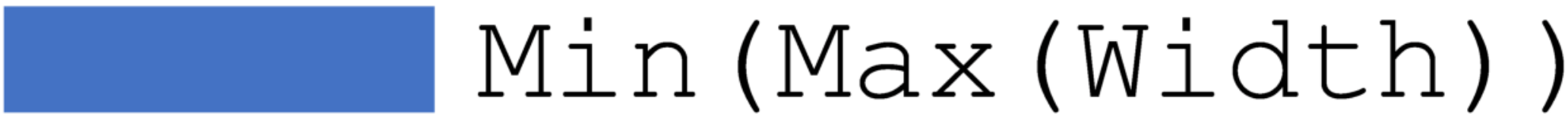}& 
		\includegraphics[width=0.2\textwidth]{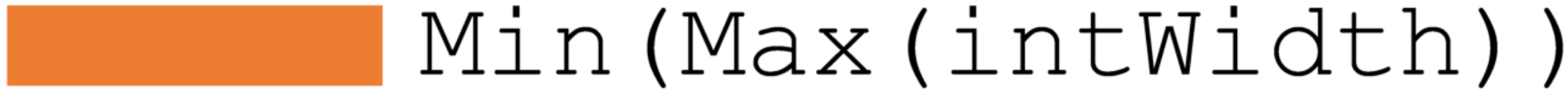}& 
		\includegraphics[width=0.2\textwidth]{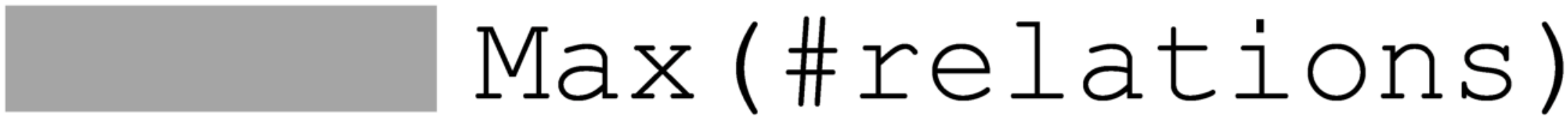}&
		\includegraphics[width=0.2\textwidth]{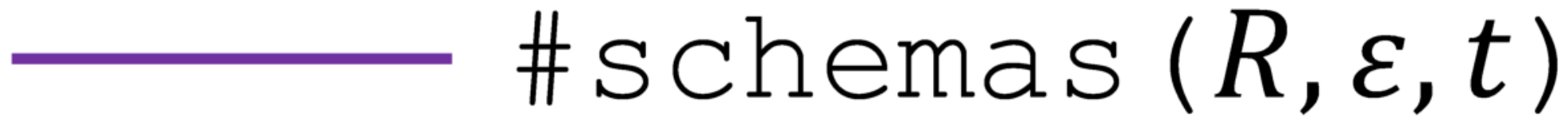}
	\end{tabular}
	\caption{Quality of approximate schemas (Sec.~\ref{sec:enumerationExperiments})}
	\label{tab:EnmerationExperiments}	
\end{figure*}

\subsection{Scalability}\label{sec:scalability}
Next, we evaluated the scalability of \system.  We
    started by computing all exact MVDs ($\varepsilon=0$) on all 20
    datasets and report the runtimes in
    Table~\ref{tab:evaluationDatasets}.  On five of the datasets, our
    system timed out after 5h: for \textsf{Atom Sites},
    \textsf{REFLNS}, and \textsf{Voter State}, it did report a large
    number of full MVDs, while for \textsf{DITAG Feature} and
    \textsf{Census} it did not find any within this limit, but it
    terminated on subsets, as we report below.

    The discovery of acyclic schemes has three parts:
    computing all minimal separators
    (Sec.~\ref{sec:minimal:separators}), discovering all full MVDs
    (Sec.~\ref{sec:getFullMVDs}), and enumerating the acyclic schemes
    (Sec.~\ref{sec:EnumerateAcyclicSchemas}).  We found that the first
    step by far dominates the total runtime, and we report it here; we
    report the other two runtimes in the technical report.  We report
    here the time to compute all minimal separators as a function of
    \#rows, and of \#columns.

\subsubsection{Row Scalability}
\label{subsubsec:row}
We evaluated the algorithm over three large datsets: \textsf{Image},
\textsf{foursquare}, and \textsf{Ditag Feature}. We included all
columns, and a subset of $10 \%$ to $100 \%$ of the tuples. The
results are in Figure~\ref{tab:rowScalabilityExperiments}. In general,
we found that the runtime increases mostly linearly with the size of
the data \eat{.  The runtime increases} even when the number of minimal
separators is mostly constant, e.g.  for \textsf{Image} and
\textsf{Ditag Feature}.

\eat{We note
that for all thresholds we see that the runtime scales linearly with
the dataset size. For both \textsf{Image} and \textsf{Ditag Feature}
there are many more minimal separators generated with a threshold of
$0.0$, but this seems to have little affect on the runtime.}

\subsubsection{Column Scalability}

Next, we varied the number of columns.  Here we kept all rows of the
datasets, and included between $10 \%$ to $100 \%$ of the columns. The
results are presented in
Figure~\ref{tab:columnScalabilityExperiments}.  
We let the algorithm run for 5 hours and measured the resulting number of minimal separators. For example, in the \textsf{Voter State} 
dataset with 32 columns \system~discovered 682, 306 and 242 minimal separators for thresholds 0,0.01, and 0.1 respectively, within the 5h time limit.
\eat{
We note that even in situations where the algorithm 
In all datasets, even though the algorithm timed out after 5h, 
The algorithm timed out after 5h on all 36 columns of \textsf{Voter
  State}, but reported 262 minimal separators.}We found that the
runtime is affected both by the number of attributes, and, quite
significantly, by the number of minimal separators.
This is explained by considering Corollary~\ref{corr:Delay} that analyzes the \e{delay} between the output of minimal separators. First, we note that the delay depends exponentially on the number of attributes (via $\algname{getFullMVDs}$, see Sec.~\ref{subsubsec:an:optimization}) which explains why the delay significantly increases with the number of attributes, leading to an overall reduction in the number of minimal separators returned. Second, the delay also depends on  the number of minimal separators generated up to that point, which explains the high runtime in cases where the data contains a large number of minimal separators.

\eat{
In some of the datasets there is a drop in the number of minimal separators above a certain number of columns because the time to generate a single one 
An explanation for this is that 
}

\eat{
In all three
experiments, we see a high correlation between the running time and
the number of minimal separators generated. A possible cause for this
is the long time it takes the procedure $\algname{getFullMVDs}$ to
arrive at a full MVD when there are many attributes.
}

\eat{
\subsection{From minimal separators to full MVDs}\label{sec:FullMVDs}
We now experiment with the transition from minimal separators to full MVDs. 
We recall that an MVD $\phi$ is full with regard to $\varepsilon$ if
$\relation \models_\varepsilon \phi$ and for all MVDs $\psi \succ
\phi$ that strictly refine $\phi$ then $\relation
\not\models_\varepsilon \psi$. 

In this set of experiments we have, for every pair of attributes $A,B \in \Omega$, the set $\minsep_{\varepsilon,A,B}(\relation)$ of minimal $AB$-separators that hold in $\relation$ w.r.t. $\varepsilon$, and we apply the algorithm for generating the set $\fullMVDs_{\varepsilon,A,B}$ by calling $\algname{getFullMVDs}$ (Fig.~\ref{alg:NaiveMineAllJDs}) with the pair $(A,B)$, and an unlimited number of MVDs to return (i.e., $K=\infty$). 
\footnote{We actually call the optimized version of this algorithm, $\algname{getFullMVDsOpt}$ described in the full version of this paper.} \revisionrevfive{In particular, the runtimes presented here do not include the time taken to mine the minimal separators. The performance of this phase is analyzed in Section~\ref{sec:scalability} and Table~\ref{tab:evaluationDatasets}.}
% $\fullMVDs(\relation)=\cup_{S\in \minsep(\relation)}\fullMVDs(S)$.

\eat{Let $A,B \in \Omega$ be a pair of attributes.
For every minimal $AB$ separator $S \in \minsep(\relation)$, we compute the set of $\fullMVDs(S)$ by calling the method $\algname{getFullMVDs}$ (Figure~\ref{alg:NaiveMineAllJDs}) with parameters $S$, $\varepsilon$, the pair $(A,B)$, and an unbounded $K$.}
We conduct the experiment as follows. For every dataset we vary the threshold in the range $[0,0.5]$, and for every threshold execute the procedure $\algname{getFullMVDsOpt}$ for a total of $30$ minutes. The results are presented in Figure~\ref{tab:FullMVDsExperiments}. When the threshold is $\varepsilon=0$ then the number of full MVDs is identical to the number of minimal separators as expected by Lemma~\ref{lem:singleMaxMVDLemma}.
In practice, when the threshold is
$0$, our algorithm for mining all minimal separators also
discovers all full MVDs. 
As the threshold increases so does the difference between the number of minimal separators and the number of full MVDs. Overall, Algorithm $\algname{getFullMVDsOpt}$ for generating full MVDs is capable of reaching a rate of about 55 full MVDs per second \revisionrevfive{for thresholds larger than $0.1$ (see Figures~\ref{tab:FullMVDsExperiments}(a),~\ref{tab:FullMVDsExperiments}(b), and~\ref{tab:FullMVDsExperiments}(d)).}

}

\vspace{-0.2cm}
\subsection{Quality}\label{sec:enumerationExperiments}
We conducted an empirical evaluation of the
  quality of the schemes generated by \system, and report the results
  in Figure~\ref{tab:EnmerationExperiments}.  
  Per threshold, we ran the enumeration algorithm for half an hour and measured the number of schemes generated (i.e., $\#\texttt{schemes}$), and the following quality measures, for which we report on their aggregate values.

\begin{enumerate}
\item The number of relations in any scheme $\schema$ generated,\eat{relations in $\schema$,} denoted
  $\numRelations(\schema)$.
\item The \e{width} attained by any generated scheme, where width refers to 
the largest number of attributes in any relation of  $\schema$. Formally\footnote{$\width(\schema)$ is precisely the
    treewidth plus one.},
  $\width(\schema) \eqdef \max_{i\in [1,m]}|\Omega_i|$.
\item The \e{intersection width}  attained by any scheme generated, where intersection width refers to the the largest size of any separator of $\schema$.  Formally,
  $\intWidth(\schema)\eqdef\max_{i,j\in
    [1,m]}|\Omega_i{\cap}\Omega_j|$.
\end{enumerate}
\vspace{-0.01cm}
In Figure~\ref{tab:EnmerationExperiments} we increased the threshold
$\varepsilon$, and report for each threshold the maximum
$\numRelations(\schema)$, and the minimum
$\width(\schema), \intWidth(\schema)$ for all schemas at that
threshold.  In general, we observed that, as we increase the
threshold, the system can find more interesting schemes.  For example,
for \textsf{Image} and \textsf{Abalone}, $\width$ (blue bar)
decreases, which means that the number of attributes in the widest
relation decreases. For \textsf{Adult} and \textsf{BreastCancer} the
number of relations ($\numRelations$ -- gray bar) increases, another
indicator of the quality of the schema. 
\eat{
We note that all graphs
report the pdf rather than the cdf, that is we measure only the
schemes at a given threshold, not those below, which means that the
metrics are not monotone.  Normally, an application would generate all
schemes {\em below} a threshold and choose the best, using a variety,
possibly domain specific quality metrics.
}

\eat{
Let $\schemas(\relation,\varepsilon, t)$ denote the set of  schemes generated for relation $\relation$, with error threshold $\varepsilon$, within $t$ seconds.
For every schema $\schema \in \schemas(\relation,\varepsilon, t)$ we calculate the three measurements $\numRelations(\schema)$,  $\width(\schema)$, and $\intWidth(\schema)$, and plot the best value of any schema in $\schemas(\relation,\varepsilon, t)$. 
In our experiments we varied the threshold $\varepsilon$ in the range $[0.0,3]$, and executed the enumeration algorithm for a duration of $t=30$ minutes. 
We also plotted the number of schemes $\#\schemas(\relation,\varepsilon, t)$ generated during $t=30$ minutes. The results are summarized in Figure~\ref{tab:EnmerationExperiments}.
}

\eat{

In some cases we see a clear advantage to a higher threshold. For example, in the \textsf{Image} dataset (Fig.~\ref{chart:image}), the maximum width and intersection width go down from 6 to 2, and 3 to 1 respectively, while the number of relations goes up from 7 to 11.
We see the same trend in the \textsf{Abalone} dataset (Fig.~\ref{chart:abalone}) where both the width and intersection width decrease with threshold. 
}
\eat{
When $\varepsilon=10$, the algorithm generates a schema whose largest intersection (i.e., $\texttt{intWidth}$) is $0$, which means that the decomposition is comprised of independent relations with no common attribute. In practice, this translates to a tradeoff between the extent of decomposition and accuracy, since we've seen in Section~\ref{sec:proofCorrectness} that thresholds over 0.5 generally lead to a significant loss in accuracy.
}

\eat{
In terms of the number of acyclic schemes generated, we see a large variety among the datasets. In the \textsf{BreastCancer} dataset (Fig.~\ref{chart:breast}) there are 32 acyclic schemas for a threshold of $0.0$, while this number goes up to 411 when increasing the threshold to 1.0. This is caused by the sharp increase in the number of MVDs that hold in the dataset when increasing the threshold.
In many cases, however, the number of schemes generated actually goes down with the threshold (e.g., Figures~\ref{tab:EnmerationExperiments}(a)-\ref{tab:EnmerationExperiments}(c),\ref{tab:EnmerationExperiments}(e),\ref{tab:EnmerationExperiments}(f), and~\ref{tab:EnmerationExperiments}(h)). The reason is that higher thresholds allow for smaller sets of attributes to hold as minimal separators, bringing down their cardinality.
}

\eat{
\subsection{Compare runtime to stepwise algorithm}
To show that the mining of minimal MVDs led to substantial performance gains.

\subsection{Show performance gain for in-memory DBs}
In the stepwise algorithm show that moving to the in-memory DB led to substantial performance gains.
}

\section{Conclusions}

\label{sec:conclusions}

We present \system, the first system for the discovery of
approximate acyclic schemes and approximate MVDs from data.  To define
``approximate'', we used concepts from information theory, where each
MVD or acyclic schema is defined by an expression over entropic terms;
when the expression is 0, then the MVD or acyclic schema holds
exactly.  We then presented the two main algorithms in \system, mining
all full $\varepsilon$-MVDs with minimal separators, and discovering
acyclic schemes from a set of $\varepsilon$-MVDs.  Both algorithms
improve over prior work in the literature.  We conducted an
experimental evaluation of \system\ on over 20 real-world data sets.

Our approach of using information theory to define approximate data
dependencies differs from the previous definitions that rely mostly on
counting the number of offending tuples.  On one hand, our definitions
provide us with more powerful mathematical tools, on the other hand
the connection to the actual data quality is less intuitive.  We leave
it up to future work to explore the connection between information
theory and data quality.

Depending on the dataset, \system~ generates hundreds and even thousands of acyclic $\varepsilon$-schemas in as little as 30 minutes. As part of future work we intend to investigate acyclic schema generation in \e{ranked order}. The categories to rank on may be the extent of decomposition (e.g., $\width$ of the schema), or other measures indicative of how well the schema meets the requirements of the application.
\eat{
\batya{This wealth of schemes can be used to develop methods that will learn to identify the best acyclic schemes for a particular workload of queries. 
	This is especially relevant for query processing algorithms that exhibit a high variance in runtime when using different relational schemes. For example, in the
	work of Kalinsky et al.~\cite{DBLP:conf/edbt/KalinskyEK17} on database join optimization,
	the authors present real-life scenarios
	where different schemes (of the same \e{width})
	feature orders-of-magnitude difference in performance.
	Using the acyclic schemes produced by our system, we can build a training set in which every acyclic scheme is associated with a score representing the performance of the query workload over the scheme. This training set will allow us to develop learning algorithms that identify the optimal acyclic scheme for a given query workload.}
}
\clearpage
\newpage
\bibliographystyle{plain}
\bibliography{main}

\clearpage
\newpage
\pagebreak 
\onecolumn
\section{Appendix}
\section{Proofs from Section~\ref{sec:techniques}}
Given two MVDs
$\phi = S \mvd X_1 | \dots |X_m$ and $\psi = S \mvd Y_1|\dots|Y_k$,
define their \e{join} as
$\phi \vee \psi = S \mvd Z_{11} | Z_{12} | \cdots | Z_{mk}$, where
$Z_{ij} = X_i \cap Y_j$.  Clearly, $\phi \vee \psi$ refines both
$\phi$ and $\psi$, i.e.
$\J(\phi \vee \psi) \geq \max(\J(\phi),\J(\psi))$.  We prove a weak
form of converse:
\begin{replemma}{\ref{lem:singleMaxMVDLemma}}
\singleMaxMVDLemma
\end{replemma}

\begin{proof} We prove the first inequality (the second is similar),
	and for that we need to show:
	$\left(\sum_{i=1}^m H(SX_i) - (m-1)H(S) - H(\Omega)\right) +
	m\left(\sum_{j=1}^k H(SY_j) - (k-1)H(S) - H(\Omega)\right) \geq
	\sum_{ij} H(SZ_{ij}) - (mk-1) H(S) - H(\Omega)$, or, equivalently:
	\begin{align}
	\sum_{i=1}^m  H(SX_i) + m\sum_{j=1}^k H(SY_j) \geq \sum_{ij} H(SZ_{ij})+ m H(\Omega)
	\label{eq:singleMax1}
	\end{align}
	For that we prove by induction on $\ell$:
	\begin{align}
	H(SX_i) + \sum_{j=1}^\ell H(SY_j) \geq
	\sum_{j=1}^\ell H(SZ_{ij}) + H(SX_iY_1\ldots Y_\ell) \label{eq:singleMax2}
	\end{align}
	Indeed, assuming the statement for $\ell-1$ holds, then the statement
	for $\ell$ follows from:
	\begin{align*}
	H(SY_\ell) + H(SX_iY_1\ldots Y_{\ell-1}) \geq H(SZ_{i\ell}) + H(SX_iY_1\ldots Y_\ell)
	\end{align*}
	which is the submodularity inequality, since
	$SY_\ell \cap (SX_iY_1\ldots Y_{\ell-1}) = SY_\ell \cap SX_i =
	SZ_{i\ell}$.  Setting $\ell=k$ in \eqref{eq:singleMax2} and summing
	over $i=1,m$ we obtain
	$\sum_{i=1}^m H(SX_i) + m\sum_{j=1}^k H(SY_j) \geq \sum_{ij}
	H(SZ_{ij}) + \sum_{i=1}^m H(SX_iY_1\cdots Y_k) = \sum_{ij} H(SZ_{ij})
	+ m H(\Omega)$, proving \eqref{eq:singleMax1}.
\end{proof}

\eat{
\begin{reptheorem}{\ref{thm:AcyclicCharacterizationLee}}
		\begin{enumerate}
		\item For all $X,Y,Z \subset \Omega$ such that $XYZ = \Omega$.
		\begin{align}
		R \models & X \fd Y & \Leftrightarrow &&& H(Y|X)=0 	\\
		R \models & X \mvd Y|Z & \Leftrightarrow &&& I(Y;Z|X)=0 
		\end{align}
		\item Let $\schema$ be an acyclic schema with join tree $(\T,\jointreeMapFunction)$. Then $\relation \models \AJD(\schema)$ if and only if $\J(\schema)=0$ where:
		\begin{equation}\label{eq:JTScoreAppendix}
		\J(\schema){\eqdef}\sum_{\substack{v\in\\ \nodes(\T)}}H(\jointreeMapFunction(v))-\sum_{\substack{(v_1,v_2)\in\\ \edges(\T)}}H(\jointreeMapFunction(v_1) {\cap} \jointreeMapFunction(v_2))-H(\Omega)
		\end{equation}
		
	\end{enumerate}

\end{reptheorem}
\begin{proof}
	Let $(v_i,v_j)\in E$. Noting that $\nodes(\T_i)$ and $\nodes(\T_j)$ are a partition of $\nodes(\T)$, and $\Omega=\jointreeMapFunction(\T_i)\cup\jointreeMapFunction(\T_j)$, then since $\relation \models \AJD\set{\Omega_1,\dots,\Omega_m}$, it holds that:
	\begin{align*}
	\relation[\Omega]&\subseteq \relation[\jointreeMapFunction(\T_i)]\join \relation[\jointreeMapFunction(\T_j)] \\
	& \subseteq \relation[\Omega_1] \join \cdots \join \relation[\Omega_m] \\
	&= \relation[\Omega]
	\end{align*}
	Therefore, $\relation \models (\Omega_i \cap \Omega_j) \mvd \jointreeMapFunction(\T_i)|\jointreeMapFunction(\T_j)$, and that  $I_H(\jointreeMapFunction(\T_i);\jointreeMapFunction(\T_j)|\Omega_i\cap \Omega_j)=0$.
	
	We prove the second item by induction on the number of nodes in $\T$.  If $\nodes(\T)=1$ then the proof is immediate. Now, let $|\nodes(\T)|=m > 1$. Let $v_m \in \nodes(\T)$ be a leaf-node in $\T$, and let $v'_m$ be it's single neighbor. By the running intersection property, it holds that every attribute in $\hat{\Omega}_m=\jointreeMapFunction(v_m)\sm \jointreeMapFunction(v'_m)$ is contained only in $\jointreeMapFunction(v_m)$, and no other node in $\T$. Therefore, the tree $\T'$ with nodes $\nodes(T)\sm \set{v_m}$ is a join tree for the schema $\schema'=\set{\Omega_1,\dots,\Omega_{m-1}}$, and the the relation instance $\relation'=\relation[\Omega\sm \hat{\Omega}]$ satisfies the acyclic join dependency $\AJD\set{\Omega_1,\dots,\Omega_{m-1}}$. Since $|\nodes(\T')|<m$ then it holds that:
	\begin{equation}\label{eq:JoinTreeCharacterization}
	J_H(\schema)=\sum_{i=1}^{m-1}H(\Omega_i)-\sum_{(v_i,v_j)\in E\setminus\set{(v'_m,v_m)}}H(\Omega_i \cap \Omega_j)-H(\Omega\sm \hat{\Omega}_m)=0
	\end{equation}
	Now, by item (1) of the Theorem, it holds that
	\begin{align}
	0&=I_H(\jointreeMapFunction(\T'); \Omega_m| \Omega_m \cap \Omega'_m) \nonumber \\
	&=H(\jointreeMapFunction(\T'))+H(\Omega_m)-H(\Omega_m \cap \Omega'_m)-H(\Omega) \nonumber \\
	&=H(\Omega\sm \hat{\Omega}_m)+H(\Omega_m)-H(\Omega_m \cap \Omega'_m)-H(\Omega) \label{eq:toSubstitute}
	\end{align} 
	Taking the sum of~\eqref{eq:toSubstitute} and~\eqref{eq:JoinTreeCharacterization} gives us the required result.
	
\end{proof}
}
\eat{
\begin{repproposition}{\ref{prop:downwardClosureGeneral}}	
Let $Y_1, Z_1, \ldots, Y_m, Z_m$ be pairwise disjoint sets of variables, and let
$X$ be any set of variables.  Then the following hold:
\begin{align}
\J(X \mvd Y_1| \cdots |Y_m) \leq & \J(X \mvd Y_1Z_1 | \cdots | Y_mZ_m) \label{appendix:eq:downward:1} \\
\J(XZ_1\cdots Z_m \mvd Y_1| \cdots |Y_m) \leq & \J(X \mvd Y_1Z_1 | \cdots | Y_mZ_m) \label{appendix:eq:downward:2}
\end{align}
\end{repproposition}
\begin{proof}
	We prove the claim for $|Z|=1$, and then the theorem follows by induction.
	Assume, without loss of generality that $Z \in Y_m$. We start with proving item (1):
	Let us denote by $W=Y_m\sm\set{Z}$ (i.e., $WZ=Y_m$). Then
	\begin{align*}
	J_h(X\mvd Y_1|\dots |Y_{m-1}|W) =& \sum_{i=1}^{m-1}h(Y_iX)+h(WX)-(m-1)h(X)-h(\Omega\sm\set{Z})\\
	=&\sum_{i=1}^{m-1}h(Y_iX)+\underbrace{\left(h(WZX)-h(Z|WX)\right)}_{h(WX)}-(m-1)h(X)-\underbrace{\left(h(\Omega)-h(Z|(\Omega\sm\set{Z}))\right)}_{h(\Omega\sm\set{Z})}\\
	=&\sum_{i=1}^mh(Y_iX)-(m-1)h(X)-h(\Omega)+(h(Z|\Omega\sm\set{Z})-h(Z|WX))\\
	=&J_h(Y_1,\dots,Y_m|X)+(h(Z|\Omega\sm\set{Z})-h(Z|WX))\\
	=&J_h(Y_1,\dots,Y_m|X)+\underbrace{(h(Z|Y_1,\dots,Y_{m-1}WX)-h(Z|WX))}_{\leq 0}\\
	\leq& J_h(X\mvd Y_1|\dots|Y_m)
	\end{align*}
	\begin{align*}
	J_h(XZ \mvd Y_1|\dots |Y_{m-1}| Y_m\sm \set{Z})&=\sum_{i=1}^{m-1}H(Y_iXZ)+H(Y_mX)-(m-1)H(XZ)-H(\Omega)\\
	&=\sum_{i=1}^{m-1}\left(H(Y_iX|Z)+H(Z)\right)+H(Y_mX)-(m-1)\left(H(X|Z)+H(Z)\right)-H(\Omega)\\
	&=\sum_{i=1}^{m-1}H(Y_iX|Z)+H(Y_mX)-(m-1)H(X|Z)-H(\Omega)\\
	&=\sum_{i=1}^{m-1}\left(H(Y_iX|Z)-H(X|Z)\right)+H(Y_mX)-H(\Omega)\\
	&\leq \sum_{i=1}^{m-1}\left(H(Y_iX)-H(X)\right)+H(Y_mX)-H(\Omega)\\
	&=\sum_{i=1}^{m}H(Y_iX)-(m-1)H(X)-H(\Omega)\\
	&=J_h(X\mvd Y_1|\dots| Y_m)
	\end{align*}
\end{proof}
}
\eat{
\begin{repproposition}{\ref{prop:naiveEnumJDs}}
	\naiveEnumJDs
\end{repproposition}
\begin{proof}
	Since $\phi_1$ refines $\phi_2$ then one can arrive at $\phi_2$ by a series of merges of the sets $\set{A_1;\dots;A_m}$ in $\phi_1$. So, we prove by induction on the number of merges. If it is $0$, then the claim clearly holds. So, assume it holds for $k$ merges, and we prove for $k+1$.
	Assume without loss of generality that $\phi_2'=X\mvd B_1|\dots|B_{n_2}|B_{n_1}$ is the MVD after applying $k$ merges to $\phi_1$, and that $B_n=B_{n_1}B_{n_2}$. We show that $J_h(\phi_2)\leq J_h(\phi'_2)$. Since we can arrive from $\phi_1$ to $\phi'_2$ by only $k$ merges, then the induction hypothesis applies and $J_h(\phi'_2)\leq J_h(\phi_1)$, thus $J_h(\phi_2)\leq J_h(\phi'_2)\leq J_h(\phi_1)$, proving the claim.
	\begin{align}
	J_h(\phi'_2)&=\sum_{i=1}^{n-1}h(B_iX)+h(B_{n_1}X)+h(B_{n_2}X)-n\cdot h(X)-h(\Omega) \nonumber \\
	&\geq \sum_{i=1}^{n-1}h(B_iX)+h(B_{n_1}B_{n_2}X)+h(X)-n\cdot h(X)-h(\Omega) \label{eq:submodularityTransition}\\
	&= \sum_{i=1}^{n}h(B_iX)-(n-1)\cdot h(X)-h(\Omega) \nonumber \\
	&=J_h(\phi_2) \nonumber
	\end{align}
	The transition of~\eqref{eq:submodularityTransition} is due to submodularity of polymatroids. Since $J_h(\phi_1)\geq J_h(\phi'_2)$ by the induction hypothesis, the result follows.
\end{proof}
}
\eat{
\begin{replemma}{\ref{lem:singleMaxMVDLemma}}
	\singleMaxMVDLemma
\end{replemma}
\begin{proof}
	Suppose that $\phi , \varphi \in \fullMVDs(S)$, and let $\phi = S\mvd  X_1|\dots|X_m$ and $\varphi = S\mvd  Y_1|\dots|Y_k$. Since neither of the MVDs is a refinement of the other then there must be an $X_i$ such that $X_i \not\subseteq Y_j$ and $X_i \cap Y_j \neq \emptyset$. Now, consider the MVD $\phi'=S \mvd X_1|\dots |X_i\cap Y_j| X_i\sm Y_j|\dots| X_m$, and note that $\phi' \succ \phi$.
	\begin{align*}
	J_H(\phi')&=H(X_1S)+\dots+H(X_i\cap Y_j\cup S)+H(X_i \sm Y_j\cup S)+\dots+H(X_mS)-mH(S)-H(\Omega) \\
	&=H(X_1S)+\dots+ I(X_i\cap Y_j; X_i \sm Y_j | S) +H(S)+H(X_iS)+\dots+H(X_mS)-mH(S)-H(\Omega) \\
	&=\sum_{i=1}^mH(X_iS)-(m-1)H(S)-H(\Omega)+I(X_i\cap Y_j; X_i \sm Y_j | S) \\
	&=J_H(\phi)+I(X_i\cap Y_j; X_i \sm Y_j | S) \\
	&=J_H(\phi)=0
	\end{align*}
	where we note that since $\varepsilon =0$ then  $\relation \models \varphi$ implies that $I(X_i\cap Y_j; X_i \sm Y_j | S)=0$.
	But this contradicts the fact that $\phi$ is maximal. Therefore $|\fullMVDs(S)| \leq 1$.
\end{proof}

}

\section{Proofs and details from Section~\ref{sec:MiningMVDs}}
\subsection{Correctness of Algorithm $\algname{MineAllMinSeps}$}\label{sec:ComplexityAnalysisOfMineAllMinSeps}

\begin{reptheorem}{\ref{thm:AcyclicCharacterizationLee}}
	\completenessTheorem
\end{reptheorem}
\begin{proof}
	We first note that every set of attributes $S$ that is added to $\bS$ in lines~\ref{algline:addToS1} and~$\ref{algline:addToS2}$ is a minimal $AB$ separator. Therefore, we proceed by showing that \e{all} minimal $AB$ separators are mined by $\MVDAlg$.
	
	Let $\Omega=X_1\dots X_n$, and let $p=X_1,\dots, X_n$
	be some predefined order over the attributes that is used in algorithm $\algname{ReduceMinSep}$ (Figure~\ref{alg:ReduceToMinAPSep}). We view every minimal $AB$-separator as a subsequence of $p$, whose letters (i.e., attributes) are ordered according to $p$. That is, the permutation $p$ induces a lexicographic ordering over the subsets of $\Omega$. For example, $X_3X_4X_9X_{15} \succ X_3X_4X_7X_{100}$.
	We prove the claim by backwards induction on the lexicographic ordering of the subsets of $\Omega$. That is, for every subsequence $p_S$ of $p$, over attributes $S$, we show that if $S$ is a minimal $AB$ separator, then $S$ is discovered by the algorithm. The induction follows reverse lexicographic order of the sequences (e.g., $X_3X_4X_9X_{15}$ before $X_3X_4X_7X_{100}$ ). 
	\eat{
		We prove by backwards induction on $n$ that all minimal $AB$-separators $X=X_1,\dots,X_m$ are mined.
	}
	
	\paragraph{Base case}: $p_S$ is the lexicographically largest subsequence: $p_S=X_n$, or $S=X_n$.
	By Theorem~\ref{thm:transversals}, if $S$ is a minimal $AB$ separator that is not in $\bS$, then there exists a minimal transversal $D$ of $\bS$ such that $S \subseteq \comp{D}$. By Proposition~\ref{prop:downwardClosureGeneral} if $X_n$ separates $A$ and $B$, then so does each one of its supersets.
	Therefore, algorithm $\algname{ReduceMinSep}$ (Figure~\ref{alg:ReduceToMinAPSep}) that uses the attribute sequence $p$, will return the minimal $AB$ separator $S=X_n$ when provided with input $\comp{D} \supseteq \set{X_n} =S$.
	
	\paragraph{Step}: Let $p_S$ denote the subsequence corresponding to the set $S\subset \Omega$. By the induction hypothesis, we assume that all minimal $AB$ separators that are lexicographically larger than $S$ have been mined and are in $\bS$.
	By Theorem~\ref{thm:transversals}, there exists a  minimal transversal $D$ of $\bS$ such that $S \subseteq \comp{D}$.
	Now, let $p_S=X_{i_1},\dots,X_{i_m}$ denote the subsequence associated with $S$ (i.e., $S=\set{X_{i_1},\dots, X_{i_m}}$).
	Now, consider how algorithm $\algname{ReduceMinSep}$ handles the input $\comp{D}$ (line~\ref{algline:reduceMinSep}). Clearly, it will remove all attributes $X_j \in \comp{D}$ such that $X_{i_1} \succ X_j$ because the resulting set contains the minimal separator $S$ (line~\ref{algline:removeAtt} in $\algname{ReduceMinSep}$).
	\eat{	Let us assume that $X_1=\Omega_k$, then all attributes $\Omega_i$ where $i<k$ will be removed in line 6 of the algorithm $\algname{ReduceMinSep}$.}
	Now, suppose, by contradiction, that $X_{i_k}\in S$ is removed in line 6 of $\algname{ReduceMinSep}$. This means that $\comp{D}$ contains a minimal $AB$ separator $C$ that is lexicographically larger than $S$. But by the induction hypothesis, such a minimal separator $C$ is already in $\bS$. Since $C \subseteq \comp{D}$, it means that $C \cap D=\emptyset$, contradicting the fact that $D$ is a minimal transversal of $\bS$.
\end{proof}

\subsection{Runtime Analysis of \algname{MineAllMinSeps}}
\begin{definition}
	Let $\bS$ be a (not necessarily complete) set of minimal $AB$ separators. We define the \e{negative border of} $\bS$ to be:
	\begin{equation}
	\BD^{-}(\bS)=\set{U \subset \Omega | U \notin \bS, \text{ there exists a } X_i \in \Omega \text{ s.t. } U\cup \set{X_i}\in \bS}
	\end{equation}
\end{definition}
Since every minimal separator in $\bS$ contains at most $n$ attributes then $|BD^{-}(\bS)|\leq |\bS|\cdot n$.
\begin{theorem}\label{thm:numOfIterations}
	The number of minimal transversals processed in lines \ref{algline:startTransversalLoop}-\ref{algline:endWhile} of algorithm \algname{MineAllMinseps} is at most $|\BD^{-}(\bS)|$.
\end{theorem}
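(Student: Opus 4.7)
The plan is to exhibit an injective map $\phi$ from the set of minimal transversals processed by $\algname{MineAllMinSeps}$ (in lines~\ref{algline:startTransversalLoop}--\ref{algline:endWhile}) into the negative border $\BD^{-}(\bS)$, from which the bound $|\BD^{-}(\bS)|$ follows immediately. Let $\mathcal{T}$ denote the set of transversals ever processed; I would first verify that distinct iterations of the while-loop at line~\ref{algline:enumerateMinTransversals} produce pairwise distinct sets $D$ (so that $\mathcal{T}$ is really a set, of the same cardinality as the number of iterations), appealing to the non-redundancy guarantee of the underlying incremental hypergraph-transversal enumerator.

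For a processed minimal transversal $D \in \mathcal{T}$, let $\bC_D$ denote the state of $\bC$ at the moment $D$ is produced, so $D$ is a minimal transversal of $\bC_D \subseteq \bS$. Minimality of $D$ forces, for every $x \in D$, the existence of some $C \in \bC_D$ with $C \cap D = \{x\}$; otherwise $D \setminus \{x\}$ would remain a transversal of $\bC_D$. Fix a total order on $\Omega$, let $x_D$ be the smallest element of $D$ under this order, and let $C_D$ be the lexicographically smallest $C \in \bC_D$ satisfying $C \cap D = \{x_D\}$. Because $C_D \in \bS$ is a minimal $A,B$-separator, the proper subset $C_D \setminus \{x_D\}$ is not itself a minimal separator, yet $(C_D \setminus \{x_D\}) \cup \{x_D\} = C_D \in \bS$; hence defining $\phi(D) \defeq C_D \setminus \{x_D\}$ lands in $\BD^{-}(\bS)$ as required.

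The hard part is establishing injectivity of $\phi$. Suppose $\phi(D_1) = \phi(D_2) = U$, so that $C_{D_i} = U \cup \{x_{D_i}\} \in \bS$ for $i=1,2$, with $x_{D_i} = \min D_i$ and $U \subseteq \comp{D_i}$. The plan is to show first that $(x_{D_1}, C_{D_1}) = (x_{D_2}, C_{D_2})$, by an exchange argument: if these pairs differed, then by the time the later-processed of $D_1, D_2$ is emitted, the current $\bC$ already contains both candidates $U \cup \{x_{D_1}\}$ and $U \cup \{x_{D_2}\}$, so the lexicographic tie-breaking on $C_D$ would have to pick the same one in both cases, contradiction. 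Once the canonical pair $(x, C)$ is forced to be common, $D_i$ must be the minimal transversal of $\bC_{D_i}$ whose smallest element is $x$ and whose complement contains $U$; combined with the invariant that the enumerator emits each minimal transversal of each snapshot of $\bC$ at most once in a deterministic order, this forces $D_1 = D_2$. Making the exchange argument and the enumerator invariant precise, and in particular verifying compatibility with the Fredman--Khachiyan style incremental update that rides along as $\bC$ grows, is where the bulk of the technical work lies.
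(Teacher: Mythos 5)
Your opening moves coincide with the paper's own argument: minimality of the processed transversal $D$ forces, for each $x\in D$, a separator $C$ in the current collection with $C\cap D=\set{x}$, and then $C\sm\set{x}$ lies in $\BD^{-}(\bS)$. You are also right that what remains is to turn this correspondence into an injection --- the paper asserts the correspondence and stops, so the counting step is the crux. The gap is that the injection you define is provably not injective, and the exchange argument cannot repair it. Take $\Omega\supseteq\set{X_1,X_2,X_3}$ with $X_1<X_2<X_3$ and suppose the minimal $A,B$-separators are exactly $\set{X_2,X_3}$ and $\set{X_1,X_3}$, discovered in that order. The loop processes $D_1=\set{X_2}$ while $\bC_{D_1}=\set{\set{X_2,X_3}}$, and later $D_2=\set{X_1,X_2}$ while $\bC_{D_2}$ contains both separators. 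For $D_1$ the only admissible pair is $x_{D_1}=X_2$, $C_{D_1}=\set{X_2,X_3}$, so $\phi(D_1)=\set{X_3}$; for $D_2$ your rule gives $x_{D_2}=X_1$, $C_{D_2}=\set{X_1,X_3}$, hence $\phi(D_2)=\set{X_3}$ as well. Worse, \emph{every} choice of $x\in D_2$ and of $C$ with $C\cap D_2=\set{x}$ also outputs $\set{X_3}$, so no tie-breaking rule of the form ``$D\mapsto C\sm\set{x}$'' can separate $D_1$ from $D_2$ here.

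The exchange argument fails for a structural reason: when $x_{D_1}\neq x_{D_2}$, the lexicographic minimization for $D_1$ ranges over $\set{C: C\cap D_1=\set{x_{D_1}}}$ while that for $D_2$ ranges over $\set{C: C\cap D_2=\set{x_{D_2}}}$. These are different candidate pools (in the example, $C_{D_1}=\set{X_2,X_3}$ meets $D_2$ in $X_2\neq x_{D_2}$ and is therefore not even a candidate for $D_2$), so the fact that both $U\cup\set{x_{D_1}}$ and $U\cup\set{x_{D_2}}$ eventually belong to $\bC$ yields no contradiction, and the subsequent step ``the common pair $(x,C)$ forces $D_1=D_2$'' never gets off the ground. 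So the injectivity step, which you correctly identify as carrying the whole proof, is genuinely missing; any repair must count something other than one canonical element of $\BD^{-}(\bS)$ per transversal (for instance, by exploiting that distinct iterations produce distinct sets $D$ and analyzing the whole family $\setof{C\sm\set{x}}{x\in D}$, or by treating the iterations that fail the separator test in line~\ref{algline:DCompSeparates} separately from those that succeed), rather than refining the tie-break on a single $C\sm\set{x}$.
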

\begin{proof}
	Let $D$ be a minimal transversal of $\bS$ processed by in lines \ref{algline:startTransversalLoop}-\ref{algline:endWhile}. It  cannot be the case that $\comp{D}\supseteq C$ for any $C \in \bS$, and in particular $\comp{D}\notin \bS$. \eat{,  because that would mean that $C \cap D=\emptyset$, contradicting the fact that $D$ is a transversal of $\bS$.}
	%	Clearly, the loop iterates over all minimal transversals $X$ such that $\comp{X}$ is not a minimal $AB$ separator.
	Since $D$ is a minimal transversal, then for every attribute $Y \in D$ it holds that $D\sm \set{Y}$ is no longer a transversal for $\bS$. That is, there is an $AB$ minimal separator $C \in \bS$ such that $C \cap (D\sm \set{Y})=\emptyset$, or that $C \subseteq \comp{(D\sm \set{Y})}$. Noting that $\comp{(D\sm \set{Y})}=\comp{D}\cup\set{Y}$, we get that $C \subseteq \comp{D}\cup\set{Y}$, or that $C\sm \set{Y} \subseteq \comp{D}$. So we get that $C\sm \set{Y} \subseteq \comp{D}$, and that $C \not\subseteq \comp{D}$. In other words, every minimal transversal $D$ processed corresponds to a set in $\BD^{-}(\bS)$.
\end{proof}

\subsection{An Optimization to $\algname{getFullMVDs}$}
In the worst case, if $S$ is \e{not} an $AB$ separator then Algorithm~\algname{getFullMVDs} will traverse the complete search space of size\eat{$\sum_{k=2}^{n-2}{n-2 \choose k}=O(2^n)$} $O(2^n)$. While, in general, this is unavoidable, we implemented an optimization, described in the complete version of this paper, that leads to a significant reduction in the search space.

By~\eqref{eq:downward:1} in Proposition~\ref{prop:downwardClosureGeneral} it holds that if $I(A;B|S)>\varepsilon$
for a pair of attributes $A,B\in\Omega$, then for any MVD $\phi=S \mvd C_1|\dots|C_m$ in which $A$ and $B$ are in distinct components it holds that $\J_H(\phi) > \varepsilon$.

We say that an MVD $\phi = S \mvd
C_1|\dots|C_m$ is \e{pairwise consistent} if $I(C_i;C_j|S)\leq \varepsilon$ for every pair of distinct components $C_i,C_j \in \components(\phi)$.
Since $I(C_i;C_j|S) \leq \J(S \mvd C_1|\dots|C_m)$, then we can prune  
an MVD $S \mvd C_1|\dots|C_m$ if it is not pairwise consistent, and avoid traversing its neighbors and descendants. 
In Figure~\ref{alg:getPairwiseConsistentMVD} we present the algorithm $\algname{getPairwiseConsistentMVD}$ that receives an MVD $\phi=S \mvd C_1|\dots|C_m$ where $A$ and $B$ are in distinct components, and returns a \e{pairwise consistent} MVD where $A$ and $B$ are in distinct components, if one exists. %For an MVD $\phi=S \mvd C_1|\dots|C_m$ we denote the set $\set{C_1,\dots,C_m} $ by $\components(\phi)$.
In Figure~\ref{alg:OptimizedMineAllJDs} we present the optimized $\algname{getFullMVDs}$ that prunes MVDs that cannot lead (via merges to components) to an MVD in which $A$ and $B$ are in distinct components.

\begin{algserieswide}{H}{Return an MVD $S \mvd C_1|\dots|C_m$ s.t. $I(C_i;C_j|S)\leq \varepsilon$ for every pair $C_i,C_j$, and $A$ and $B$ are in distinct components or $nil$ if no such MVD exists. \label{alg:getPairwiseConsistentMVD}}
	\begin{insidealgwide}{getPairwiseConsistentMVD}{$\varepsilon$, $\phi$, $(A,B)$}
		\WHILE{$A$ and $B$ are in distinct components of $\phi$ AND $\phi$ is not pairwise consistent}
		\STATE Let $C_i,C_j\in \components(\phi)$ s.t. $I(C_i;C_j|S) > \varepsilon$
		\STATE $\phi \gets \mergeFunc_{ij}(\phi)$	
		\ENDWHILE
		\IF{ $A$ and $B$ are in distinct components of $\phi$}
		\STATE return $\phi$
		\ENDIF%\textsl{}
		\STATE return $nil$
	\end{insidealgwide}
	
\end{algserieswide}

\begin{algserieswide}{H}{Returns a set of at most $K$ full MVDs with key $S$ that approximately
		hold in $R$ in which $A$ and $B$ are in distinct components. \label{alg:OptimizedMineAllJDs}}
	\begin{insidealgwide}{getFullMVDsOpt}{$S$, 	$\varepsilon$, $(A,B)$, $K$}
		\STATE $\Pm \gets \emptyset$ \COMMENT{Output set}
		\STATE $\Q \gets \emptyset$ \COMMENT{$\Q$ is a stack}
		\STATE $\phi_0=S\mvd X_1|\dots|X_n$
		where $X_i$ are singletons.
		{ \color{blue}
			\STATE $\phi'_0 \gets \algname{getPairwiseConsistentMVD}(\varepsilon, \phi,(A,B))$
			\IF{$\phi'_0 = nil$}
			\STATE return $\emptyset$
			\ENDIF
		}
		\STATE $\Q.\push(\phi'_0)$
		\WHILE {$\Q \neq \emptyset$ \AND $|\Pm| < K$}
		\STATE $\varphi \gets \Q.\pop()$
		\STATE Computed $\J_H(\varphi)$  \COMMENT{using $\entropyAlg$}
		\IF{$\J_H(\varphi) \leq \varepsilon $}
		\STATE $\Pm \gets \Pm \cup \set{\varphi}$
		%	\IF{$|\Pm|=K$}
		%	\STATE break
		%	\ENDIF
		\ELSE				
		\FORALL{$\phi \in \Nbr(\varphi)$}	
		{ \color{blue}
			\STATE $\phi' \gets \algname{getPairwiseConsistentMVD}(\varepsilon, \phi,(A,B))$}					
		\IF{$\phi' \neq nil$}			
		\STATE $\Q.\push(\phi')$	\COMMENT{See~\eqref{eq:tracersalAlgNbrs}}
		\ENDIF
		\ENDFOR	
		\ENDIF
		\ENDWHILE		
		\STATE \textbf{return} $\Pm$
	\end{insidealgwide}
	
\end{algserieswide}

\section{Proofs from Section~\ref{sec:EnumerateAcyclicSchemas}}
\begin{reptheorem}{\ref{thm:equivalenceThm}}
	\CompatibleCompletenessThm
\end{reptheorem}
\begin{proof}
	Every key of $\MVD(\T)$ is the label on an edge of $\T$, and thus contained in a bag of $\T$. Hence, the set $\MVD(\T)$ is split-free and satisfies the first condition of definition~\ref{def:CompatibleSJDs}.
	
	Let $\phi_1,\phi_2 \in \MVD(\T)$ corresponding to edges $e_1,e_2\in \edges(\T)$. Let $\T_1$, $\T_2$, and $\T_3$ be the three connected subtrees resulting from removing $e_1,e_2$ from $\T$. W.l.o.g, any path from a node in $\nodes(\T_1)$ to a node in $\nodes(\T_3)$ must pass through a node in $\nodes(\T_2)$. Therefore, $\components(\phi_1)=\set{\jointreeMapFunction(\T_1)\sm \key(\phi_1), \jointreeMapFunction(\T_2){\cup}\jointreeMapFunction(\T_3)\sm \key(\phi_1)}$, and $\components(\phi_2)=\set{\jointreeMapFunction(\T_3)\sm \key(\phi_2), \jointreeMapFunction(\T_1){\cup}\jointreeMapFunction(\T_2)\sm \key(\phi_2)}$. In particular, $\phi_2$ splits the set $\jointreeMapFunction(\T_2)\cup\jointreeMapFunction(\T_3)$, and $\phi_1$ splits the set $\jointreeMapFunction(\T_2)\cup\jointreeMapFunction(\T_1)$. Hence, $\MVD(\T)$ satisfies the second condition of definition~\ref{def:CompatibleSJDs}.
\end{proof}

\eat{
		
\section{Stepwise Mining Algorithm}\label{sec:stepwiseMining}
\eat{
\dan{Do we have experiments for it?  If we have then we'll keep it
	(and need to justify why we consider two different mining
	algorithms), if we don't have experiments then we need to remove
	it.}
}
The stepwise algorithm starts by discovering all MVDs of the
form $X \mvd A|B$ where $A$ and $B$ are singleton attributes in
$\relation$ (i.e., $|X|=n-2$), and iteratively discovers MVDs with
smaller and smaller keys. The algorithm applies the following pruning
rule that is a direct corollary of
proposition~\ref{prop:downwardClosureGeneral}. In
Section~\ref{sec:stepwiseOptimization} we describe an optimization to
this algorithm.

\begin{prunRule}\label{prunRule:stepwise}
	Let $A,B,C$ be disjoint sets such that $ABC=\Omega$. If $I(A;B|C)\leq \varepsilon$, then $I(A\sm\set{X};B|CX)\leq \varepsilon$, and $I(A;B\sm\set{Y}|CY)\leq \varepsilon$ for every $X\in A$ and $Y\in B$. In other words, if $\exists X \in A$ such that $I(A\sm\set{X};B|CX)> \varepsilon$ or $\exists Y \in B$ such that $I(A;B\sm\set{Y}|CY)> \varepsilon$ then $I(A;B|C)> \varepsilon$, and can be pruned.
\end{prunRule} 
The pseudocode is presented in Figure~\ref{alg:StepwiseMVDs}.  At each iteration $k \in \set{n-3,\dots,0}$, the set $\Q$ contains all MVDs with a key of size $k+1$. In the first iteration (e.g., $k=n-2$), $\Q=\emptyset$. The set of MVDs that approximately hold in $\relation$, and have a key of size $k$ are stored in the set $\Pm$, and accumulated in the set $\T$ that will eventually be returned.
If, in iteration $k$, no MVD was found, then the algorithm returns because due to the pruning rule (rule~\ref{prunRule:stepwise}), no MVDs will be discovered in subsequent iterations. The procedure \algname{Prune} (Figure~\ref{alg:Prune}) implements the pruning rule, and returns true if the candidate should be pruned.

\begin{algserieswide}{H}{The algorithm receives the relation $\relation$, its signature $\Omega$, and a threshold $\varepsilon \geq 0$, and returns all of the MVDs that approximately hold  in $R$ (i.e., w.r.t. $\varepsilon$). \label{alg:StepwiseMVDs}}
	\begin{insidealgwide}{StepWiseMVDs}{$\relation$, $\Omega$, $\varepsilon$}
		\STATE $\Pm,\Q, \T \gets \emptyset$
		\STATE $\mathrm{foundMVD} \gets \mathrm{false}$
		\FORALL{$k=n-2$ to $k=0$}
		\FORALL{$Z \subseteq \Omega$ s.t. $|Z|=k$}
		\STATE $U \gets \Omega\sm Z$
		\FORALL{$X \subseteq U$ s.t. $\emptyset \subset X \subset U$}
		\STATE $Y\gets U\sm X$
		\STATE $\tau \gets (Z \mvd X|Y)$
		\IF{!\algname{prune}($\tau$, $\Q$, $\varepsilon$)}
		\STATE $\Pm \gets \Pm \cup \set{\tau}$
		\STATE $\mathrm{foundMVD} \gets \mathrm{true}$
		\ENDIF
		\ENDFOR
		\ENDFOR	
		\IF{$\mathrm{foundMVD}=\mathrm{false}$}
		\STATE break;
		\ENDIF
		\STATE $\Q \gets \Pm$
		\STATE $\T \gets \T \cup \Pm$
		\STATE $\Pm \gets \emptyset$
		\ENDFOR
		\STATE \textbf{return} $\T$
	\end{insidealgwide}	
\end{algserieswide}

\begin{algserieswide}{H}{The algorithm receives an MVD candidate $\tau$ with key $C$, a set $\Q$ of MVDs with keys of size $|C|{+}1$ that approximately hold in $\relation$, and returns $\mathrm{true}$ if MVD candidate $\tau=C \mvd A|B$ should be pruned. \label{alg:Prune}} 
	\begin{insidesub}{Prune}{$\tau=C \mvd A|B$, $\Q$,  $\varepsilon$}
		\FORALL{$A_i \in A$}
		\STATE $\phi \gets A_iC \mvd A\sm \set{A_i}|B$
		\IF{$\phi \not\in \Q$}
		\STATE \textbf{return} $\mathrm{true}$
		\ENDIF
		\ENDFOR
		\FORALL{$B_i \in B$}
		\STATE $\phi \gets B_iC \mvd A|B\sm \set{B_i}$
		\IF{$\phi \not\in \Q$}
		\STATE \textbf{return} $\mathrm{true}$
		\ENDIF
		\ENDFOR		
		\STATE Compute $I(A;B|C)$ using $\entropyAlg$ \label{algline:gotToRelation}
		\IF{$I(C\mvd A|B) \leq \varepsilon$} 
		\STATE \textbf{return} $\mathrm{false}$
		\ENDIF
		\STATE \textbf{return} $\mathrm{true}$
	\end{insidesub}
	
\end{algserieswide}	

\subsubsection{Algorithm Analysis}
We measure the performance of the algorithm by the number of mutual information queries it submits (line~\ref{algline:gotToRelation} in \algname{prune}). Recall that each such computation entails four calls to $\entropyAlg$. If $\entropyAlg$ involves a scan of the relation, or other computationally intensive operations, then this computation dominates the runtime.

For every level $k\in [0,n-2]$ there are $\binom{n}{k}\cdot 2^{n-k-2}$ possible MVDs. Therefore, in the worst case our algorithm will make $\sum_{k=0}^{n-2}\binom{n}{k}\cdot 2^{n-k-2} \in O(2^{2n})$  mutual information calculations.
In what follows we show how to reduce this number to $O(2^n)$. 

\subsubsection{An Optimization}\label{sec:stepwiseOptimization}
Consider iteration $k$ of algorithm~\ref{alg:StepwiseMVDs}, and suppose $I(A;B|C)$ where $|C|=k$, has been computed, and stored.
Let $A=A_1\dots A_l$, and $B=B_1\dots B_m$.
For every $A_i\in A$ and $B_j \in B$ we can compute the mutual information $I(A_i;B|C)$, and $I(A;B_j|C)$ respectively \e{without} calling $\entropyAlg$, by applying the chain rule for mutual information as follows:
\begin{align}
I(A_i;B|C)&=I(A;B|C)-I(A\sm\set{A_i};B|A_iC) \label{eq:opt1}\\
I(A;B_j|C)&=I(A;B|C)-I(A;B\sm \set{B_j}|B_jC)\label{eq:opt2}
\end{align}
by noting that $I(A\sm\set{A_i};B|A_iC)$ and $I(A;B\sm \set{B_j}|B_jC)$ are both MVDs that belong to the previous iteration because $|CA_i|=|CB_j|=k+1$.
\eat{That is, we may assume that for every MVD $X \mvd Y|Z$ mined in the previous iteration (i.e., $|X|{=}k{+}1$) we have the value $I_H(A_i;B|C)$ and $I_H(A;B_j|C)$ for every $A_i \in A$ and $B_j\in B$.}
We now take a closer look at $I(A_i;B|C)$. Let $B_j \in B$, then by the chain rule:
\begin{equation}\label{eq:getSingleVars}
I(A_i;B|C)=I(A_i;B_j|C)+I(A_i;B\sm\set{B_j}|CB_j)
\end{equation}
We note that $I(A_i;B\sm\set{B_j}|CB_j)$ is identical in form to $I(A_i;B'|C')$ where $B'=B\sm\set{B_j}$, and $C'=CB_j$ where $|C'|=k+1$. Therefore, by~\eqref{eq:opt1} and~\eqref{eq:opt2} we have these values available to us without any additional call to $\entropyAlg$. 
Therefore, following the computation of $I(A;B|C)$, we can apply~\eqref{eq:opt1}, and~\eqref{eq:getSingleVars} in order to compute $I(A_i;B_j|C)$ for any pair $A_i\in A$ and $B_j \in B$.

Now, consider any other MVD $C \mvd D|E$ where $DE=AB$. We show that $I(D;E|C)$ can be computed without calling $\entropyAlg$. We first claim that there is an $A_i\in D$. Otherwise, $D=B$ and $I(A;B|C)=I(D;E|C)$. By symmetry, there is a $B_j\in E$. So, by applying the chain rule, we can write:
\begin{align}
I(D;E|C)&=I(A_i;E|C)+I(D\sm\set{A_i};E|A_iC) \nonumber\\
&=I(A_i;B_j|C)+I(A_i;E\sm \set{B_j}|B_jC)+I(D\sm \set{A_i};E|A_iC) \label{eq:opt3}
\end{align}
Since the MVD $C\mvd D|E$ was not pruned, then $I(D;E\sm\set{B_j}|B_jC)$ must have been computed in the previous iteration. Since $A_i\in D$, then, by~\eqref{eq:opt1}, we have, at our disposal, the value $I(A_i;E\sm\set{B_j}|B_jC)$.
Likewise, we have at our disposal the value of 
$I(D\sm \set{A_i};E|A_iC)$ because it is an MVD that belongs to the previous iteration. Overall, by~\eqref{eq:opt3}, we can compute the value of $I(D;E|C)$ without calling $\entropyAlg$.
}

\section{Further Experiments}
\begin{figure*}[ht]
	\begin{tabular}{ccccc}
		\subfigure[\textsf{Classification}\label{chart:classificationFullMVDs}]{\includegraphics[width=0.25\textwidth]{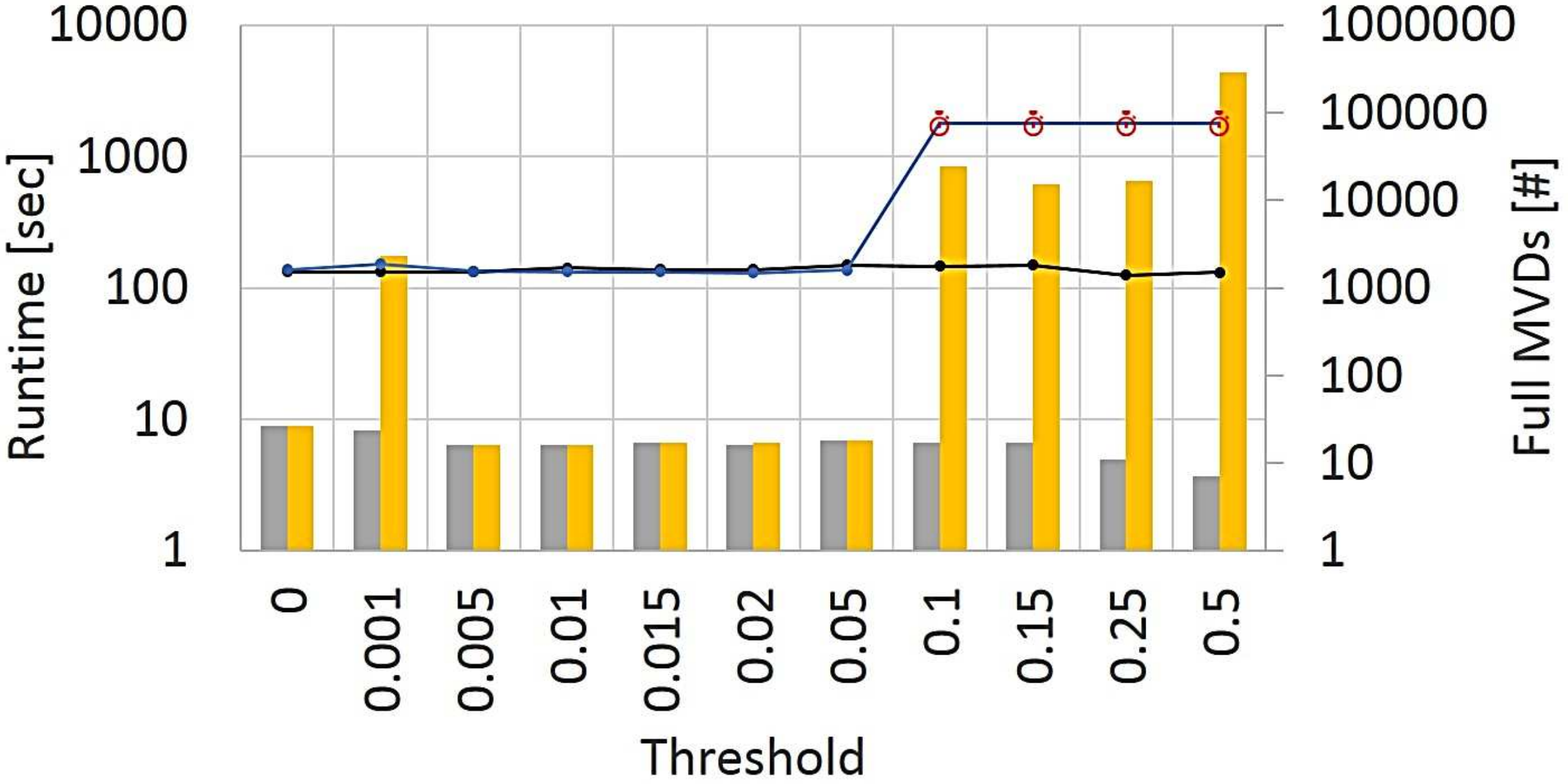}}& 
		\subfigure[\textsf{BreastCancer}]{\includegraphics[width=0.25\textwidth]{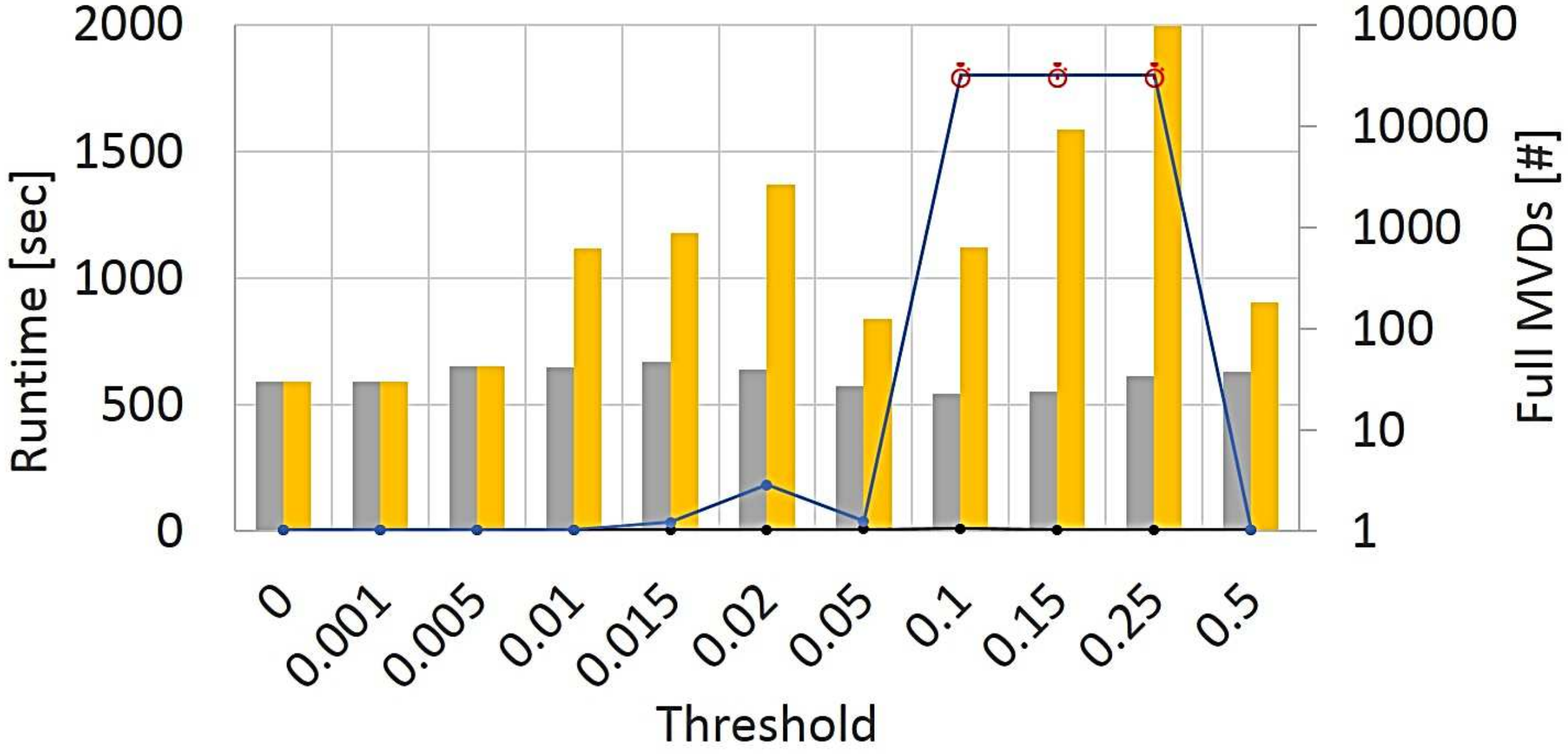}\label{chart:breastCancerFullMVDs}}&
		\subfigure[\textsf{Adult}]{\includegraphics[width=0.25\textwidth]{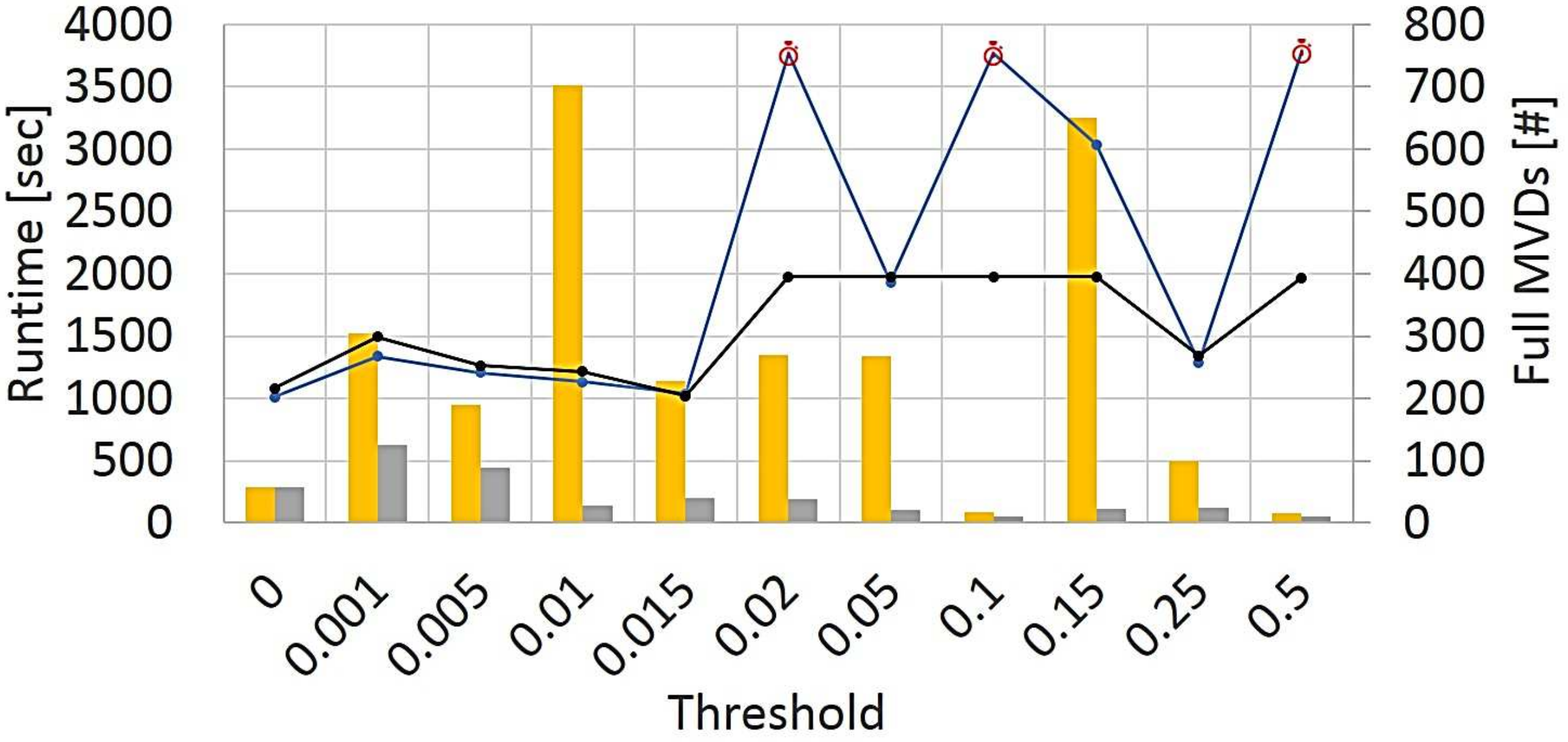}\label{chart:adultFullMVDs}}&
		\subfigure[\textsf{Bridges}]{\includegraphics[width=0.25\textwidth]{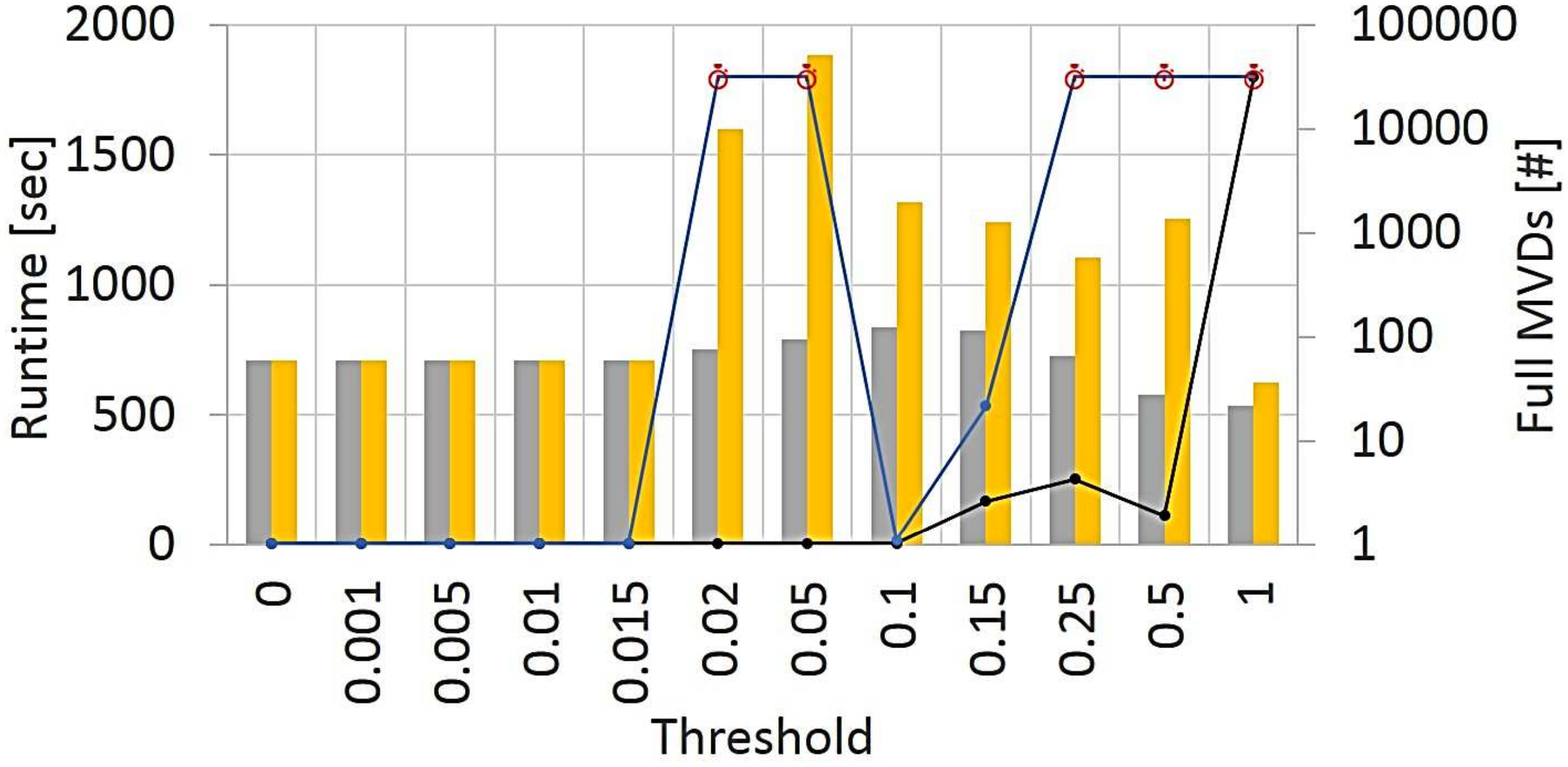}\label{chart:bridgesFullMVDs}}
		\\		
		%\subfloat[]{\includegraphics[width=0.2\textwidth]{pics/horse}}
		\includegraphics[width=0.13\textwidth]{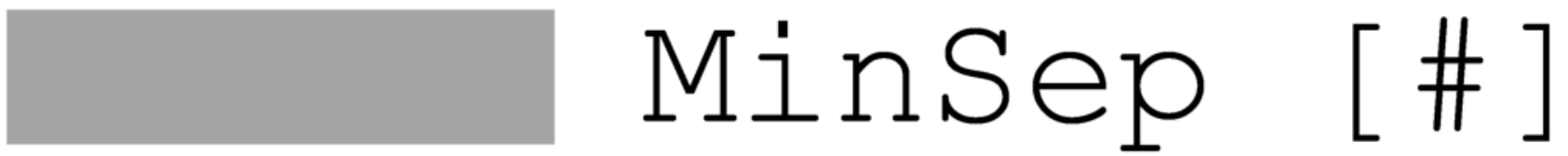}& 
		\includegraphics[width=0.13\textwidth]{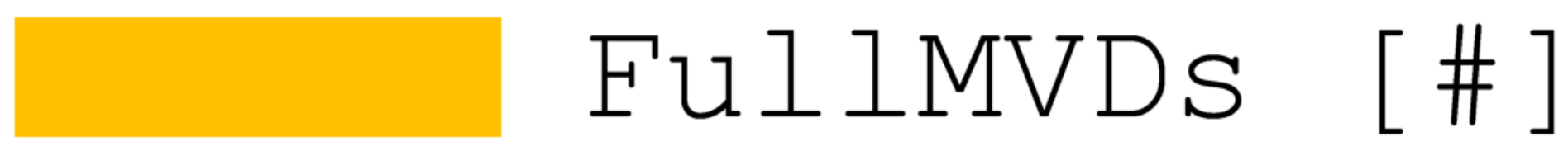} &
		\includegraphics[width=0.13\textwidth]{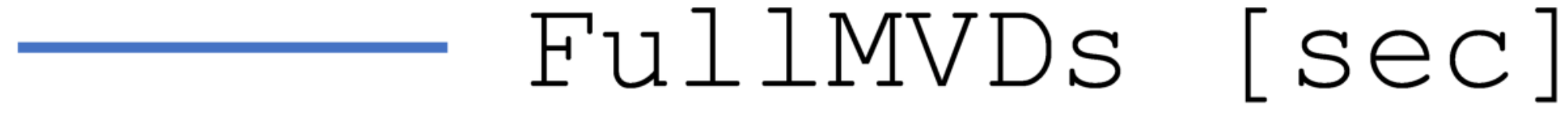}&
		\includegraphics[width=0.13\textwidth]{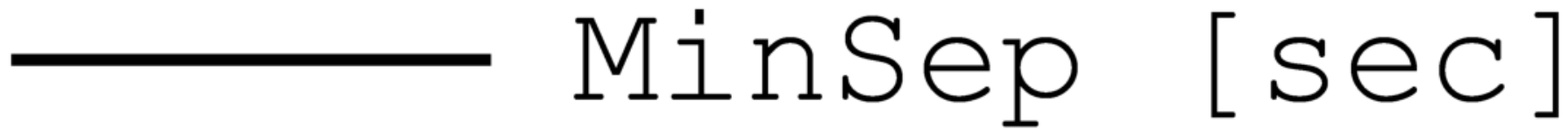}&
	\end{tabular}
	\caption{Full MVDs Experiments. Red stopwatch indicates that the algorithm stopped after 30 minutes.}
	\label{tab:FullMVDsExperiments}	
\end{figure*}

\subsection{From minimal separators to full MVDs}\label{sec:FullMVDs}
We now experiment with the transition from minimal separators to full MVDs. 
We recall that an MVD $\phi$ is full with regard to $\varepsilon$ if
$\relation \models_\varepsilon \phi$ and for all MVDs $\psi \succ
\phi$ that strictly refine $\phi$ then $\relation
\not\models_\varepsilon \psi$. 

In this set of experiments we have, for every pair of attributes $A,B \in \Omega$, the set $\minsep_{\varepsilon,A,B}(\relation)$ of minimal $AB$-separators that hold in $\relation$ w.r.t. $\varepsilon$, and we apply the algorithm for generating the set $\fullMVDs_{\varepsilon,A,B}$ by calling $\algname{getFullMVDs}$ (Fig.~\ref{alg:NaiveMineAllJDs}) with the pair $(A,B)$, and an unlimited number of MVDs to return (i.e., $K=\infty$). 
\footnote{We actually call the optimized version of this algorithm, $\algname{getFullMVDsOpt}$ described in the full version of this paper.} In particular, the runtimes presented here do not include the time taken to mine the minimal separators. The performance of this phase is analyzed in Section~\ref{sec:scalability} and Table~\ref{tab:evaluationDatasets}.
% $\fullMVDs(\relation)=\cup_{S\in \minsep(\relation)}\fullMVDs(S)$.

\eat{Let $A,B \in \Omega$ be a pair of attributes.
	For every minimal $AB$ separator $S \in \minsep(\relation)$, we compute the set of $\fullMVDs(S)$ by calling the method $\algname{getFullMVDs}$ (Figure~\ref{alg:NaiveMineAllJDs}) with parameters $S$, $\varepsilon$, the pair $(A,B)$, and an unbounded $K$.}
We conduct the experiment as follows. For every dataset we vary the threshold in the range $[0,0.5]$, and for every threshold execute the procedure $\algname{getFullMVDsOpt}$ for a total of $30$ minutes. The results are presented in Figure~\ref{tab:FullMVDsExperiments}. When the threshold is $\varepsilon=0$ then the number of full MVDs is identical to the number of minimal separators as expected by Lemma~\ref{lem:singleMaxMVDLemma}.
In practice, when the threshold is
$0$, our algorithm for mining all minimal separators also
discovers all full MVDs. 
As the threshold increases so does the difference between the number of minimal separators and the number of full MVDs. Overall, Algorithm $\algname{getFullMVDsOpt}$ for generating full MVDs is capable of reaching a rate of about 55 full MVDs per second for thresholds larger than $0.1$ (see Figures~\ref{tab:FullMVDsExperiments}(a),~\ref{tab:FullMVDsExperiments}(b), and~\ref{tab:FullMVDsExperiments}(d)).

\end{document}